\documentclass[12pt,letterpaper]{article}
\usepackage[letterpaper,margin=1.25in]{geometry}
\usepackage[T1]{fontenc}
\usepackage[utf8]{inputenc}
\usepackage{lmodern}
\usepackage{amssymb}
\usepackage{graphicx}
\usepackage{amsmath}
\usepackage{amsthm}
\usepackage{color}
\usepackage{hyperref}
\usepackage{float}

\definecolor{linkblue}{rgb}{0,0,0.63}
\definecolor{citered}{rgb}{0.63,0,0}
\hypersetup{
    colorlinks=true,
    linkcolor=linkblue,
    citecolor=citered,
    urlcolor=linkblue
}
\newtheorem{theorem}{Theorem}

\newtheorem{definition}[theorem]{Definition}

\newtheorem{lemma}[theorem]{Lemma}

\newtheorem{proposition}[theorem]{Proposition}

\theoremstyle{definition}
\newtheorem{remark}[theorem]{Remark}
\providecommand{\tbigcup}{\bigcup}

\providecommand{\FRAME}[6]{}

\begin{document}

\title{Uniqueness of Recursive Utility with Unbounded Consumption and
Non-Expected Utility\thanks{%
We are grateful to Pablo F. Beker, Jaden Yang Chen, Roberto Marfè, and Veronica Merlone for their helpful comments and suggestions.}}
\author{Luigi Montrucchio\thanks{%
ESOMAS Department and Collegio Carlo Alberto, Universit\`a di Torino, Corso
Unione Sovietica 218 Bis, 10134 Turin, Italy. Emails:
luigi.montrucchio@unito.it, lorenzomaria.stanca@unito.it} \and Lorenzo Stanca%
\footnotemark[2]}
\maketitle

\begin{abstract}
We derive sufficient conditions for the existence, non-existence, uniqueness
and global attractivity of stochastic recursive utilities. Our approach uses
abstract order-theoretic fixed-point arguments, allowing us to accommodate a
large class of Markov consumption-growth models with unbounded per-period
consumption and a broad range of recursive preferences, including
non-expected utility specifications. We illustrate the usefulness of the
framework through macro-finance applications to widely used
consumption-growth models under several preference specifications, including
Epstein--Zin preferences, multiple priors, smooth ambiguity, and
Hansen--Sargent robust-control preferences. In each application, our general
conditions reduce to a simple, economically interpretable parameter test
that shows how fundamental parameters influence existence and uniqueness.
While uniqueness of recursive utility can be difficult to guarantee in
unbounded environments, in a Markov setting with non-unit elasticity we
establish uniqueness within the class of recursive utilities having a
so-called separability property.
\end{abstract}

\noindent\textbf{Keywords:} Stochastic recursive utility, unbounded
consumption, non-expected utility, fixed points, macro-finance.

\noindent\textbf{JEL classification:} C61, C62, D81, E21, G12.

\section{Introduction}

Recursive utility plays an important role in macroeconomics and finance
because it provides a tractable way to separate intertemporal substitution
from risk aversion and, more generally, from attitudes toward uncertainty.
This flexibility has led to applications across a wide range of topics, from
household finance~\cite{CalvetCampbellGomesSodini2026} to models that
combine economic and climate risks~\cite{CaiLontzek2019}. Following the
seminal contributions of Weil~\cite{Weil1989,Weil1990}, recursive utility
has become a standard tool in macro-finance for incorporating ambiguity,
multiple priors, robust control, and other departures from expected utility;
see Backus, Routledge, and Zin~\cite{BackusRoutledgeZin2005} for a review.

This flexibility comes at a cost. Recursive utility is a fixed point of a
nonlinear stochastic operator, so neither existence nor uniqueness can be
taken for granted. Moreover, applications often feature non-expected utility
or ambiguity, which further complicates the analysis. Hence, before using a
recursive preference specification in applications, one must know whether
the recursion has a solution and whether that solution is unique. These
questions are not merely technical. If multiple solutions exist, the model
may generate ambiguous welfare rankings or indeterminate asset-pricing
implications.

Following earlier contributions, including Marinacci and Montrucchio~\cite%
{marmont}, one strand of the literature has relied on finding metrics to
recover contraction properties. These methods, however, typically require
restrictions that are strong for macroeconomic applications, such as
consumption being bounded at every date. Such restrictions are automatically
violated by standard consumption processes with normally distributed error
terms.

This paper aims to fill this gap by providing general conditions for the
existence, non-existence, uniqueness, and global attractivity of stochastic
recursive utilities with unbounded consumption. Our approach relies on
fixed-point methods in ordered spaces, especially the techniques of
Kantorovich~\cite{kanto} and Krasnosel'skii~\cite{Krasnoselskii}, as already
used by a few authors (see, for example, Christensen~\cite{Christensen2022}).

The key novelty is our uniqueness argument. A result by Marinacci and
Montrucchio~\cite[Theorem~1]{marmont1} allows us to establish uniqueness
using a method that is comparatively simple relative to existing approaches.
It is based on inspecting the behavior of the concave operator on the
so-called \emph{lower perimeter} of its domain. This approach is often
simpler than computing the spectral radius of a subgradient/supergradient of
the convex/concave operator, which is the alternative route to uniqueness
(see for instance \cite{Amann} and \cite{Christensen2022}).

Recursive utility consists of two main ingredients: a \emph{time aggregator}
and a \emph{certainty equivalent}. The time aggregator determines how
current consumption is combined with future continuation utility, and
therefore governs the elasticity of intertemporal substitution (EIS). The
certainty equivalent summarizes the agent's evaluation of uncertainty about
future continuation values, capturing risk attitudes and, more generally,
attitudes toward uncertainty. We focus on the two aggregators most commonly
used in applications, CES and Cobb--Douglas. Our assumptions on certainty
equivalents are deliberately minimal: monotonicity and subhomogeneity. These
conditions have natural behavioral interpretations. For instance,
subhomogeneity is related to increasing relative risk aversion (see \cite%
{marmont}).

Our sufficient conditions for existence have a simple economic
interpretation. For CES aggregators, they depend on an \emph{adjusted
consumption growth measure} that incorporates attitudes toward uncertainty,
discounting, and intertemporal substitutability. When the EIS is below one,
existence requires this adjusted measure to be uniformly bounded away from
zero; that is, consumption cannot grow too slowly.

In contrast, when the EIS is above one, existence requires the adjusted
measure to be uniformly bounded from above; that is, consumption cannot grow
too fast. In the unit-EIS case, a special case of the Cobb--Douglas
aggregator, the analogous condition is the integrability of consumption
adjusted for attitudes toward uncertainty. The non-existence results mirror
these conditions: recursive utility may fail to exist if adjusted growth is
uniformly too low or uniformly too high.

These existence results may yield more than one fixed point. The key
assumption that delivers uniqueness of recursive utility is order concavity
of the certainty equivalent, which has a natural economic interpretation in
terms of risk aversion or ambiguity aversion. Finally, we show that
uniqueness also implies global attractivity: starting from any process
comparable to consumption, repeated application of the recursive operator
converges, in the relevant sense, to the unique fixed point.

Thus far, our analysis has been deliberately abstract. We now specialize the
results to a standard Markov environment that covers many consumption growth
processes used in applications. The economy has a Markov state variable,
which represents an economically relevant variable such as the growth regime
or disaster risk. Consumption growth is allowed to depend on the current
state, the next state, and an independent shock.

In this Markov setting, when the certainty equivalent is also positively
homogeneous and stationary, the problem can be simplified by considering
recursive utilities that take a particularly simple form: current
consumption multiplied by a function of the current state. We call such
recursive utilities \emph{separable}. This class of recursive utilities is
useful because it replaces the original problem of solving for an entire
utility process with the simpler problem of solving for a single function of
the state. Notably, with non-unit EIS, our conditions imply that whenever a
separable recursive utility exists, it is unique. By contrast, at unit EIS
multiple separable recursive utilities may coexist, and for Epstein--Zin
preferences we characterize the set of separable recursive utilities.

We illustrate the usefulness of the framework through applications to widely
used consumption-growth models in macro-finance under several preference
specifications. We begin with the classic Mehra--Prescott regime switching
economy. In the parameter range considered by Weil~\cite{Weil1989}, we show
that our conditions for existence and uniqueness of recursive utility are
always satisfied. For certain parameters in this range, however, relatively
small changes in consumption growth rates are enough to obtain
non-existence: an increase of about $5\%$ in a high-EIS case and a reduction
of about $3\%$ in a low-EIS case.

The second application is a two-state hidden-regime model, in which
consumption growth depends on an unobserved regime. Models of this type are
standard in macro-finance; see, for instance, Lettau, Ludvigson, and Wachter~%
\cite{LettauLudvigsonWachter2008} and Ju and Miao~\cite{JuMiao2012}. Under
the Ju--Miao calibration, our conditions imply existence and uniqueness
under smooth ambiguity (non-unit EIS), Hansen--Sargent robust control (unit
EIS), and multiple priors (infinite ambiguity aversion). Notably, for the
smooth-ambiguity specification, uniqueness of recursive utility had not been
established. Our results close this gap by proving uniqueness under the
Ju--Miao calibration.

To illustrate the non-existence conditions, we consider a counterfactual
with a less persistent low-growth regime. Under smooth ambiguity, an
increase of about $8.3\%$ in gross growth in both regimes is sufficient for
non-existence. Under multiple priors, by contrast, a much larger increase of
about $15.5\%$ in both growth rates is required to satisfy the sufficient
condition for non-existence.

The third application is a rare-disaster model in which consumption growth
is exposed to persistent disaster risk, as in Wachter~\cite{Wachter2013} and
Bidder and Smith~\cite{BidderSmith2018}. In this environment, the conditions
guaranteeing uniqueness of recursive utility may fail even though our
existence conditions are satisfied. In particular, with unit EIS the
rare-disaster model can admit multiple separable recursive utilities.
Surprisingly, with non-unit EIS our results imply uniqueness within the
class of separable recursive utilities, even though they do not rule out
other non-separable recursive utilities. 

Finally, we apply our characterization of separable recursive utilities at
unit EIS to the constant-volatility consumption specification of Bansal and
Yaron~\cite{BansalYaron2004}. This yields two distinct utilities belonging
to disjoint cones (Section~\ref{sec:long-run-risk-multiplicity}).

The paper is organized as follows. Section~\ref{sect. intro} introduces the
mathematical framework. Section~\ref{sec:exandunique} establishes the main
existence, uniqueness, global-attractivity, and non-existence results.
Section~\ref{sec:kappa-model} applies the general framework to standard
macro-finance models. Section \ref{sec:conc} concludes. Appendix~\ref{append}
contains the proofs and auxiliary theoretical results, while the
Supplemental Appendix (Appendix \ref{suppapend}) provides additional
derivations and calculations for the applications. 

\paragraph{Related literature.}

The literature on recursive preferences with unbounded consumption remains
limited, particularly with respect to uniqueness, which is often established
only in special cases. The contributions most closely related to ours are
Borovi\v{c}ka and Stachurski~\cite{BorovickaStachurski2020}, Hansen and
Scheinkman~\cite{HansenScheinkman2009,HansenScheinkman2012}, Christensen~%
\cite{Christensen2022}, and Pohl, Schmedders, and Wilms~\cite%
{PohlSchmeddersWilms2024}. We return in Section~\ref{section:EZ} to the more
technical connection between their approaches and the Epstein--Zin special
case of our framework.

Borovi\v{c}ka and Stachurski~\cite{BorovickaStachurski2020} obtain necessary
and sufficient conditions for existence and uniqueness in the Epstein--Zin
model under the assumption that the state space is compact. In contrast, our
conditions apply to a broader class of certainty equivalents and time
aggregators, including non-expected-utility specifications. Moreover, in
certain cases, we can allow for an unbounded state space.

Hansen and Scheinkman~\cite{HansenScheinkman2009,HansenScheinkman2012}
develop a spectral approach to Epstein--Zin recursive utility, establishing
existence and, under additional restrictions, uniqueness. Our approach
accommodates a broader class of certainty equivalents.

Christensen~\cite{Christensen2022} studies recursive utility without
boundedness assumptions using exponential-Orlicz spaces, which are related
to thin-tailed conditions. His framework allows noncompact state spaces and
unbounded consumption, and provides existence and uniqueness results for
several important recursive specifications with unit EIS. By contrast, our
approach yields existence, uniqueness, and global attractivity for a broader
class of preferences.

Pohl, Schmedders, and Wilms~\cite{PohlSchmeddersWilms2024} derive existence
and non-existence conditions for several recursive preference
specifications, including Epstein--Zin and smooth ambiguity. Our existence
conditions do not rely on Jensen's inequality. For Epstein--Zin preferences,
they neither imply nor are implied by the sufficient conditions obtained
from their explicit Jensen and Minkowski bounds.

A related dynamic-programming literature studies Bellman equations with
recursive utility. Le Van and Vailakis~\cite{LeVanVailakis2005} analyze
recursive utility in optimal growth models with unbounded returns, and
Martins-da-Rocha and Vailakis~\cite{MartinsDaRochaVailakis2013} use
local-contraction methods. Bloise and Vailakis~\cite{BloiseVailakis2018},
Balbus~\cite{Balbus2020}, Ren and Stachurski~\cite{RenStachurski2021}, and
Bloise, Le Van, and Vailakis~\cite{BloiseLeVanVailakis2024} use monotone,
concave, or Tarski-type methods in related recursive-utility and
dynamic-programming problems. In particular, Ren and Stachurski~\cite%
{RenStachurski2021} use fixed-point arguments due to Du \cite{Du1990} (see
in particular his Theorem 3.1). This fixed-point argument requires the
positive cone of the relevant space to have nonempty interior. In our $L^{1}$
setting, the positive cone has empty interior.

Finally, observe that solving forward-looking equations with unbounded
variables of the type studied here is also relevant for general equilibrium
models, as they are required to characterize equilibrium allocations and
prices (e.g., see Beker and Chen \cite{BekerChen2026}).

\section{Mathematical preliminaries\label{sect. intro}}

\paragraph{Stochastic processes.}

Let $(\Omega ,\mathbf{\Sigma },\mathbb{P})$ be a probability space equipped
with an increasing filtration $\mathbf{\Sigma }=(\Sigma _{t})_{t\in T}$,
where $T=\left\{ 0,1,2,...\right\} $. A process $X=(X_{t})_{t\in T}$ is
adapted if $X_{t}$ is $\Sigma _{t}$-measurable for all $t\in T$. Denote by $%
L $ the set of all finitely valued adapted processes, $L_{+}\subset L$ the
nonnegative processes and $L_{++}$ the strictly positive ones.

It is conceptually useful to regard a discrete-time adapted process as a
measurable function defined on a suitable $\sigma $-algebra (see also
Appendix A of \cite{marmont}). It is enough to introduce the $\sigma $%
-algebra $\mathcal{F}$ in $\Omega \times T$, defined as $A\in \mathcal{F}$
if and only if the $t$-section $A_{t}\in \Sigma _{t}$ for all $t\geq 0.$

In what follows, some subspaces of $L_{+}$ will be relevant. Given a scalar $%
0<\delta <1$, consider the discrete measure $c_{\delta }\left( \left\{
t\right\} \right) =\delta ^{t}$ for $t\in T$ and the corresponding $L^{1}$%
-space $L^{1}\left( \Omega \times T,\mathcal{F},\text{ }\mathbb{P}\otimes
c_{\delta }\right) \subset L$, denoted by $L^{1}\left( \delta \right) $ for
simplicity. The norm in $L^{1}\left( \delta \right) $ is given by 
\begin{equation*}
\left\Vert X\right\Vert =\int \left\vert X\left( \omega ,t\right)
\right\vert d\left( \mathbb{P}\otimes c_{\delta }\right)
\end{equation*}%
where $X$ is an equivalence class of measurable functions rather than a
single function. We use the same symbol to denote the $L^{1}$-norms of the
functions $X_{t},$ if no confusion arises. Namely, 
\begin{equation*}
\left\Vert X_{t}\right\Vert =\int \left\vert X_{t}\right\vert d\mathbb{P} =%
\mathbb{E}\left[\left\vert X_{t}\right\vert\right] \text{ \ for each }t\in T%
\text{.}
\end{equation*}%
By Fubini's theorem we have the relation\footnote{%
This $L^{1}\left( \delta \right) $-norm is the $L^{1}$-counterpart of
weighted supnorms adopted by Boyd \cite{Boyd1990} and \cite{marmont}.} 
\begin{equation*}
\left\Vert X\right\Vert =\sum_{t\in T}\left\Vert X_{t}\right\Vert \delta
^{t}.
\end{equation*}%
The order relation $X\leq Y$ in $L$, as well as in $L^{1}\left( \delta
\right) $, means $X_{t}\leq Y_{t}$, a.s., for all $t\in T$. The cone $%
L_{+}^{1}\left( \delta \right) $ consists of the elements $X\in L^{1}\left(
\delta \right) $ satisfying $X\geq 0$. The cone $L_{++}\left( \delta \right) 
$ consists of the elements $X>0,$ namely $X_{t}>0$, a.s., for all $t.$ We
will also use the stronger order relation $Y\gg X$, meaning that $Y\geq
X+\varepsilon $ for some $\varepsilon >0$.

The space $L^{1}\left( \delta \right) $ equipped with the order $\leq $ is a
Banach lattice (i.e., a Riesz space endowed with a complete lattice norm,
see for instance \cite{Aliprantis,Ok2011}). It has the property of being
Dedekind $\sigma $-complete\footnote{%
Indeed, it is Dedekind complete too (see for instance \cite[Proposition
II.4.1]{neveu}), but the completeness for countable sets suffices for our
purposes, as will become apparent later on.}. This entails the
so-called \emph{monotone convergence property}. In symbols, if $%
X_{n}\uparrow $ $\leq X$ for some $X\in L^{1}\left( \delta \right) $, then
there is an element $X^{\ast }\in L^{1}\left( \delta \right) $ such that $%
X_{n}\uparrow X^{\ast }=\sup_{n}X_{n}$ and the convergence holds in the
space $L^{1}\left( \delta \right) $. This last property is just a
consequence of the monotone convergence theorem. In the Russian literature
ordered vector spaces having these properties are said to have a \emph{%
regular positive cone} (see \cite[Section 1.5]{Krasnoselskii}).

Given an element $X\in L_{+}$, $\mathcal{K}\left( X\right) $ denotes the
convex cone 
\begin{equation*}
\mathcal{K}\left( X\right) =\tbigcup_{0<\lambda <\mu <\infty }\left[ \lambda
X,\mu X\right] .
\end{equation*}%
An element $Y$ belongs to $\mathcal{K}\left( X\right) $ if and only if $%
\lambda X\leq Y\leq \mu X$ holds for some $\lambda ,\mu >0.$ For future
reference, notice that in this case there exist a least $\overline{\mu }$
and a greatest $\overline{\lambda }$ for which one has $\overline{\lambda }%
X\leq Y\leq \overline{\mu }X$.\footnote{%
This property is true in any Archimedean ordered vector space.}

To gain economic intuition for the condition $Y\in\mathcal{K}(X)$, observe
that it implies that $Y$ and $X$ have comparable long-run growth rates: more
precisely, these rates do not diverge from one another on average. The cone $%
\mathcal{K}\left( C\right) $ of consumption processes comparable with $C$
will play an important role throughout the paper. Indeed, it will be the
natural domain for the recursive preferences associated with the consumption
process $C$. Let us add some simple properties of it. If $C\in L_{++\text{ }%
} $, then $\mathcal{K}\left( C\right) \subset L_{++\text{ }}$. If $C\in L_{++%
\text{ }}^{1}\left( \delta \right) $, then $\mathcal{K}\left( C\right)
\subset L_{++\text{ }}^{1}\left( \delta \right) $.

\paragraph{Aggregators.}

Consider the CES aggregator $W:\mathbb{R}_{++}^{2}\rightarrow \mathbb{R}_{+}$
given by 
\begin{equation*}
W(x,y)=\big((1-\beta )x^{r}+\beta \,y^{r}\big)^{1/r},\text{ }x,y>0
\end{equation*}%
where $0<\beta<1$ and $r\in(-\infty,0)\cup(0,1)$. The aggregator $W$ is
monotone, positively homogeneous and concave. Clearly, for $r\in \left(
0,1\right) $ it can be extended continuously on $x,y\geq 0$. Note that the
economic counterpart of $r$ is the \emph{elasticity of intertemporal
substitution}, defined by 
\begin{equation*}
\mathrm{EIS}:=\frac{1}{1-r}.
\end{equation*}%
Hence, $r<0\Longleftrightarrow 0<\mathrm{EIS}<1,$ while $0<r<1%
\Longleftrightarrow \mathrm{EIS}>1.$ Moreover, observe that the limit case $%
r\rightarrow 0$ yields the Cobb--Douglas aggregator $W(x,y)=x^{1-\beta
}y^{\beta }$, corresponding to unit elasticity of intertemporal
substitution, $\mathrm{EIS}=1$.

\paragraph{Certainty equivalents.}

The certainty equivalent operator $\mathcal{M}$, in general, maps a portion
of $L_{+}$ into $L_{+}$. The notation $\mathcal{M}_{t}\left( X\right) $
denotes the $t$-th component of $\mathcal{M}\left( X\right) .$ At present,
the certainty equivalent will not be entirely specified. We list below some
classes of certainty operators that will be used below.

\bigskip

i) $\mathcal{M}$ is \emph{monotone}, i.e., $X\leq Y\Longrightarrow \mathcal{M%
}X\leq \mathcal{M}Y$

ii) $\mathcal{M}$ is $\sigma $\emph{-order continuous}, i.e., $X^{n}\uparrow
X$ $\Longrightarrow \mathcal{M}(X^{n})\uparrow \mathcal{M}(X)$, as well as $%
X^{n}\downarrow X\Longrightarrow \mathcal{M}(X^{n})\downarrow \mathcal{M}(X)$

iii) $\mathcal{M}$ is \emph{subhomogeneous}, i.e., $\mathcal{M}\left( \alpha
X\right) \geq \alpha \mathcal{M}\left( X\right) $ for $\alpha \in \left(
0,1\right) \Longleftrightarrow \mathcal{M}\left( \lambda X\right) \leq
\lambda \mathcal{M}\left( X\right) $ for $\lambda >1$.

iv) $\mathcal{M}$ is \emph{order concave}, i.e., 
\begin{equation*}
\mathcal{M}\left( \alpha X_{1}+\left( 1-\alpha \right) X_{2}\right) \geq
\alpha \mathcal{M}\left( X_{1}\right) +(1-\alpha )\mathcal{M}\left(
X_{2}\right)
\end{equation*}%
holds for every $\alpha \in \left[ 0,1\right] $ and $0<X_{1}\leq X_{2}$.
Order convexity is defined analogously. Note that (ii) implies (i).

\begin{remark}
A broad class of subhomogeneous operators is given by the so-called \emph{%
quasi-arithmetic means}, i.e., $\mathcal{M}_{t}\left( X\right) =\phi
^{-1}\left( \mathbb{E}_{t}\left( \phi \circ X_{t+1}\right) \right) $ where $%
\phi $ is a twice-differentiable and strictly monotone function that
exhibits increasing relative risk aversion (IRRA); see \cite[Proposition 7]%
{marmont}. In the following, a central role will be played by the
Kreps--Porteus case $\phi (t)=t^{s}$, i.e., $\mathcal{M}_{t}\left( X\right)
=\left( \mathbb{E}_{t}\left( X_{t+1}^{s}\right) \right) ^{1/s}$ where $1-s$
is the coefficient of relative risk aversion, 
\begin{equation*}
\mathrm{RRA}:=1-s.
\end{equation*}%
We refer also to the discussion in Section \ref{section:EZ} and Proposition %
\ref{prop:conditional-power-mean-concavity}. Its Markovian counterpart is
the Epstein--Zin certainty equivalent treated in subsection \ref%
{subsec:kappa-general} (see (\ref{eq:EZce})).
\end{remark}

\paragraph{Recursive utility.}

Recursive utilities will be identified with the fixed points of an operator
acting on the space $L^{1}\left( \delta \right) $. Given a consumption
process $C\in L_{++}$, the operator $\mathbf{T}$ under study is\footnote{%
Indeed $\mathbf{T}$\textbf{\ }depends on the process $C$, and so one should
write more correctly $\mathbf{T}_{C}$. Analogously a fixed point $U$ of $%
\mathbf{T}_{C}$ should be labelled as $U_{C}$.} 
\begin{equation}
\mathbf{T}X:=W\!\big(C,\,\mathcal{M}(X)\big).  \label{eq:opT}
\end{equation}%
A recursive utility is a fixed point $U$ of $\mathbf{T}$, i.e., 
\begin{equation*}
U=\mathbf{T}(U)\quad \Longleftrightarrow \quad U_{t}=W\left( C_{t},\,%
\mathcal{M}_{t}(U)\right) \ 
\end{equation*}%
for all $t\in T.$

In the following we will restrict our analysis to the most relevant
certainty operators for which $\mathcal{M}_{t}(U)\equiv \mathcal{M}%
_{t}(U_{t+1}).$ Written explicitly, the recursive utilities $U_{t}$ solve
the forward-looking difference equation 
\begin{equation}
U_{t}=\left[ (1-\beta )C_{t}^{r}+\beta \left( \mathcal{M}_{t}(U_{t+1})%
\right) ^{r}\right] ^{1/r}\text{ for }t\in T  \label{eq:pref}
\end{equation}%
Here $C_{t}$ is the current consumption and the stream $\left( U_{t}\right)
_{t\geq 0}$ associated with the consumption process $C$ is known in the
macroeconomics literature as the \emph{continuation value }(see for instance
Hansen et al. \cite{HansenHeatonLeeRoussanov2007}). In this setting, $\beta $
is the \emph{subjective discount factor}, while $r=1-%
\mathrm{EIS}^{-1}$ is the parameter associated with the elasticity of
intertemporal substitution.

We note that the operator $\mathbf{T}$ is subhomogeneous, provided $\mathcal{%
M}$ is. Actually, since our aggregator function $W$ is monotone and
positively homogeneous, we have, for $\alpha \in \left( 0,1\right) $ 
\begin{equation*}
\mathbf{T}\left( \alpha X\right) =W\left( C,\mathcal{M}\left( \alpha
X\right) \right) \geq W\left( \alpha C,\alpha \mathcal{M}\left( X\right)
\right) =\alpha W\left( C,\mathcal{M}\left( X\right) \right) =\alpha \mathbf{%
T}\left( X\right)
\end{equation*}%
Moreover, as $W$ is continuous, if $\mathcal{M}$ is $\sigma $-order
continuous then $\mathbf{T}$ enjoys the same property.

Let $\mathcal{M}$ be monotone and subhomogeneous. If $\mathcal{M}\left(
C\right) \in L_{+}$, then $\mathcal{K}\left( C\right) \subset L_{+}$.
Indeed, if $X\in \mathcal{K}\left( C\right) $, then $X\leq \lambda C$ for
some $\lambda >1$. Hence, under our hypotheses 
\begin{equation}
\mathcal{M}(X)\leq \mathcal{M}\left( \lambda C\right) \leq \lambda \mathcal{M%
}\left( C\right)  \label{eq:HHH}
\end{equation}%
and so $\mathcal{M}(X)\in L_{+}$.

\section{Existence and uniqueness}

\label{sec:exandunique} In this section we study in detail the recursive
preference (\ref{eq:pref}), namely the fixed points of the operator $\mathbf{%
T}$ of equation (\ref{eq:opT}). Since the sign of $r$ determines the
relevant existence and uniqueness conditions, it is convenient to
distinguish the low-EIS case $r<0$ ($\mathrm{EIS}<1$) from the high-EIS case 
$0<r<1$ ($\mathrm{EIS}>1$). We then consider the Cobb--Douglas limit $r=0$ ($%
\mathrm{EIS}=1$), before establishing global attractivity and the related
non-existence results.

\subsection{Low EIS: $r<0$}

Since the function $\varphi \left( t\right) =\left[ (1-\beta )x^{r}+t\right]
^{1/r}$ is decreasing, %
\begin{equation*}
(1-\beta )^{1/r}x=\varphi \left( 0\right) \geq \left[ (1-\beta )x^{r}+\beta
y^{r}\right] ^{1/r}=W\left( x,y\right)
\end{equation*}%
and so we have the upper bound%
\begin{equation}
W\left( x,y\right) \leq (1-\beta )^{1/r}x  \label{eq:LLL}
\end{equation}

To get results on the fixed points of $\mathbf{T}$ we impose restrictions on
the consumption processes under analysis.

\begin{lemma}
\label{lemRT}Let $\mathcal{M}$ be monotone and subhomogeneous and $C\in
L_{++}$. Suppose that the following two assumptions hold:\newline
a) $\mathcal{M}\left( C\right) \in L_{+};$\newline
b) 
\begin{equation}
\frac{\mathcal{M}\left( C\right) }{C}\gg \beta ^{\frac{1}{\left\vert
r\right\vert }}\text{.}  \label{eq:BBB}
\end{equation}%
Then 
\begin{equation}
\frac{\mathbf{T}\left( \varepsilon C\right) }{\varepsilon C}\gg 1
\label{eq:HAZ}
\end{equation}%
holds for all $\varepsilon >0$ sufficiently small.
\end{lemma}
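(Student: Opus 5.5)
We plan a direct computation with the CES formula, using subhomogeneity to pull $\varepsilon$ out of the certainty equivalent. The case $r<0$ is the relevant one (this subsection), so write $r=-|r|$. For $\varepsilon\in(0,1)$, subhomogeneity gives $\mathcal{M}(\varepsilon C)\geq \varepsilon\mathcal{M}(C)$. Since $y\mapsto W(x,y)$ is monotone, we get
\begin{equation*}
\mathbf{T}(\varepsilon C)\geq W\big(C,\varepsilon \mathcal{M}(C)\big)
=\Big[(1-\beta)C^{r}+\beta\varepsilon^{r}\mathcal{M}(C)^{r}\Big]^{1/r}.
\end{equation*}
Assumption a) guarantees that $\mathcal{M}(C)$ is finite. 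Moreover, (\ref{eq:BBB}) forces $\mathcal{M}(C)>0$, so the expression is well defined. Dividing by $\varepsilon C$ and using homogeneity, we obtain
\begin{equation*}
\frac{\mathbf{T}(\varepsilon C)}{\varepsilon C}\geq \Big[(1-\beta)\varepsilon^{|r|}+\beta Z^{r}\Big]^{1/r},\qquad Z:=\frac{\mathcal{M}(C)}{C}.
\end{equation*}
Here we used $(\varepsilon C)^{-r}C^{r}=\varepsilon^{|r|}$ and $\varepsilon^{-r}\varepsilon^{r}=1$ inside the bracket, after factoring $(\varepsilon C)^{r}$.

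Next we translate (\ref{eq:BBB}) into a bound on $Z^{r}$. By (\ref{eq:BBB}), there is $\eta>0$ with $Z\geq \beta^{1/|r|}+\eta$ for all $t$, a.s. Since $r<0$, the map $z\mapsto z^{r}=z^{-|r|}$ is decreasing. Hence
\begin{equation*}
\beta Z^{r}\leq \beta\big(\beta^{1/|r|}+\eta\big)^{-|r|}=:\theta<1,
\end{equation*}
where $\theta<1$ because $\big(\beta^{1/|r|}+\eta\big)^{|r|}>\beta$. The key point is that $\theta$ is a uniform constant, independent of $(\omega,t)$.

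Now choose $\varepsilon$ small so that $(1-\beta)\varepsilon^{|r|}+\theta\leq \theta':=(1+\theta)/2<1$. Since $s\mapsto s^{1/r}$ is decreasing on $(0,\infty)$, we get
\begin{equation*}
\frac{\mathbf{T}(\varepsilon C)}{\varepsilon C}\geq (\theta')^{1/r}=(\theta')^{-1/|r|}>1.
\end{equation*}
Since $(\theta')^{-1/|r|}-1$ is a positive constant, the relation $\gg 1$ follows. The same bound holds for every smaller $\varepsilon$, because $\varepsilon^{|r|}$ decreases with $\varepsilon$.

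The only delicate step is the uniformity. We must make sure that the strict margin in (\ref{eq:BBB}) is converted into the uniform constant $\theta<1$ through the monotonicity of the power maps, rather than into a pointwise bound. This is why the hypothesis is stated with $\gg$ instead of $>$. The rest is bookkeeping of signs, since $r<0$ reverses inequalities twice.
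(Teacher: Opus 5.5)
Your proof is correct and follows essentially the paper's route. Subhomogeneity and homogeneity of $W$ reduce the claim to $W\bigl(1/\varepsilon,\mathcal{M}(C)/C\bigr)\gg 1$, and the uniform margin $\eta$ in (b) then handles small $\varepsilon$. You bound $\beta Z^{r}$ by a constant $\theta<1$ and absorb $(1-\beta)\varepsilon^{|r|}$, whereas the paper rewrites the target as a threshold on $Z$ that decreases to $\beta^{1/|r|}$; this is the same computation. The one detail the paper makes explicit that you skip is that $\mathbf{T}(\varepsilon C)$ is well defined, which follows from monotonicity: $\mathcal{M}(\varepsilon C)\le \mathcal{M}(C)\in L_{+}$.
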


Observe that (\ref{eq:HAZ}) implies that $\mathbf{T}\left( \varepsilon
C\right) \geq \varepsilon C$. Moreover (\ref{eq:LLL}) implies that $\mathbf{T%
}\left( X\right) \leq (1-\beta) ^{1/r}C$ whenever $\mathbf{T}\left( X\right) 
$ is well defined. Therefore, under the hypothesis of this Lemma, we deduce
that $\mathbf{T}$ maps the interval $\left[ \varepsilon C,(1-\beta) ^{1/r}C%
\right] $ into itself for all sufficiently small $\varepsilon>0$.

This is the main ingredient to get the next result on the existence of fixed
points. Its proof uses an argument based on Kantorovich's fixed-point
theorem \cite[p. 68]{kanto}; see also \cite[Th. 4.1(b)]{Krasnoselskii} and 
\cite[Prop. 3.1 (i)]{Christensen2022}.

\begin{proposition}
\label{prop:exist-neg}Let $r<0$. Assume that (i)-(ii)-(iii) hold for the
certainty operator $\mathcal{M}$. Let $C\in L_{++}^{1}\left( \delta \right) $
for some $\delta >0$ and $\mathcal{M}\left( C\right) \in L_{+}$. If 
\begin{equation}
\frac{\mathcal{M}\left( C\right) }{C}\gg \beta ^{\frac{1}{\left\vert
r\right\vert }},  \label{eq:JJP}
\end{equation}%
then $\mathbf{T}:\mathcal{K}\left( C\right) \rightarrow \mathcal{K}\left(
C\right) $. More precisely, for every $X\in \mathcal{K}\left( C\right) $
there exists an $\varepsilon >0$ such that 
\begin{equation}
\varepsilon C\leq \mathbf{T}\left( X\right) \leq (1-\beta) ^{1/r}C
\label{eq:YUP}
\end{equation}%
Moreover, the set of the fixed points of $\mathbf{T}$ in $\mathcal{K}\left(
C\right) $ is not empty and contains the greatest fixed point.
\end{proposition}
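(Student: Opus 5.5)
The plan has two parts. First show that $\mathbf{T}$ maps $\mathcal{K}(C)$ into itself, with the explicit bounds (\ref{eq:YUP}). Then run a decreasing Kantorovich iteration from the top of the interval $[\varepsilon C,(1-\beta)^{1/r}C]$ to produce the greatest fixed point.

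\emph{Step 1: invariance and (\ref{eq:YUP}).} Fix $X\in\mathcal{K}(C)$, so $\lambda C\le X\le \mu C$ with $0<\lambda<1<\mu$; shrinking $\lambda$ is harmless. By (\ref{eq:HHH}), $\mathcal{M}(X)\in L_+$, so $\mathbf{T}(X)$ is well defined. The upper bound $\mathbf{T}(X)\le(1-\beta)^{1/r}C$ is (\ref{eq:LLL}). For the lower bound, monotonicity of $\mathcal{M}$ and $W$ gives $\mathbf{T}(X)\ge \mathbf{T}(\lambda C)$. By Lemma \ref{lemRT}, $\mathbf{T}(\varepsilon C)\ge \varepsilon C$ for all small $\varepsilon$, so pick $\varepsilon\le\lambda$ with this property. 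Subhomogeneity of $\mathbf{T}$ (shown after (\ref{eq:opT})), applied with $\alpha=\varepsilon/\lambda\le 1$, gives $\mathbf{T}(\lambda C)\ge \mathbf{T}(\varepsilon C)\ge\varepsilon C$. Actually monotonicity alone already does this, since $\lambda C\ge\varepsilon C$. Hence (\ref{eq:YUP}) holds, and $\mathbf{T}(X)\in\mathcal{K}(C)$.

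\emph{Step 2: greatest fixed point.} Set $X^0=(1-\beta)^{1/r}C$, which lies in $L^1_{++}(\delta)$ because $C$ does, and define $X^{n+1}=\mathbf{T}(X^n)$. By (\ref{eq:LLL}), $X^1\le X^0$, and monotonicity then gives $X^n\downarrow$. Fix $\varepsilon$ as in Lemma \ref{lemRT}. Induction gives $X^n\ge\varepsilon C$: indeed $X^0\ge \varepsilon C$ for small $\varepsilon$, and if $X^n\ge\varepsilon C$ then $X^{n+1}\ge \mathbf{T}(\varepsilon C)\ge\varepsilon C$. So $(X^n)$ is decreasing and bounded below in $L^1(\delta)$. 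By the monotone convergence property (regular cone, applied to $X^0-X^n\uparrow$), $X^n\downarrow U$ for some $U\in[\varepsilon C,X^0]\subset\mathcal{K}(C)$. Condition (ii), which also covers $\mathbf{T}$, gives $\mathbf{T}(X^n)\downarrow \mathbf{T}(U)$. Since $\mathbf{T}(X^n)=X^{n+1}\downarrow U$, we get $\mathbf{T}(U)=U$.

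For maximality, let $V\in\mathcal{K}(C)$ be any fixed point. Then $V=\mathbf{T}(V)\le (1-\beta)^{1/r}C=X^0$ by (\ref{eq:LLL}), and induction gives $V=\mathbf{T}^n V\le X^n$ for all $n$, hence $V\le U$.

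\emph{Main obstacle.} The delicate point is the limit passage in Step 2. One must check that $\sigma$-order continuity of $\mathcal{M}$ applies to this sequence. The sequence lies in $\mathcal{K}(C)$, where $\mathcal{M}$ is finite by (\ref{eq:HHH}), and its order limit coincides with its $L^1(\delta)$ limit, so the a.s.\ componentwise infimum $U_t=\inf_n X^n_t$ is the right object. Continuity of $W$ on $\mathbb{R}^2_{++}$ then transfers the downward continuity of $\mathcal{M}$ to $\mathbf{T}$. This uses the fact that $\mathcal{M}(X^n)\ge\mathcal{M}(\varepsilon C)>0$, which follows from Lemma \ref{lemRT} and keeps us away from the boundary where $W$ with $r<0$ degenerates.
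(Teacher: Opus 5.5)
Your proposal is correct and follows the paper's proof essentially step for step. You get invariance of $\mathcal{K}(C)$ and the bounds (\ref{eq:YUP}) from (\ref{eq:LLL}), Lemma \ref{lemRT} and monotonicity. You then run the decreasing iteration from $(1-\beta)^{1/r}C$, bounded below by $\varepsilon C$, and conclude via Dedekind $\sigma$-completeness, $\sigma$-order continuity of $\mathbf{T}$, and the comparison $V=\mathbf{T}^nV\le\mathbf{T}^n\bigl((1-\beta)^{1/r}C\bigr)$ for maximality.
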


A few comments are in order. The condition $C\in L_{++}^{1}\left( \delta
\right) $ is a very mild restriction on the growth of the paths of
consumption as $\delta $ approaches $0.$ Though the value $\delta $ does not
appear in condition (\ref{eq:JJP}), its positivity is crucial to guarantee
the Dedekind $\sigma $-completeness of the space.

As will be seen in the proof, a consequence of the fact that $\mathbf{T}$
maps $\left[ \varepsilon C,(1-\beta )^{1/r}C\right] $ into itself is that $%
\mathbf{T}$  also has a least fixed point $U_{\ast }^{\varepsilon }$ in this
interval; but it depends on the scalar $\varepsilon $ and so the existence
of a least fixed point in $\mathcal{K}\left( C\right) $ is not assured.

Condition~\eqref{eq:BBB} has a simple economic interpretation. It is
equivalent to 
\begin{equation}
\frac{1}{\beta }\left( \frac{\mathcal{M}(C)}{C}\right) ^{|r|}\gg 1.
\label{eq:GLA}
\end{equation}%
The ratio $\mathcal{M}(C)/C$ is the certainty-equivalent value of future
consumption relative to current consumption. Hence the left-hand side is an
adjusted measure of consumption growth that incorporates attitudes toward
uncertainty, discounting, and intertemporal substitutability. When the EIS
is below one, existence requires this adjusted growth measure to be
uniformly large enough: consumption cannot grow too slowly.

\subsubsection{Uniqueness}

We need the concept of lower perimeter $\partial _{\diamond }A$, for a
subset $A$ of an ordered vector space, introduced by Marinacci and
Montrucchio \cite{marmont1}. We refer to that paper for the general
definition and focus here only on the lower perimeter $\partial _{\diamond }%
\left[ \varepsilon C,(1-\beta )^{1/r}C\right] $ of the interval $\left[
\varepsilon C,(1-\beta )^{1/r}C\right] $ of Proposition \ref{prop:exist-neg}.

The elements of such a lower perimeter are\footnote{%
Recall that all our random variables are indeed equivalence classes. Thus, $%
\inf_{\omega }Y\left( \omega \right) $ means $\inf \left\{ x:\mathbb{P}%
\left( Y<x\right) >0\right\} $.}%
\begin{equation}
\partial _{\diamond }\left[ \varepsilon C,(1-\beta )^{1/r}C\right] =\left\{
H\in \left[ \varepsilon C,(1-\beta )^{1/r}C\right] :\inf \frac{H}{C}%
=\varepsilon \right\}  \label{eq:upper}
\end{equation}

Actually, by Proposition 4 of \cite{marmont1}, the process $H\in \partial
_{\diamond }\left[ \varepsilon C,(1-\beta )^{1/r}C\right] $ if and only if 
\begin{eqnarray*}
\inf \frac{H-\varepsilon C}{((1-\beta )^{1/r}-\varepsilon )C}
&=&0\Longleftrightarrow \inf \frac{H}{((1-\beta )^{1/r}-\varepsilon )C}=%
\frac{\varepsilon }{((1-\beta )^{1/r}-\varepsilon )} \\
&\Longleftrightarrow &\inf \frac{H}{C}=\varepsilon
\end{eqnarray*}

We are in a position to establish sufficient conditions for uniqueness of
the fixed points of $\mathbf{T}$. Observe that order concavity of $\mathcal{M%
}$ implies that $\mathbf{T}$\textbf{\ }is order concave as well, as it is
easy to check.

\begin{proposition}
\label{prop:unique1}Under the assumptions of Proposition~\ref{prop:exist-neg}%
, suppose in addition that $\mathcal{M}$ is order concave. Then the operator 
$\mathbf{T}$ has a unique fixed point in the cone $\mathcal{K}\left(
C\right) .$
\end{proposition}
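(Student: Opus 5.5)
Let $U_1,U_2\in\mathcal{K}(C)$ be two fixed points of $\mathbf{T}$. The plan is to use (\ref{eq:YUP}) to place both of them in one order interval, and then to run a concave-operator argument on that interval, in the spirit of Marinacci and Montrucchio \cite[Theorem~1]{marmont1}. That argument turns on the behaviour of $\mathbf{T}$ on the lower perimeter (\ref{eq:upper}).

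\textbf{Step 1 (common interval).} By Proposition~\ref{prop:exist-neg}, every fixed point $U=\mathbf{T}(U)$ satisfies $\varepsilon_U C\le U\le(1-\beta)^{1/r}C$. Take $\varepsilon>0$ smaller than $\varepsilon_{U_1}$ and $\varepsilon_{U_2}$, and small enough for Lemma~\ref{lemRT} to apply. Then both fixed points lie in $I_\varepsilon=[\varepsilon C,(1-\beta)^{1/r}C]$, and $\mathbf{T}(I_\varepsilon)\subset I_\varepsilon$.

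\textbf{Step 2 (strict inwardness on the lower perimeter).} We show that for $H\in\partial_\diamond I_\varepsilon$, i.e.\ $\inf H/C=\varepsilon$, one has $\mathbf{T}(H)\gg$ relative to $\varepsilon C$, meaning $\inf \mathbf{T}(H)/C>\varepsilon$. Since $\mathbf{T}$ is monotone and $H\ge\varepsilon C$, we get $\mathbf{T}(H)\ge\mathbf{T}(\varepsilon C)$. By (\ref{eq:HAZ}), $\mathbf{T}(\varepsilon C)\ge(1+\eta)\varepsilon C$ for some $\eta>0$; here I read $\gg 1$ for a ratio as a uniform gap. Hence $\inf \mathbf{T}(H)/C\ge(1+\eta)\varepsilon>\varepsilon$. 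So $\mathbf{T}$ maps the lower perimeter strictly into the ``interior from below''. Together with order concavity of $\mathbf{T}$ (inherited from $\mathcal{M}$, since $W$ is concave and monotone), these are exactly the hypotheses of \cite[Theorem~1]{marmont1}, which I would invoke to conclude that $\mathbf{T}$ has at most one fixed point in $I_\varepsilon$.

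\textbf{Step 3 (direct argument, if one prefers a self-contained proof).} Let $\bar U$ be the greatest fixed point, so $U\le\bar U$ for any fixed point $U$. Set $\lambda=\sup\{t\in[0,1]:U\ge t\bar U\}$; note $\lambda>0$ because both lie in $I_\varepsilon\subset\mathcal{K}(C)$. Suppose $\lambda<1$. Write $U\ge\lambda\bar U+(1-\lambda)\varepsilon' C$ where $\varepsilon' C\le \bar U$ is a small multiple of $C$. Applying the order concavity of $\mathbf{T}$ to the ordered pair $\varepsilon' C\le\bar U$ gives
\[
U=\mathbf{T}U\ge\lambda\bar U+(1-\lambda)\mathbf{T}(\varepsilon' C)\ge\lambda\bar U+(1-\lambda)(1+\eta)\varepsilon' C .
\]
The gain $(1-\lambda)\eta\varepsilon' C$ is a positive multiple of $C$, hence of $\bar U$ (since $\bar U\le(1-\beta)^{1/r}C$). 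This yields $U\ge(\lambda+\kappa)\bar U$ for some $\kappa>0$, contradicting the maximality of $\lambda$. Therefore $\lambda=1$ and $U=\bar U$.

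The main obstacle is the first inequality in Step 3: justifying $U\ge\lambda\bar U+(1-\lambda)\varepsilon'C$ with a uniform $\varepsilon'$. From $U\ge\lambda\bar U$ and $U\ge\varepsilon C$ alone, this convex combination does not follow directly. The fix is to run the sup argument not against $\bar U$ but within the interval, using the lower-perimeter parametrization $(H-\varepsilon C)/(\bar U-\varepsilon C)$. This is precisely the bookkeeping that \cite[Theorem~1]{marmont1} performs, so I would ultimately rely on that theorem, with Step 2 as the verification of its hypothesis.
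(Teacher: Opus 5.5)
Your Steps~1--2 are essentially the paper's proof. You take $\varepsilon$ small enough that (\ref{eq:HAZ}) holds and both fixed points lie in $[\varepsilon C,(1-\beta)^{1/r}C]$, use (\ref{eq:HAZ}) and (\ref{eq:upper}) to rule out fixed points on the lower perimeter, and invoke Theorem~1 of \cite{marmont1} for the order-concave $\mathbf{T}$; your bound $\mathbf{T}(H)\geq\mathbf{T}(\varepsilon C)$ is a slightly more direct route than the paper's detour through the least fixed point $U_{\ast}$ and the iterates $\mathbf{T}^{n}(\varepsilon C)$. Step~3 can simply be dropped, since its first inequality is not justified as written (as you note yourself) and the argument rests, as in the paper, on that theorem.
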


This result shows that, once existence is guaranteed, no further growth
restriction is needed for uniqueness in the case $r<0$. Order concavity of
the certainty equivalent is sufficient to obtain uniqueness. Economically,
this means that sufficiently strong risk or uncertainty aversion rules out
multiple continuation-value processes compatible with the same consumption
stream. We now turn to the high-EIS case.

\subsection{High EIS: $r\in (0,1)$}

In this case the CES aggregator $W$ has the obvious lower bound 
\begin{equation}
W\left( x,y\right) \geq (1-\beta )^{1/r}x  \label{eq:ZAZ}
\end{equation}%
that implies $\mathbf{T}\left( X\right) \geq (1-\beta )^{1/r}C.$

An argument analogous to that used for the case $r<0$ shows that the
condition%
\begin{equation*}
\frac{\mathcal{M}\left( C\right) }{C}\ll \beta ^{-1/r}\text{ }
\end{equation*}%
implies that 
\begin{equation*}
\mathbf{T}\left( \lambda C\right) \leq \lambda C\text{ for }\lambda >1\text{
large enough.}
\end{equation*}

Therefore, a proof similar to that of Proposition \ref{prop:exist-neg}
delivers the next result.

\begin{proposition}
\label{propB}Let $0<r<1$. Assume that (i)-(ii)-(iii) hold for the certainty
operator $\mathcal{M}$. Let $C\in L_{++}^{1}\left( \delta \right) $ for some 
$\delta >0$ and $\mathcal{M}(C)\in L_{+}$. If%
\begin{equation*}
\frac{\mathcal{M}\left( C\right) }{C}\ll \beta ^{-1/r}
\end{equation*}%
then $\mathbf{T}:\mathcal{K}\left( C\right) \rightarrow \mathcal{K}\left(
C\right) $. More precisely, for all $X\in \mathcal{K}\left( C\right) $ there
exists a scalar $\lambda >1$ for which 
\begin{equation*}
(1-\beta )^{1/r}C\leq \mathbf{T}\left( X\right) \leq \lambda C.
\end{equation*}%
Moreover, the set of the fixed points of $\mathbf{T}$ is not empty in $%
\mathcal{K}\left( C\right) $ and the least fixed point is included in it.
\end{proposition}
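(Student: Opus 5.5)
The plan mirrors the proof of Proposition~\ref{prop:exist-neg}, with the roles of the lower and upper ends of the interval exchanged. The first step is the analogue of Lemma~\ref{lemRT}. Assume $\mathcal{M}(C)/C\leq \beta^{-1/r}-\eta$ for some $\eta>0$. For $\lambda>1$, subhomogeneity gives $\mathcal{M}(\lambda C)\leq \lambda \mathcal{M}(C)$. Hence
\begin{equation*}
\frac{\mathbf{T}(\lambda C)}{\lambda C}\leq \Big[(1-\beta)\lambda^{-r}+\beta\Big(\frac{\mathcal{M}(C)}{C}\Big)^{r}\Big]^{1/r}\leq \Big[(1-\beta)\lambda^{-r}+\beta(\beta^{-1/r}-\eta)^{r}\Big]^{1/r}.
\end{equation*}
Since $\beta(\beta^{-1/r}-\eta)^{r}<1$ is a constant strictly below $1$, taking $\lambda$ large makes the right side strictly less than $1$. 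This yields $\mathbf{T}(\lambda C)\ll \lambda C$, and in particular $\mathbf{T}(\lambda C)\leq \lambda C$, for all sufficiently large $\lambda$. Here $r>0$ makes $t\mapsto t^{1/r}$ increasing, so the inequalities go in the right direction.

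The second step shows that $\mathbf{T}$ maps $\mathcal{K}(C)$ into itself. Take $X\in\mathcal{K}(C)$ with $X\leq \mu C$. The lower bound \eqref{eq:ZAZ} gives $\mathbf{T}(X)\geq(1-\beta)^{1/r}C$. For the upper bound, monotonicity gives $\mathbf{T}(X)\leq \mathbf{T}(\lambda C)\leq\lambda C$ for any $\lambda\geq\mu$ large enough as in the first step. Here \eqref{eq:HHH} guarantees that $\mathcal{M}(X)\in L_{+}$, so $\mathbf{T}(X)$ is well defined. This establishes the displayed two-sided bound.

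The third step is existence via Kantorovich's theorem. Fix $\lambda$ large, with $\lambda\geq(1-\beta)^{1/r}$. Then $\mathbf{T}$ maps the order interval $I_\lambda=[(1-\beta)^{1/r}C,\lambda C]$ into itself, and $I_\lambda$ lies in $L^{1}_{++}(\delta)$ because $C\in L^{1}_{++}(\delta)$. Start from the lower endpoint $X^{0}=(1-\beta)^{1/r}C$, which satisfies $\mathbf{T}X^{0}\geq X^{0}$. The iterates $X^{n}=\mathbf{T}^{n}X^{0}$ are then increasing and bounded above by $\lambda C$. By the monotone convergence property (Dedekind $\sigma$-completeness) of $L^{1}(\delta)$, $X^{n}\uparrow U_{*}\in I_\lambda$. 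The $\sigma$-order continuity of $\mathbf{T}$, inherited from (ii) and the continuity of $W$, gives $\mathbf{T}U_{*}=\lim\mathbf{T}X^{n}=U_{*}$.

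The last step, and the only real subtlety, is showing that $U_{*}$ is the least fixed point in the whole cone $\mathcal{K}(C)$ and not just in $I_\lambda$. The key observation is that the starting point $X^{0}$ does not depend on $\lambda$. Let $V\in\mathcal{K}(C)$ be any fixed point. By \eqref{eq:ZAZ}, $V=\mathbf{T}V\geq X^{0}$. Induction with monotonicity then gives $V=\mathbf{T}^{n}V\geq X^{n}$ for all $n$, hence $V\geq U_{*}$. This is exactly why a least fixed point exists here, whereas in the case $r<0$ one only obtained a greatest one: there the universal bound $(1-\beta)^{1/r}C$ was an upper bound, while here it is a lower bound.
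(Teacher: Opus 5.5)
Your proposal is correct and follows the paper's route: the paper likewise proves the mirror image of Lemma~\ref{lemRT} (namely $\mathbf{T}(\lambda C)\leq\lambda C$ for all large $\lambda$) and then repeats the Kantorovich monotone-iteration argument of Proposition~\ref{prop:exist-neg}, iterating upward from the universal lower bound $(1-\beta)^{1/r}C$ to obtain the least fixed point in $\mathcal{K}(C)$. One small point: what you actually prove is the ratio bound $\mathbf{T}(\lambda C)/(\lambda C)\ll 1$, which is the analogue of \eqref{eq:HAZ}, not the additive statement $\mathbf{T}(\lambda C)\ll\lambda C$ (that would require $C$ to be bounded away from zero); since you only use $\mathbf{T}(\lambda C)\leq\lambda C$, the argument is unaffected.
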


The analogue of (\ref{eq:GLA}) is here%
\begin{equation*}
\beta \left( \frac{\mathcal{M}(C)}{C}\right) ^{r}\ll 1.
\end{equation*}%
This condition also has  a simple economic interpretation. The left-hand side
is an adjusted measure of consumption growth that incorporates attitudes
toward uncertainty, discounting, and intertemporal substitutability. When
the EIS is above one, existence requires this adjusted growth measure to be
uniformly bounded from above: consumption cannot grow too fast.

In this case the lower perimeter is given by 
\begin{equation*}
\partial _{\diamond }\left[ (1-\beta )^{1/r}C,\lambda C\right] =\left\{ H\in %
\left[ (1-\beta )^{1/r}C,\lambda C\right] :\inf \frac{H}{C}=(1-\beta
)^{1/r}\right\}
\end{equation*}%
Therefore for uniqueness it is required that $\mathbf{T}\left( (1-\beta
)^{1/r}C\right) /C\gg (1-\beta )^{1/r}$ together with the assumption of
order concavity. Notice that $\mathbf{T}\left( (1-\beta )^{1/r}C\right) \gg
(1-\beta )^{1/r}C$ is equivalent to the condition $\mathcal{M}\left(
C\right) /C\gg 0.$ The next proposition formalizes these facts.

\begin{proposition}
\label{prop:unique2}Let $0<r<1$. Assume that (i)-(ii)-(iii) hold for the
certainty operator $\mathcal{M}$. Under the additional condition that $%
\mathcal{M}$ is order concave, $\mathbf{T}$ has a unique fixed point in $%
\mathcal{K}\left( C\right) $ if 
\begin{equation*}
0\ll \frac{\mathcal{M}\left( C\right) }{C}\ll \beta ^{-1/r}
\end{equation*}
\end{proposition}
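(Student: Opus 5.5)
Existence comes from Proposition~\ref{propB}, and uniqueness from the lower-perimeter criterion of Marinacci and Montrucchio \cite[Theorem~1]{marmont1}, applied exactly as in the proof of Proposition~\ref{prop:unique1}. The hypotheses of Proposition~\ref{propB} hold: the right inequality $\mathcal{M}(C)/C\ll\beta^{-1/r}$ is its growth condition, and the left inequality together with (\ref{eq:HHH}) gives $\mathcal{M}(C)\in L_{+}$. So $\mathbf{T}$ maps $\mathcal{K}(C)$ into itself, and a fixed point exists.

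\textbf{Step 1: reduce to an invariant order interval.} Fix any two fixed points $U,V\in\mathcal{K}(C)$. The aim is an interval $[(1-\beta)^{1/r}C,\lambda C]$ that contains both and is mapped into itself by $\mathbf{T}$. The lower bound is automatic, since every fixed point satisfies $U=\mathbf{T}U\geq(1-\beta)^{1/r}C$ by (\ref{eq:ZAZ}). For the upper bound, the argument preceding Proposition~\ref{propB} gives $\mathbf{T}(\lambda C)\leq\lambda C$ for all $\lambda$ large enough. I take $\lambda$ so large that also $U,V\leq\lambda C$. Monotonicity of $\mathbf{T}$ then sends $[(1-\beta)^{1/r}C,\lambda C]$ into $[(1-\beta)^{1/r}C,\mathbf{T}(\lambda C)]\subset[(1-\beta)^{1/r}C,\lambda C]$. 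Since $\mathbf{T}$ is order concave (inherited from $\mathcal{M}$ because $W$ is concave and monotone), \cite[Theorem~1]{marmont1} says $\mathbf{T}$ has at most one fixed point in this interval, provided $\mathbf{T}$ maps the lower perimeter strictly inside. Hence $U=V$.

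\textbf{Step 2: verify the lower-perimeter condition.} Take $H\in\partial_{\diamond}[(1-\beta)^{1/r}C,\lambda C]$, that is, $H\geq(1-\beta)^{1/r}C$ with $\inf H/C=(1-\beta)^{1/r}$. I need $\inf\mathbf{T}(H)/C>(1-\beta)^{1/r}$. By monotonicity and subhomogeneity,
\[
\mathcal{M}(H)\geq\mathcal{M}\big((1-\beta)^{1/r}C\big)\geq(1-\beta)^{1/r}\mathcal{M}(C),
\]
where the second inequality uses $(1-\beta)^{1/r}<1$. By hypothesis $\mathcal{M}(C)\geq\eta C$ for some $\eta>0$. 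Homogeneity of $W$ then gives
\[
\frac{\mathbf{T}(H)}{C}\geq(1-\beta)^{1/r}\big((1-\beta)+\beta\eta^{r}\big)^{1/r}.
\]
This is a constant strictly bigger than $(1-\beta)^{1/r}$, so $\mathbf{T}(H)$ stays uniformly away from the lower perimeter, as required.

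\textbf{Main obstacle.} The delicate point is matching the exact hypotheses of \cite[Theorem~1]{marmont1} in $L^{1}(\delta)$, where the positive cone has empty interior. I would follow the proof of Proposition~\ref{prop:unique1} verbatim. In particular, I would check that the form of strict inward mapping required there is the uniform version $\inf \mathbf{T}(H)/C>(1-\beta)^{1/r}$ that Step~2 delivers, and that order concavity is only needed on comparable pairs, which holds inside the interval. I would also confirm that $\sigma$-order continuity (ii) supplies whatever regularity that theorem needs, as it did in the $r<0$ case.
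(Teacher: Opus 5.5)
Your proof follows the paper's. Existence comes from Proposition~\ref{propB}. You then apply Theorem~1 of \cite{marmont1} on the invariant interval $[(1-\beta)^{1/r}C,\lambda C]$, taking $\lambda$ as large as needed (the analogue of $\varepsilon\downarrow 0$ in Proposition~\ref{prop:unique1}). The key estimate is the subhomogeneity bound $\mathcal{M}(H)\geq(1-\beta)^{1/r}\mathcal{M}(C)\geq(1-\beta)^{1/r}\eta C$.

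Your Step~2 bound holds for every $H\geq(1-\beta)^{1/r}C$, so in particular for any fixed point $U$. That excludes fixed points from the lower perimeter a little more directly than the paper, which goes through the least fixed point $\sup_n\mathbf{T}^n((1-\beta)^{1/r}C)$. It also settles your ``main obstacle'': the paper uses Theorem~1 only in the form ``no fixed point of $\mathbf{T}$ lies on $\partial_{\diamond}$'', and your bound gives that at once.

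There is an algebra slip in Step~2 that you need to fix. From $\mathcal{M}(H)/C\geq(1-\beta)^{1/r}\eta$ and $r>0$ you get
\[
\frac{\mathbf{T}(H)}{C}\geq\bigl[(1-\beta)+\beta(1-\beta)\eta^{r}\bigr]^{1/r}=(1-\beta)^{1/r}\bigl(1+\beta\eta^{r}\bigr)^{1/r},
\]
not $(1-\beta)^{1/r}\bigl((1-\beta)+\beta\eta^{r}\bigr)^{1/r}$. Your displayed constant is still a valid lower bound, but it exceeds $(1-\beta)^{1/r}$ only if $\eta>1$. The hypothesis $\mathcal{M}(C)/C\gg 0$ gives you no such $\eta$, so the claim that it is ``strictly bigger than $(1-\beta)^{1/r}$'' fails as written. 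With the correct factorization the strict inequality holds for every $\eta>0$, and this is exactly the paper's bound.

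Two minor points. First, finiteness of $\mathcal{M}(C)$ comes from the right-hand inequality $\mathcal{M}(C)/C\ll\beta^{-1/r}$, which presupposes it. It does not come from the left one via (\ref{eq:HHH}); that relation runs the other way, from $\mathcal{M}(C)\in L_{+}$ to $\mathcal{M}(X)\in L_{+}$ on $\mathcal{K}(C)$. Second, invoking Proposition~\ref{propB} requires $C\in L^{1}_{++}(\delta)$, which is implicit in the statement, as it is in the paper's proof.
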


The key point of the proof of this result is that the relation $\mathbf{T}%
\left( (1-\beta )^{1/r}C\right) /C\gg (1-\beta )^{1/r}$ is equivalent to $%
\mathcal{M}\left( C\right) /C\gg 0,$ which in turn is nothing but the lower
perimeter condition for uniqueness discussed earlier. 

\subsection{Unit EIS: $r=0$ and Cobb--Douglas case\label{subsection Unit}}

We now look at the Cobb--Douglas aggregator 
\begin{equation*}
W\left( x,y\right) =x^{\alpha }y^{\beta }\text{ }
\end{equation*}%
with $\alpha >0$ and $\beta \in \left( 0,1\right) .$ The specification $%
\alpha +\beta =1$ is of some interest since it can be seen as the limit of
the CES aggregator as $r\rightarrow 0$ $($i.e., EIS $\rightarrow 1)$. The
associated operator is given by 
\begin{equation}
\mathbf{T}\left( X\right) =C^{\alpha }\left[ \mathcal{M}\left( X\right) %
\right] ^{\beta }  \label{eq:cobb}
\end{equation}

Set $p=\left( 1-\alpha \right) /\beta .$ Clearly $p\geq 1$ when $\alpha
+\beta \leq 1$ and $p=1$ if $\alpha +\beta =1$.

The search for fixed points of (\ref{eq:cobb}) is quite simplified with
respect to the CES case, thanks to a property which we are going to
highlight in the following definition.

\begin{definition}
\label{def:ssub}An operator $\mathbf{T}:L_{+}\longrightarrow L_{+}$ is said
to be $q$-\emph{subhomogeneous}, $q\in \left( 0,1\right) ,$ if 
\begin{equation*}
\mathbf{T}\left( \varepsilon X\right) \geq \varepsilon ^{q}\mathbf{T}\left(
X\right) \Longleftrightarrow \mathbf{T}\left( \lambda X\right) \leq \lambda
^{q}\mathbf{T}\left( X\right)
\end{equation*}%
holds for all $X\in L_{+}$, $\varepsilon \in \left( 0,1\right) $ and $%
\lambda >1.$
\end{definition}

We refer to \cite{marmont1} for more details on this strengthened version of
subhomogeneity. This immediately yields the following result. Its trivial
proof is omitted.

\begin{proposition}
\label{prop:FFF}If the certainty equivalent $\mathcal{M}$ is subhomogeneous,
then the operator $\mathbf{T}$ in (\ref{eq:cobb}) is $\beta $-subhomogeneous
with $\beta \in \left( 0,1\right) .$
\end{proposition}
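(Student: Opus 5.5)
The plan is a direct verification from the definitions. Fix $X\in L_{+}$ for which $\mathcal{M}(X)$ is defined, and fix $\lambda>1$. By Definition~\ref{def:ssub} it suffices to show
\begin{equation*}
\mathbf{T}(\lambda X)\leq \lambda ^{\beta }\,\mathbf{T}(X).
\end{equation*}
The equivalent form with $\varepsilon\in(0,1)$ then follows by taking $X'=\varepsilon X$ and $\lambda=1/\varepsilon$. This mirrors the equivalence already stated for subhomogeneity in (iii).

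The key step uses subhomogeneity of $\mathcal{M}$ in the form $\mathcal{M}(\lambda X)\leq \lambda \mathcal{M}(X)$ for $\lambda>1$. This inequality holds componentwise and almost surely. Since $y\mapsto y^{\beta}$ is increasing on $\mathbb{R}_{+}$ and $C^{\alpha}>0$ (because $C\in L_{++}$), we get
\begin{equation*}
\mathbf{T}(\lambda X)=C^{\alpha }\left[ \mathcal{M}(\lambda X)\right] ^{\beta }\leq C^{\alpha }\lambda ^{\beta }\left[ \mathcal{M}(X)\right] ^{\beta }=\lambda ^{\beta }\mathbf{T}(X).
\end{equation*}
Finally, $\beta\in(0,1)$ holds by the standing assumption on the Cobb--Douglas aggregator, so $q=\beta$ is an admissible exponent.

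There is no real obstacle. The only point needing care is that $\mathbf{T}$ is well defined on the relevant domain, i.e.\ that $\mathcal{M}(X)\in L_{+}$. Where needed, this comes from (\ref{eq:HHH}) for $X\in\mathcal{K}(C)$ once $\mathcal{M}(C)\in L_{+}$. All inequalities are understood a.s.\ for each $t$, consistent with the order on $L$.
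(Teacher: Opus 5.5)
Your proof is correct and is the direct verification the paper has in mind when it calls the proof ``trivial'' and omits it: combine $\mathcal{M}(\lambda X)\leq\lambda\mathcal{M}(X)$, the monotonicity of $y\mapsto y^{\beta}$, and the positivity of $C^{\alpha}$ to get $\mathbf{T}(\lambda X)\leq\lambda^{\beta}\mathbf{T}(X)$. Your passage to the $\varepsilon$-form and your remark on the well-definedness of $\mathcal{M}(X)$ are also fine.
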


Unlike the CES case, the next proposition proves that existence and
uniqueness in $\mathcal{K}(C)$ coincide, and concavity is not needed, thanks
to Proposition \ref{prop:FFF}.

\begin{proposition}
\label{prop:cobbdouglas} Assume that (i)-(ii)-(iii) hold for $\mathcal{M}$.
Then $\mathbf{T}:\mathcal{K}\left( C\right) \rightarrow \mathcal{K}\left(
C\right) $ and the set of its fixed points is not empty in $\mathcal{K}%
\left( C\right) $ if and only if 
\begin{equation}
\mathcal{M}(C)\in \mathcal{K}\left( C^{p}\right) .  \label{eq:HCM}
\end{equation}%
In such a case $\mathbf{T}$ has exactly one fixed point in $\mathcal{K}%
\left( C\right) $.
\end{proposition}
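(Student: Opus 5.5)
Necessity comes first. Suppose $U\in\mathcal{K}(C)$ is a fixed point. Then $\lambda C\le U\le\mu C$ for some $0<\lambda<\mu$. By monotonicity and subhomogeneity (as in (\ref{eq:HHH}), together with its counterpart $\mathcal{M}(\varepsilon C)\ge\varepsilon\mathcal{M}(C)$ for $\varepsilon<1$), $\mathcal{M}(U)$ is squeezed between positive multiples of $\mathcal{M}(C)$. From $U=C^{\alpha}\mathcal{M}(U)^{\beta}$ we get $\mathcal{M}(U)=(U/C^{\alpha})^{1/\beta}$. This lies between positive multiples of $C^{(1-\alpha)/\beta}=C^{p}$, so $\mathcal{M}(C)\in\mathcal{K}(C^{p})$.

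For sufficiency, assume $aC^{p}\le\mathcal{M}(C)\le bC^{p}$. For $X\in[\lambda C,\mu C]$, the same monotonicity/subhomogeneity sandwich gives
\[
\min(\lambda,1)\,a\,C^{p}\le\mathcal{M}(X)\le\max(\mu,1)\,b\,C^{p}.
\]
Hence $\mathbf{T}X=C^{\alpha}\mathcal{M}(X)^{\beta}$ lies between positive multiples of $C^{\alpha+\beta p}=C$, which proves $\mathbf{T}:\mathcal{K}(C)\to\mathcal{K}(C)$.

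Next I would build an invariant interval using Proposition \ref{prop:FFF}. For $\varepsilon<1$ we have $\mathbf{T}(\varepsilon C)\ge\varepsilon^{\beta}\mathbf{T}(C)\ge\varepsilon^{\beta}a^{\beta}C$, and this is $\ge\varepsilon C$ once $\varepsilon^{1-\beta}\le a^{\beta}$. Symmetrically, $\mathbf{T}(\mu C)\le\mu^{\beta}b^{\beta}C\le\mu C$ for $\mu$ large. So $\mathbf{T}$ maps $[\varepsilon C,\mu C]$ into itself, and every $X\in\mathcal{K}(C)$ lies in such an interval for $\varepsilon$ small and $\mu$ large.

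Existence on this interval follows from the Kantorovich argument used for Proposition \ref{prop:exist-neg}, with (ii) supplying $\sigma$-order continuity. The monotone iterates $\mathbf{T}^{n}(\varepsilon C)\uparrow$ are bounded by $\mu C$. To converge in the space I need $C\in L^{1}(\delta)$ for some $\delta$. This is not among the stated hypotheses, so I would either impose it or note that the order limit exists pointwise in $L_{+}$: each coordinate is a monotone bounded sequence of random variables, and the iterates are bounded above by $\mu C_{t}$, which is finite. Continuity (ii) then shows that the limit is a fixed point. This technical point, working in $L_{+}$ rather than $L^{1}(\delta)$, is where I expect the only care to be needed.

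Uniqueness uses the standard $q$-subhomogeneity argument and needs no concavity. Let $U,V\in\mathcal{K}(C)$ be fixed points. Then $U,V\in\mathcal{K}(V)$, so we can define $t^{*}=\sup\{t\in(0,1]:tV\le U\}>0$. By the Archimedean property this supremum is attained. If $t^{*}<1$, then
\[
U=\mathbf{T}U\ge\mathbf{T}(t^{*}V)\ge(t^{*})^{\beta}\mathbf{T}V=(t^{*})^{\beta}V,
\]
and $(t^{*})^{\beta}>t^{*}$ contradicts maximality. Hence $V\le U$, and by symmetry $U=V$.
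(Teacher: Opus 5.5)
Your proof is correct and follows essentially the same route as the paper's. You use the invariant interval $[\varepsilon C,\mu C]$ from $\beta$-subhomogeneity, monotone iteration with $\sigma$-order continuity for existence, the $\beta$-subhomogeneity contraction for uniqueness (your extremal-$t^{*}$ argument is the one-step form of the paper's iterated bounds $q^{-\beta^{n}}U\le V\le q^{\beta^{n}}U$), and for necessity you sandwich $\mathcal{M}(U)$ between multiples of $\mathcal{M}(C)$. You are also right that $C\in L^{1}_{++}(\delta)$ is missing from the statement even though the paper's ``usual monotone iteration'' tacitly relies on it, and your pointwise workaround in $L_{+}$, where order-bounded monotone sequences have a.s.\ limits, is valid.
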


If $r=0$ then $p=1$, and condition (\ref{eq:HCM}) reduces to $\mathcal{M}%
(C)\in \mathcal{K}(C)$. Thus, in the unit-EIS case, this growth condition
guarantees the existence and uniqueness in $\mathcal{K}(C)$.

\begin{remark}
Uniqueness cannot hold in all of $L_{++}^{1}\left( \delta \right) .$ Suppose
that $\mathcal{M}$ is positively homogeneous. If $U$ is a fixed point in $%
L_{++}^{1}\left( \delta \right) $ then $V\left( \lambda \right) =\left(
\lambda ^{\beta ^{-t}}U_{t}\right) $ is a fixed point of $\mathbf{T}$ in $%
L_{++}^{1}\left( \delta \right) $, for every $\lambda \in \left( 0,1\right] $%
. In other words, every fixed point generates a continuum of non-comparable
fixed points. Indeed, $\mathcal{K}\left( V\left( \lambda \right) \right)
\cap \mathcal{K}\left( V\left( \lambda _{1}\right) \right) =\emptyset $ for 
$\lambda \neq \lambda _{1}$. To see this, let $V\equiv V\left( \lambda
\right) $ for short. Then%
\begin{eqnarray*}
\left( \mathbf{T}V\right) _{t} &=&C_{t}^{\alpha }\left[ \mathcal{M}%
_{t}\left( V_{t+1}\right) \right] ^{\beta }=C_{t}^{\alpha }\left[ \mathcal{M}%
_{t}\left( \lambda ^{\beta ^{-\left( t+1\right) }}U_{t+1}\right) \right]
^{\beta } \\
&=&\lambda ^{\beta ^{-t}}C_{t}^{\alpha }\left[ \mathcal{M}_{t}\left(
U_{t+1}\right) \right] ^{\beta }=\lambda ^{\beta ^{-t}}U_{t}=V_{t}
\end{eqnarray*}%
Since $V\left( \lambda \right) \leq U$, $V\left( \lambda \right) \in
L_{++}^{1}\left( \delta \right) $ and so it is a fixed point.
\end{remark}

The condition $\mathcal{M}(C)\in \mathcal{K}\left( C\right)$ for $\alpha
+\beta =1$ is quite strong and, as we shall see later in  Section \ref%
{subsec:cobb}, it may fail in economically relevant cases. This does not imply the non-existence of
fixed points outside the cone $\mathcal{K}\left( C\right) .$ Looking at the
proof of Proposition \ref{prop:FFF}, it is easy to realize that it can be
generalized by replacing the cone $\mathcal{K}\left( C\right) $ with any
other cone $\mathcal{K}\left( Z\right) $, with $Z\in L_{++}^{1}\left( \delta
\right) $.

\begin{proposition}
\label{prop:FFP} Assume that $\mathcal{M}$ satisfies (i)--(iii). Let $%
\alpha+\beta=1$ and $Z\in L_{++}^{1}(\delta)$. Then $\mathbf{T}:\mathcal{K}%
(Z)\rightarrow\mathcal{K}(Z)$ and has a unique fixed point in $\mathcal{K}%
(Z) $ if and only if 
\begin{equation}
\mathcal{M}(Z)\in \mathcal{K}\left(C^{-\alpha/\beta}Z^{1/\beta}\right).
\label{eq:FFP}
\end{equation}
\end{proposition}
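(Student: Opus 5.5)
The plan is to reduce the claim to the existence/uniqueness theory for $q$-subhomogeneous monotone operators on a cone $\mathcal{K}(Z)$, exactly as in Proposition~\ref{prop:cobbdouglas}. The only new ingredient is the computation of when $\mathbf{T}$ maps $\mathcal{K}(Z)$ into itself.

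\emph{Necessity.} Suppose $\mathbf{T}$ has a fixed point $U\in\mathcal{K}(Z)$, say $\lambda Z\leq U\leq \mu Z$ with $0<\lambda<1<\mu$. By monotonicity and subhomogeneity (iii), $\lambda\mathcal{M}(Z)\leq\mathcal{M}(U)\leq \mu\mathcal{M}(Z)$, so $\mathcal{M}(U)\in\mathcal{K}(\mathcal{M}(Z))$. From $U=C^{\alpha}\mathcal{M}(U)^{\beta}$ we get $\mathcal{M}(U)=C^{-\alpha/\beta}U^{1/\beta}$. Since $U\in\mathcal{K}(Z)$ gives $U^{1/\beta}\in\mathcal{K}(Z^{1/\beta})$, it follows that $\mathcal{M}(U)\in\mathcal{K}(C^{-\alpha/\beta}Z^{1/\beta})$. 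Because the cones $\mathcal{K}(\cdot)$ partition comparability classes, this yields \eqref{eq:FFP}.

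\emph{Sufficiency and invariance.} Conversely, assume \eqref{eq:FFP}, i.e.
\[
aC^{-\alpha/\beta}Z^{1/\beta}\leq\mathcal{M}(Z)\leq bC^{-\alpha/\beta}Z^{1/\beta}.
\]
Raising to the power $\beta$ and multiplying by $C^{\alpha}$ gives $a^{\beta}Z\leq\mathbf{T}(Z)\leq b^{\beta}Z$, so $\mathbf{T}(Z)\in\mathcal{K}(Z)$. For a general $X\in[\lambda Z,\mu Z]$, monotonicity and $\beta$-subhomogeneity (Proposition~\ref{prop:FFF}) give
\[
\min(\lambda,1)^{\beta}\,\mathbf{T}(Z)\leq\mathbf{T}(X)\leq\max(\mu,1)^{\beta}\,\mathbf{T}(Z),
\]
so $\mathbf{T}:\mathcal{K}(Z)\to\mathcal{K}(Z)$.

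\emph{Invariant interval and existence.} Next I build an invariant order interval. For $\varepsilon$ small and $\Lambda$ large,
\[
\mathbf{T}(\varepsilon Z)\geq\varepsilon^{\beta}a^{\beta}Z\geq\varepsilon Z
\]
once $\varepsilon^{1-\beta}\leq a^{\beta}$. Similarly $\mathbf{T}(\Lambda Z)\leq\Lambda^{\beta}b^{\beta}Z\leq\Lambda Z$ once $\Lambda^{1-\beta}\geq b^{\beta}$. Hence $\mathbf{T}$ maps $[\varepsilon Z,\Lambda Z]$ into itself. This interval lies in $L^{1}_{+}(\delta)$, which has the monotone convergence property, and $\mathbf{T}$ is $\sigma$-order continuous by (ii). The Kantorovich argument used for Proposition~\ref{prop:exist-neg} then produces fixed points: iterate $\mathbf{T}^{n}(\varepsilon Z)\uparrow$ and $\mathbf{T}^{n}(\Lambda Z)\downarrow$, and pass to the limits.

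\emph{Uniqueness, the main step.} Uniqueness follows from $\beta$-subhomogeneity, via the standard argument that replaces concavity (cf.\ \cite{marmont1}). Let $U,V\in\mathcal{K}(Z)$ be fixed points. Then they are mutually comparable, so we may let $\overline{\lambda}=\sup\{t:tU\leq V\}>0$. If $\overline{\lambda}<1$, then
\[
V=\mathbf{T}V\geq\mathbf{T}(\overline{\lambda}U)\geq\overline{\lambda}^{\beta}U,
\]
and $\overline{\lambda}^{\beta}>\overline{\lambda}$, contradicting the maximality of $\overline{\lambda}$. Hence $V\geq U$, and by symmetry $U=V$.

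The main obstacle is ensuring the sup is attained, so that $\overline{\lambda}U\leq V$ actually holds. This uses the Archimedean property noted after the definition of $\mathcal{K}(X)$, and it is the step where care is needed. Apart from that, the argument is a direct transcription of the proof of Proposition~\ref{prop:cobbdouglas} with $C$ replaced by $Z$. The condition $\alpha+\beta=1$ only enters through the exponent bookkeeping, namely that $\mathcal{M}(U)=C^{-\alpha/\beta}U^{1/\beta}$.
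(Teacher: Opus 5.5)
Your proposal is correct and follows essentially the paper's route. The paper reduces \eqref{eq:FFP} to $\mathbf{T}Z\in\mathcal{K}(Z)$ and then transcribes the proof of Proposition~\ref{prop:cobbdouglas}: an invariant interval $[\varepsilon Z,\lambda Z]$ from $\beta$-subhomogeneity, monotone iteration for existence, and uniqueness via $q^{-\beta^n}U\le V\le q^{\beta^n}U$. Your $\mathcal{M}$-level necessity step and your extremal-$\overline{\lambda}$ uniqueness argument are only cosmetic variants of this, and the attainment of $\overline{\lambda}$ that you flag is harmless, since it can be bypassed by taking $t<\overline{\lambda}$ with $t^{\beta}>\overline{\lambda}$.
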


In other words, to each admissible cone $\mathcal{K}(Z)$ there corresponds a
unique fixed point $U\in \mathcal{K}(Z)$, provided (\ref{eq:FFP}) holds.

For the Epstein--Zin certainty equivalent $\mathcal{M}_{t}(X)=\left( \mathbb{%
E}_{t}X_{t+1}^{s}\right) ^{1/s},$ we can go further and characterize
explicitly the processes $Z$ satisfying \eqref{eq:FFP}. Indeed, condition~%
\eqref{eq:FFP} is equivalent to the existence of an adapted process $B\in 
\mathcal{K}(\mathbf{1})$ such that (see the proof of Proposition \ref{prop:FFP}) 
\begin{equation}
\mathbb{E}_{t}\left[ Z_{t+1}^{s}\right] =C_{t}^{-\alpha s/\beta
}Z_{t}^{s/\beta }B_{t}.  \label{eq:DDD}
\end{equation}

It is worth remarking that, whenever in (\ref{eq:DDD}) the process $Z$ is
generated by a constant $B_{t}\equiv \overline{B}>0$, we can calculate the
unique recursive utility in $\mathcal{K}\left( Z\right) $ given by 
\begin{equation}
U=\overline{B}^{\beta /\left( s\alpha \right) }Z.  \label{eq:LSX}
\end{equation}

\subsection{Attractivity}

Uniqueness of the fixed point implies that the iterates converge to it from
any initial point. Recall that in a Riesz space a sequence $\left(
X_{n}\right) $ order converges to the point $X$, in symbols $X_{n}$ $%
\mathring{\rightarrow}$ $X$, whenever 
\begin{equation*}
\left\vert X-X_{n}\right\vert \leq Z_{n}
\end{equation*}%
holds for a nonnegative and decreasing sequence $Z_{n}\downarrow 0$. Clearly
in $L^{1}$ spaces order convergence implies pointwise convergence a.s., as
well as $L_{1}$-convergence thanks to Beppo Levi's theorem.

\begin{proposition}
\label{prop:convergence} Assume that $C\in L_{++}^{1}(\delta )$. Let $%
\mathbf{T}:\mathcal{K}(C)\rightarrow $ $\mathcal{K}(C)$ \ be the operator of
Proposition \ref{prop:unique1} or Proposition \ref{prop:unique2} having a
unique fixed point $\overline{U}$. For every initial point $\overline{X}\in 
\mathcal{K}(C)$, we have $\mathbf{T}^{n}(\overline{X})\mathring{\rightarrow}$
$\overline{U}$ and $\ \left\Vert \overline{U}-\mathbf{T}^{n}(\overline{X}%
)\right\Vert _{L^{1}(\delta )}\rightarrow 0$.
\end{proposition}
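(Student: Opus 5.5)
The strategy is to squeeze $\overline{X}$ between a lower and an upper starting point from which the iterates are monotone, and then use uniqueness together with $\sigma$-order continuity to identify both monotone limits with $\overline{U}$. I treat the case of Proposition~\ref{prop:unique1} ($r<0$); the case $0<r<1$ is symmetric.

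\textbf{Step 1: the sandwich.} Since $\overline{X}\in\mathcal{K}(C)$, bound (\ref{eq:YUP}) gives $\varepsilon_0 C\leq \mathbf{T}(\overline{X})\leq (1-\beta)^{1/r}C$ for some $\varepsilon_0>0$. By Lemma~\ref{lemRT} we may shrink $\varepsilon\le\varepsilon_0$ so that $\mathbf{T}(\varepsilon C)\geq \varepsilon C$. Set $X^{0}=\varepsilon C$ and $Y^{0}=(1-\beta)^{1/r}C$. Then $X^{0}\leq \mathbf{T}(\overline{X})\leq Y^{0}$. Moreover $\mathbf{T}(Y^{0})\leq Y^{0}$ by (\ref{eq:LLL}), and $\mathbf{T}(X^{0})\geq X^{0}$ by the choice of $\varepsilon$.

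\textbf{Step 2: monotone iterates.} By monotonicity of $\mathbf{T}$ (inherited from (i) and the monotonicity of $W$), the sequence $X^{n}=\mathbf{T}^{n}(X^{0})$ is increasing and $Y^{n}=\mathbf{T}^{n}(Y^{0})$ is decreasing. Also
\[
X^{n}\leq \mathbf{T}^{n+1}(\overline{X})\leq Y^{n}\qquad\text{for all } n.
\]
All these terms lie in the order interval $[\varepsilon C,(1-\beta)^{1/r}C]\subset L^{1}_{++}(\delta)$. By the monotone convergence property (Dedekind $\sigma$-completeness, which is where $C\in L^{1}_{++}(\delta)$ is used), we get $X^{n}\uparrow X^{*}$ and $Y^{n}\downarrow Y^{*}$, with $X^{*}\leq Y^{*}$, both in the interval, and the convergence holds in $L^{1}(\delta)$ norm. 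By $\sigma$-order continuity of $\mathbf{T}$ (from (ii) and continuity of $W$), $\mathbf{T}(X^{*})=\sup_n \mathbf{T}(X^{n})=X^{*}$, and likewise $\mathbf{T}(Y^{*})=Y^{*}$.

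\textbf{Step 3: identification and convergence.} Both $X^{*}$ and $Y^{*}$ are fixed points in $\mathcal{K}(C)$, so uniqueness from Proposition~\ref{prop:unique1} gives $X^{*}=Y^{*}=\overline{U}$. It follows that
\[
\bigl|\overline{U}-\mathbf{T}^{n+1}(\overline{X})\bigr|\leq Y^{n}-X^{n}=:Z_{n}\downarrow 0,
\]
which is exactly order convergence. The norm convergence follows from $\|Z_n\|\to 0$, by monotone convergence in $L^{1}(\delta)$.

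\textbf{The case $0<r<1$.} Here I use the interval $[(1-\beta)^{1/r}C,\lambda C]$ from Proposition~\ref{propB}. The point $(1-\beta)^{1/r}C$ satisfies $\mathbf{T}(\cdot)\geq\cdot$ by (\ref{eq:ZAZ}). For the upper end, I choose $\lambda$ large enough that both $\mathbf{T}(\lambda C)\leq\lambda C$ (from the remark preceding Proposition~\ref{propB}) and $\mathbf{T}(\overline{X})\leq \lambda C$. The remaining steps are identical.

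\textbf{Main obstacle.} The delicate point is Step 1: making sure a single $\varepsilon$ (respectively a single $\lambda$) works simultaneously for the sub/super-solution property and for bracketing $\mathbf{T}(\overline{X})$. This relies on the sub/super-solution property holding for all sufficiently small $\varepsilon$ (respectively all sufficiently large $\lambda$), which Lemma~\ref{lemRT} and its analogue provide. A second point needing care is justifying the passage to the limit in $\mathbf{T}$, which uses $\sigma$-order continuity along monotone sequences.
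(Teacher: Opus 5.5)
Your proof is correct and follows essentially the same route as the paper. You sandwich the orbit between the monotone iterates from the sub-solution $\varepsilon C$ and a super-solution, identify both monotone limits with $\overline{U}$, and bound $|\overline{U}-\mathbf{T}^{n}\overline{X}|$ by an envelope $Z_n\downarrow 0$. The differences are cosmetic: you bracket $\mathbf{T}(\overline{X})$, using (\ref{eq:YUP}) so the upper endpoint can be $(1-\beta)^{1/r}C$, whereas the paper brackets $\overline{X}$ itself with a large $\lambda C$. You also spell out, via $\sigma$-order continuity and uniqueness, why the two envelopes converge to $\overline{U}$, which the paper takes for granted.
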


\bigskip

The next simple example illustrates why uniqueness is formulated for the
cone $\mathcal{K}(C)$. Even in a deterministic economy, the recursive
utility equation may admit several positive solutions, although only one of
them belongs to $\mathcal{K}(C)$.

Assume there is no uncertainty and that consumption is constant, $C_{t}:=1.$
Consider the linear certainty equivalent $\mathcal{M}_{t}(X)=X_{t+1}.$ The
recursive equation is 
\begin{equation*}
U_{t}=\left[ (1-\beta) +\beta U_{t+1}^{r}\right] ^{1/r}.
\end{equation*}%
Clearly, the constant process $U_{t}:= \mathbf{1}\in \mathcal{K}(\mathbf{1})$ is a solution of
the recursive equation. If we now, for every $A\geq 0$, define 
\begin{equation*}
U_{t}^{A}:=\left[ 1+A\beta^{-t}\right] ^{1/r},
\end{equation*}%
then 
\begin{equation*}
\mathbf{T}\left( U^{A}\right) _{t}=\left[ (1-\beta) +\beta \left(
U_{t+1}^{A}\right) ^{r}\right] ^{1/r}=U_{t}^{A}
\end{equation*}%
Hence $U^{A}$ is a fixed point of $\mathbf{T}$ for every $A\geq 0$. If $A>0$
and $r\in (0,1)$, then $U_{t}^{A}\uparrow \infty $, while $%
U_{t}^{A}\downarrow 0$ if $r<0.$ In both cases $U^{A}\notin \mathcal{K}(C)$
while the only fixed point in $\mathcal{K}(C)$ is $U_{t}\equiv 1$.

\subsection{Non-existence}

Propositions \ref{prop:exist-neg} and \ref{propB} state sufficient
conditions for the existence of recursive utilities. We show that those
conditions are \textquotedblleft almost\textquotedblright\ necessary, under
slightly different hypotheses on the certainty equivalent operator. 

\begin{proposition}
\label{prop:nonexist-positive} Let $r\in\left( 0,1\right) $ and $\mathcal{M} 
$ be monotone and positively homogeneous. If 
\begin{equation}
\frac{\mathcal{M}(C)}{C}\geq\beta^{-1/r},  \label{eq:non2}
\end{equation}
then $\mathbf{T}$ has no fixed point in $L_{+}$.
\end{proposition}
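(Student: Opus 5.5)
Suppose, for contradiction, that $U\in L_{+}$ is a fixed point of $\mathbf{T}$, so that $U_t=[(1-\beta)C_t^r+\beta\mathcal{M}_t(U)^r]^{1/r}$. The plan is to show that $U$ must dominate $\lambda C$ for every $\lambda$, which is impossible for a finitely valued process. The bound is obtained by bootstrapping a lower bound of the form $U\geq \lambda_n C$ through the fixed-point equation.

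\emph{Starting bound.} By \eqref{eq:ZAZ}, $U=\mathbf{T}(U)\geq (1-\beta)^{1/r}C$, so we may take $\lambda_0=(1-\beta)^{1/r}>0$.

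\emph{Inductive step.} Suppose $U\geq \lambda_n C$ with $\lambda_n>0$. Monotonicity and positive homogeneity of $\mathcal{M}$, together with \eqref{eq:non2}, give
\begin{equation*}
\mathcal{M}(U)\geq \lambda_n\mathcal{M}(C)\geq \lambda_n\beta^{-1/r}C.
\end{equation*}
Since $r>0$, the map $y\mapsto [(1-\beta)x^r+\beta y^r]^{1/r}$ is increasing, and $\beta(\lambda_n\beta^{-1/r})^r=\lambda_n^r$. Hence
\begin{equation*}
U=\mathbf{T}(U)\geq \big[(1-\beta)C^r+\lambda_n^rC^r\big]^{1/r}=\big(1-\beta+\lambda_n^r\big)^{1/r}C.
\end{equation*}
Setting $\lambda_{n+1}:=(1-\beta+\lambda_n^r)^{1/r}$, this reads $\lambda_{n+1}^r=\lambda_n^r+(1-\beta)$. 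Therefore $\lambda_n^r=\lambda_0^r+n(1-\beta)\to\infty$, and so $\lambda_n\to\infty$.

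\emph{Contradiction.} For every $n$ we have $U_t\geq\lambda_nC_t$ a.s. Since $C\in L_{++}$, this forces $U_t=+\infty$ a.s., contradicting that $U$ is finitely valued. The same argument applies if only $\mathcal{M}(U)$ is required to be well defined with $\mathbf{T}(U)=U$.

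The main subtlety is bookkeeping, not hard estimation. Positive homogeneity must hold for every scalar $\lambda_n>0$, since subhomogeneity alone would give the inequality in the wrong direction for $\lambda_n>1$. We also need $\mathcal{M}(C)$ to be defined, which is implicit in \eqref{eq:non2}. Finally, $C>0$ a.s. is used in the last step so that $\lambda_nC_t\to\infty$. The linear growth of $\lambda_n^r$ is what makes this work: the additive term $1-\beta$ never vanishes.
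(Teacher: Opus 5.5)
Your proof is correct and is essentially the paper's argument. The paper takes the greatest $\overline{\lambda}$ with $U\ge\overline{\lambda}C$ and shows that one application of the same bootstrap step $\lambda\mapsto(1-\beta+\lambda^r)^{1/r}$ strictly improves it, a contradiction; you instead iterate that step so that $\lambda_n\to\infty$, which contradicts finite-valuedness directly and has the small advantage of not needing to justify that a greatest $\overline{\lambda}$ exists when $U$ is not known to lie in $\mathcal{K}(C)$.
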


Thus, when the EIS is above one, adjusted consumption growth cannot be
uniformly too high. If the certainty-equivalent value of future consumption
is everywhere large enough relative to current consumption, there is no
recursive utility.

A perfectly analogous result is valid for $r<0$, by replacing (\ref{eq:non2}%
) with the condition 
\begin{equation}
\frac{\mathcal{M}(C)}{C}\leq \beta ^{1/|r|},  \label{eq:non3}
\end{equation}%
which implies there is no fixed point in $L_{++}$.

\bigskip

The severe restrictions (\ref{eq:non2}) and (\ref{eq:non3}) guarantee the
everywhere non-existence of fixed points. We next provide a weaker result
which rules out fixed points in the cone $\mathcal{K}(C)$ only.

\begin{proposition}
\label{prop:nonexistweak} Let $\mathcal{M}$ be monotone and subhomogeneous.
For $r\in \left( 0,1\right) ,$ under the condition 
\begin{equation*}
\sup \frac{\mathcal{M}\left( C\right) }{C}=\infty
\end{equation*}%
no fixed point of $\mathbf{T}$ lies in $\mathcal{K}\left( C\right) .$\newline
For $r<0$ the same result holds whenever 
\begin{equation*}
\inf \frac{\mathcal{M}\left( C\right) }{C}=0.
\end{equation*}
\end{proposition}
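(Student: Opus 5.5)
We argue by contradiction. Suppose $U\in\mathcal{K}(C)$ is a fixed point of $\mathbf{T}$, so that $\lambda C\leq U\leq \mu C$ for some $0<\lambda<\mu<\infty$. The plan is to squeeze $\mathcal{M}(U)$ between multiples of $\mathcal{M}(C)$, using monotonicity and subhomogeneity. We then plug these bounds into the fixed-point identity $U=W(C,\mathcal{M}(U))$ and combine them with the two-sided bounds on $U/C$. Either unboundedness of $\mathcal{M}(C)/C$ (for $r\in(0,1)$) or its infimum being zero (for $r<0$) will contradict those bounds.

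\textbf{Case $r\in(0,1)$.} Without loss of generality $\lambda<1$. By monotonicity and subhomogeneity, $\mathcal{M}(U)\geq \mathcal{M}(\lambda C)\geq \lambda \mathcal{M}(C)$. Since $W$ is increasing in its second argument and $r>0$, the fixed-point identity gives
\begin{equation*}
\mu C\geq U=W(C,\mathcal{M}(U))\geq \beta^{1/r}\mathcal{M}(U)\geq \beta^{1/r}\lambda\,\mathcal{M}(C).
\end{equation*}
Hence $\mathcal{M}(C)/C\leq \mu\,\beta^{-1/r}/\lambda$, which contradicts $\sup \mathcal{M}(C)/C=\infty$. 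The only care needed is that $\mathcal{M}(U)$ is well defined and finite. This follows from (\ref{eq:HHH}) whenever $\mathcal{M}(C)\in L_{+}$. If instead $\mathcal{M}(C)$ is infinite on a non-null set, then $\mathcal{M}(U)\geq\lambda\mathcal{M}(C)$ is infinite there as well, so $U$ cannot be finite and the conclusion holds trivially.

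\textbf{Case $r<0$.} Without loss of generality $\mu>1$. Now $\mathcal{M}(U)\leq \mathcal{M}(\mu C)\leq \mu\mathcal{M}(C)$. For $r<0$ the map $y\mapsto \beta y^{r}$ is decreasing, so
\begin{equation*}
U^{r}=(1-\beta)C^{r}+\beta\mathcal{M}(U)^{r}\geq \beta\mathcal{M}(U)^{r}\geq \beta \mu^{r}\mathcal{M}(C)^{r}.
\end{equation*}
Raising to the negative power $1/r$ reverses the inequality and gives $U\leq \beta^{1/r}\mu\,\mathcal{M}(C)$. Combining this with $U\geq\lambda C$ yields $\mathcal{M}(C)/C\geq \lambda\beta^{-1/r}/\mu>0$, which contradicts $\inf\mathcal{M}(C)/C=0$.

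The degenerate points require checking. Where $\mathcal{M}(C)=0$, the upper bound forces $U=0$, which is impossible since $U\geq\lambda C>0$.

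The main obstacle is purely bookkeeping: handling the inequality reversals for negative $r$, and checking that $\inf$ and $\sup$ are taken in the essential sense adopted in the paper. The bounds above hold a.s. at every $t$, so they transfer directly to the essential infimum and supremum.
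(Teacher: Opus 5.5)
Your proof is correct and follows essentially the same route as the paper. You sandwich $U$ in $\mathcal{K}(C)$ and use monotonicity and subhomogeneity to compare $\mathcal{M}(U)$ with $\mathcal{M}(C)$. You then apply $W(C,y)\geq\beta^{1/r}y$ for $r\in(0,1)$ (resp.\ $W(C,y)\leq\beta^{1/r}y$ for $r<0$) to bound $\mathcal{M}(C)/C$ from above (resp.\ away from zero), which contradicts the hypothesis; your remarks on the degenerate cases where $\mathcal{M}(C)$ is infinite or zero are harmless refinements that the paper leaves implicit.
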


\subsection{Epstein--Zin preferences and related literature\label{section:EZ}%
}

As discussed in the introduction, much of the literature related to ours on
recursive utility with unbounded consumption or unbounded state variables
focuses on Epstein--Zin preferences. We now make the connection to our
framework more explicit. In this case, the certainty equivalent is $\mathcal{%
M}_{t}(X)=(\mathbb{E}_{t}X_{t+1}^{s})^{1/s}$, so the recursive utility
satisfies 
\begin{equation}
U_{t}=\left[ (1-\beta) C_{t}^{r}+\beta \left( \mathbb{E}_{t}U_{t+1}^{s}%
\right) ^{r/s}\right] ^{1/r}.  \label{eq:EPZIN}
\end{equation}

In terms of the economically relevant parameters $\mathrm{EIS}=(1-r)^{-1}$
and $\mathrm{RRA}=1-s$, the exponent ratio above is $r/s=(1-\mathrm{EIS}%
^{-1})/(1-\mathrm{RRA})$. The standard expected utility model corresponds to 
$\mathrm{RRA}=\mathrm{EIS}^{-1}$, while $\mathrm{RRA}>\mathrm{EIS}^{-1}$
corresponds to a preference for early resolution of uncertainty. Markov
specifications are discussed more extensively in the next section.

We have already remarked that the operator $\mathbf{T}$ associated with %
\eqref{eq:EPZIN} is concave for $s\leq 1$. Setting $g_{t+1}:=\log
(C_{t+1}/C_{t})$, we obtain 
\begin{equation*}
\frac{\mathcal{M}_{t}(C)}{C_{t}}=\mathcal{M}_{t}\left( \frac{C_{t+1}}{C_{t}}%
\right) =\mathcal{M}_{t}\left( e^{g_{t+1}}\right) =\left( \mathbb{E}%
_{t}e^{sg_{t+1}}\right) ^{1/s}.
\end{equation*}%
Hence, in the Epstein--Zin case, the abstract conditions in Propositions~\ref%
{prop:unique1} and~\ref{prop:unique2} become restrictions on the conditional
certainty-equivalent growth rate of consumption. This is the same object
that appears, in transformed form, in the existing literature. The main
difference is that those papers often work with transformations of %
\eqref{eq:EPZIN}, whereas our results are stated directly in terms of the
order properties of $\mathbf{T}$.

\bigskip

For instance, setting $u_{t}=U_{t}/C_{t}$ gives 
\begin{equation}
u_{t}=\left[ (1-\beta) +\beta \left( \mathbb{E}_{t}\left[
u_{t+1}^{s}e^{sg_{t+1}}\right] \right) ^{r/s}\right] ^{1/r}=\Gamma
_{r}\left( \left( \mathbb{E}_{t}\left[ u_{t+1}^{s}e^{sg_{t+1}}\right]
\right) ^{1/s}\right) ,  \label{eq:due}
\end{equation}%
where $\Gamma _{r}(t)=[(1-\beta) +\beta t^{r}]^{1/r}$.

If $U$ solves \eqref{eq:EPZIN} in $\mathcal{K}(C)$, then $u$ solves %
\eqref{eq:due} in the cone $\mathcal{K}(\mathbf{1})$. The operator associated with %
\eqref{eq:due} inherits the relevant order properties of $\mathbf{T}$ and
will be useful below; in fact, it is closely related to the operator $%
\mathbf{A}$ of the next section.

Alternatively, we can make a further change of variable by setting 
\begin{equation*}
\psi _{t}=u_{t}^{r}=\left( \frac{U_{t}}{C_{t}}\right) ^{r}.
\end{equation*}%
The variable $\psi _{t}/(1-\beta )$ is known in the literature as the \emph{%
wealth-consumption ratio }(see \cite{EpsteinZin1989}). This gives 
\begin{equation*}
\psi _{t}=(1-\beta )+\beta \left( \mathbb{E}_{t}\left[ \psi
_{t+1}^{s/r}e^{sg_{t+1}}\right] \right) ^{r/s}.
\end{equation*}%
This formulation is used, in different forms, by Hansen and Scheinkman~\cite%
{HansenScheinkman2012} and Pohl, Schmedders, and Wilms~\cite%
{PohlSchmeddersWilms2024}. It is especially convenient because it can also
handle parameter regions not directly covered by the concavity argument
above.

For example, let $r\in (0,1)$ and $s>r$. The associated operator 
\begin{equation*}
\mathbf{S}(\psi )=(1-\beta )+\beta \left( \mathbb{E}_{t}\left[ \psi
^{s/r}e^{sg_{t+1}}\right] \right) ^{r/s}
\end{equation*}%
is monotone and convex. Set 
\begin{equation*}
M=\sup_{t,\omega }\left( \mathbb{E}_{t}e^{sg_{t+1}}\right) ^{r/s}.
\end{equation*}%
If $M<\beta ^{-1}$, then $\mathbf{S}$ maps any interval $[(1-\beta ),\gamma
] $ into itself whenever 
\begin{equation*}
\gamma \geq \overline{\gamma }:=\frac{1-\beta }{1-\beta M}.
\end{equation*}%
Moreover, for every $\gamma >\overline{\gamma }$, 
\begin{equation*}
\gamma -\mathbf{S}(\gamma )\geq \gamma \lbrack 1-\beta M]-(1-\beta
)=(1-\beta )\left( \frac{\gamma }{\overline{\gamma }}-1\right) >0.
\end{equation*}%
Thus no fixed point lies on the \emph{upper perimeter} of the interval $%
[(1-\beta ),\gamma ]$. By the dual formulation of Theorem~1 of \cite%
{marmont1}, it follows that the wealth-consumption recursion has a unique
solution in $\mathcal{K}(\mathbf{1})$ for $r\in (0,1)$ and $s>r$.

\bigskip

Another equivalent transformation is used by Borovi\v{c}ka and Stachurski~%
\cite[Section II]{BorovickaStachurski2020}. Setting $\theta
_{t}=(U_{t}/C_{t})^{s}$ yields 
\begin{equation*}
\theta _{t}=\left[ (1-\beta )+\beta \left( \mathbb{E}_{t}\left[ \theta
_{t+1}e^{sg_{t+1}}\right] \right) ^{r/s}\right] ^{s/r},
\end{equation*}%
which can be treated in the same way as the formulations above.

\bigskip

If now we consider the unit-EIS case, the Epstein--Zin recursion becomes 
\begin{equation*}
U_{t}=C_{t}^{1-\beta }\left( \mathbb{E}_{t}U_{t+1}^{s}\right) ^{\beta /s}.
\end{equation*}%
Defining $v_{t}=s\log \left( U_{t}/C_{t}\right) $, we have the formulation
that, under a convenient restriction, is carefully studied by Christensen~%
\cite{Christensen2022}: 
\begin{equation}
v_{t}=\beta \log \mathbb{E}_{t}\left[ \exp \left( v_{t+1}+sg_{t+1}\right) %
\right]  \label{eq:AUX}
\end{equation}%
as discussed further ahead.

\section{Application to macroeconomics}

\label{sec:kappa-model} We now specialize the abstract framework developed
above to standard macro-finance environments in which recursive utility is
commonly applied. We begin with a general Markov specification and then
consider four applications. These examples show how the general existence
and uniqueness conditions translate into explicit parameter restrictions.

So far, we have formulated intertemporal stochastic models in terms of an
abstract probability space $(\Omega ,\Sigma ,\mathbb{P})$ with information
given by an increasing filtration $\left( \Sigma _{t}\right) _{t\in T}.$ In
the following models the uncertainty is represented by a Markov process. It is thus convenient to provide a brief account of the notation
adopted in such a setting\footnote{%
An excellent description of the present subject can be found in \cite%
{StokeyLucasPrescott1989}. For a more complete mathematical analysis we
refer to \cite[Ch. V]{neveu}.}.

Let $\mathcal{X}$ be a Borel set of a Polish space and set $\Omega =\mathcal{%
X}^{\infty }$. The elements of $\mathcal{X}^{\infty }$ are denoted by $%
\omega \equiv \xi ^{\infty }=\left( \xi _{t}\right) _{t\geq 0}$. Denote by $%
\pi _{t}\left( \omega \right) =\xi _{t}$ the canonical projections for $%
t\geq 0$ and endow $\mathcal{X}^{\infty }$ with the canonical filtration $%
\left( \Sigma _{t}\right) _{t\in T}$ where $\Sigma _{t}$ is the least $%
\sigma $-algebra generated by the projections $\left( \pi _{s}\right)
_{0\leq s\leq t}$. The time $t$-history is given by $\xi ^{t}=\left( \xi
_{0},\xi _{1},...,\xi _{t}\right) $. Therefore an adapted Markov process $X$
is a function $X\left( \omega ,t\right) \equiv X_{t}\left( \xi ^{\infty
}\right) \equiv X\left( \xi ^{t}\right) $ for $t\in T$.

The probability $\mathbb{P}$ over $\Sigma $ is generated by an initial
distribution $\pi _{0}$ on $\mathcal{X}$ and the stationary transition
kernel $Q\left( \xi ,A\right) $ with the interpretation: 
\begin{equation*}
Q(\xi _{t},A)=\mathbb{P}\left\{ \xi _{t+1}\in A\text{ \ }\left\vert \text{ \ 
}\xi _{t}\right. \right\}
\end{equation*}%
for $A\in \mathcal{B}(\mathcal{X})$. In this setting, the recursive utility (%
\ref{eq:pref}) can be written as 
\begin{equation*}
U\left( \xi ^{t}\right) =\left[ (1-\beta )C^{r}\left( \xi ^{t}\right) +\beta
\left( \mathcal{M}_{t}(U\left( \xi ^{t},\cdot \right) )\right) ^{r}\right]
^{1/r}
\end{equation*}%
for all $t\geq 0$ and $\xi ^{t+1}\in \mathcal{X}^{t+1}.$ The associated
operator $\mathbf{T}$ is 
\begin{equation*}
\left( \mathbf{T}X\right) \left( \xi ^{t}\right) =W\left( C\left( \xi
^{t}\right) ,\mathcal{M}_{t}(X\left( \xi ^{t},\cdot \right) \right)
\end{equation*}%
It is convenient in this setting to restrict slightly the functional
framework. Unlike what was postulated in the previous sections, here the
measurable functions are functions and not equivalence classes.\footnote{%
As a matter of fact, we can also use here the space of classes of functions $%
L^{1}\left( \delta \right) $. But for some later constructions (for
instance, the function $h$ and the operator $\mathbf{A}$) we should require
the existence of a stationary distribution for the underlying Markov process.%
} This will be emphasized by using the calligraphic letters for the spaces
involved. For instance, $\mathcal{L}_{+}$ is the cone of the nonnegative,
finitely valued and adapted processes.

The analogue of the space $L^{1}\left( \delta \right) $ will be $\mathcal{L}%
^{1}\left( \delta \right) \subset \mathcal{L}$. A process $X$ belongs to $%
\mathcal{L}^{1}\left( \delta \right) $ if all the functions $X_{t}$ are
integrable, i.e., if 
\begin{equation*}
\left\Vert X_{t}\right\Vert =\int \left\vert X\left( \xi ^{t}\right)
\right\vert d\mathbb{P<\infty }
\end{equation*}%
and 
\begin{equation*}
\left\Vert X\right\Vert =\sum_{t\in T}\left\Vert X_{t}\right\Vert \delta
^{t}<\infty
\end{equation*}

Endowed with the point-wise order, the space $\mathcal{L}^{1}\left( \delta
\right) $ is Dedekind $\sigma $-complete and its positive cone is regular.
This suffices to apply the results of existence and uniqueness stated in
the previous section. Note, incidentally, that $\mathcal{L}^{1}\left( \delta
\right) $ is not Dedekind complete.

\subsection{Epstein--Zin certainty equivalent}

\label{subsec:kappa-general} Let us formulate the Epstein--Zin preferences,
already introduced in subsection \ref{section:EZ}, in a Markovian setting.

The state of the economy is given by pairs $\xi _{t}=\left(
z_{t},y_{t}\right) $ with $z_{t}\in \mathcal{Z}$ and $y_{t}\in \mathcal{Y}$,
where $\mathcal{Z}$ and $\mathcal{Y}$ are Borel sets of Polish spaces,
equipped with their Borel $\sigma $-algebras $\mathcal{B}\left( \mathcal{Z}%
\right) $ and $\mathcal{B}\left( \mathcal{Y}\right) $ respectively. The
variable $z$ represents the exogenous state of the system, while $y$ is the
shock due to the technological innovation.

Assume that the process $\xi _{t}$ is Markov and that the innovation
sequence $(y_{t})_{t\geq 1}$ is i.i.d. with law $\nu $, independent of $%
(z_{t})_{t\geq 0}$. Therefore, the probability $\mathbb{P}$ over $\Sigma $
is generated by the stationary transition kernel 
\begin{equation*}
\widetilde{Q}\left( \xi ,A\times B\right) =Q(z,A)\times \nu \left( B\right)
\end{equation*}%
with the following interpretation: given $\xi _{t}=\left( z_{t},y_{t}\right) 
$, 
\begin{equation*}
Q(z_{t},A)\times \nu \left( B\right) =\mathbb{P}\left\{ \xi _{t+1}\in
A\times B\right\} =\mathbb{P}\left\{ z_{t+1}\in A\right\} \times \nu \left(
B\right)
\end{equation*}%
for every $A\in \mathcal{B}(\mathcal{Z})$ and $B\in \mathcal{B}(\mathcal{Y})$%
.

Assume that consumption growth is given by the following specification (see
Hansen and Scheinkman~\cite{HansenScheinkman2012} and Borovi\v{c}ka and
Stachurski~\cite{BorovickaStachurski2020}) 
\begin{equation}
g_{t+1}:=\log \frac{C_{t+1}}{C_{t}}=\kappa (z_{t},z_{t+1},y_{t+1}),\qquad
t\geq 0,  \label{eq:XXX}
\end{equation}%
where $\kappa \colon \mathcal{Z}\times \mathcal{Z}\times \mathcal{Y}%
\rightarrow \mathbb{R}$ is Borel measurable. Given $\xi=(z,y)$ and $%
\xi^{\prime }=(z^{\prime },y^{\prime })$, we write $\kappa(\xi,\xi^{\prime
}) $ for $\kappa(z,z^{\prime },y^{\prime })$.

The Kreps--Porteus certainty equivalent will be 
\begin{equation}
\begin{aligned} \mathcal{M}_{t}\!\left(X(\xi^{t},\xi_{t+1})\right) &:=
\left( \mathbb{E}^{Q}\!\left[ X^{s}(\xi^{t},\xi_{t+1}) \mid \xi^{t}
\right]\right)^{1/s} \\ &= \left[ \int_{\mathcal{Z}}\int_{\mathcal{Y}}
X^{s}(\xi^{t},z^{\prime},y^{\prime})
\,\nu(dy^{\prime})\,Q(z_{t},dz^{\prime}) \right]^{1/s}. \end{aligned}
\label{eq:EZce}
\end{equation}

Given a consumption process $C\in \mathcal{L}_{++}$, the operator $\mathbf{T}%
\equiv \mathbf{T}_{C}$ is 
\begin{equation*}
\left( \mathbf{T}\left( X\right) \right) _{t}=W\left( C_{t},\mathcal{M}%
_{t}\left( X_{t+1}\right) \right)
\end{equation*}%
for all $t\in T$ and all infinite histories $\xi ^{\infty }=\left( \xi
_{0},\xi _{1},....\right) ,$ whenever $\mathcal{M}\left( X\right) \in 
\mathcal{L}_{+}$. The Epstein--Zin recursive utility is a fixed point $U$ of 
$\mathbf{T}$, i.e., $U=\left( U_{t}\right) _{t\in T}$ solves 
\begin{equation*}
U\left( \xi ^{t}\right) =\left[ (1-\beta) C^{r}\left( \xi ^{t}\right) +
\beta \left[ \int_{\mathcal{Z}}\int_{\mathcal{Y}}U^{s}(\xi ^{t},z^{\prime
},y^{\prime })\,\nu (dy^{\prime })\,Q(z_{t},dz^{\prime })\right] ^{r/s}%
\right] ^{1/r}
\end{equation*}%
for all $t\in T$ and all $\xi ^{\infty }.$

Before proceeding in the analysis, let us introduce a more general class of
certainty equivalents.

\subsection{Positively homogeneous and stationary certainty equivalents}

\label{sec:markovstationary}

In the Epstein--Zin case, the certainty equivalents are
positively homogeneous, i.e., $\mathcal{M}_{t}(\lambda X_{t+1})=\lambda 
\mathcal{M}_{t}(X_{t+1})$ for all scalars $\lambda >0$. In fact, the
following stronger property, which we call \emph{conditional positive
homogeneity}, holds. That is, 
\begin{equation*}
\mathcal{M}_{t}(\lambda _{t}X_{t+1})=\lambda _{t}\mathcal{M}_{t}(X_{t+1})%
\text{ \ }\forall t\in T,
\end{equation*}%
for every admissible $X\in \mathcal{L}_{++}$ and every strictly positive $%
\Sigma _{t}$-measurable function $\lambda _{t}$.

\bigskip

Another property exhibited by the above certainty equivalents is
stationarity. The certainty equivalent $\mathcal{M}$ is called \emph{%
stationary Markov} if for every admissible nonnegative Borel function $F:%
\mathcal{X}\times \mathcal{X}\longrightarrow \mathbb{R}_{+}$ and for every $%
s,t\geq 0$, we have%
\begin{equation*}
\pi _{t}(\xi ^{\infty })=\pi _{s}(\widetilde{\xi }^{\infty })\quad
\Longrightarrow \quad \mathcal{M}_{t}\left( F\left( \xi _{t},\cdot \right)
\right) =\mathcal{M}_{s}\!\left( F(\widetilde{\xi }_{s},\cdot )\right)
\end{equation*}%
Thus, the evaluation of a one-period Markov payoff depends on neither the
calendar date nor the preceding history once the current state is fixed.

Under these two additional assumptions, together with the growth condition (%
\ref{eq:XXX}), a particularly important result can be formulated.

To see this, observe that if $\log \left( C_{t+1}/C_{t}\right) =\kappa (\xi
_{t},\xi _{t+1}),$ then there exists a function $h:\mathcal{X}\rightarrow 
\mathbb{R}\cup \{+\infty \}$ such that 
\begin{equation}
\mathcal{M}_{t}(C_{t+1})=C_{t}h(\xi _{t}).  \label{eq:VVV}
\end{equation}%
Indeed, conditional positive homogeneity gives 
\begin{equation*}
\mathcal{M}_{t}(C_{t+1})=C_{t}\mathcal{M}_{t}\left( e^{\kappa (\xi
_{t},\cdot )}\right) ,
\end{equation*}%
while stationarity implies that $\mathcal{M}_{t}\left( e^{\kappa (\xi ,\cdot
)}\right) $ depends on the current state $\xi $ but not on calendar time.
Hence, defining 
\begin{equation*}
h(\xi ):=\mathcal{M}_{t}\left( e^{\kappa (\xi ,\cdot )}\right) ,
\end{equation*}%
yields~\eqref{eq:VVV}.

For instance, in the Epstein--Zin case the function $h$ is given by 
\begin{equation}
h(z):=\left\{ \int_{\mathcal{Z}}\left[ \int_{\mathcal{Y}}e^{s\kappa
(z,z^{\prime },y^{\prime })}\,\nu (dy^{\prime })\right] Q(z,dz^{\prime
})\right\} ^{1/s}.  \label{eq:GSL}
\end{equation}%
Consequently, using \eqref{eq:VVV} we have 
\begin{equation}
\mathcal{M}(C)\in \mathcal{L}_{+}\quad \Longleftrightarrow \quad h(z)<\infty
\quad \text{for every }z\in \mathcal{Z}.  \label{eq:CCC}
\end{equation}%
Suppose now that $C\in \mathcal{L}_{++}^{1}(\delta )$ for some $\delta >0$,
that \eqref{eq:CCC} holds, and $s\in (-\infty ,0)\cup (0,1]$. Since the
Epstein--Zin certainty equivalent is order concave, Propositions~\ref%
{prop:unique1} and \ref{prop:unique2} imply uniqueness of recursive utility
in $\mathcal{K}(C)$ under the following conditions:

\begin{itemize}
\item for $r<0$: 
\begin{equation*}
\inf_{z\in \mathcal{Z}}h(z)>\beta ^{1/\lvert r\rvert }.
\end{equation*}

\item for $r\in (0,1)$: 
\begin{equation*}
\sup_{z\in \mathcal{Z}}h(z)<\beta ^{-1/r}\qquad \text{and}\qquad \inf_{z\in 
\mathcal{Z}}h(z)>0.
\end{equation*}
\end{itemize}

Notice that the condition $\inf_{z\in \mathcal{Z}}h(z)>0$ is the
\textquotedblleft Markovian\textquotedblright\ translation of the condition $%
\mathcal{M}(C)/C\gg 0$ postulated in Proposition \ref{prop:unique2}.

The two properties of homogeneity and stationarity give rise to another
significant property regarding the structure of the fixed points.

We have seen that the utilities $U_{t}$ in the Epstein--Zin equation (\ref%
{eq:EPZIN}) admit a multiplicative representation $U_{t}=u_{t}C_{t}$ where $%
u_{t}$ solves (\ref{eq:due}). This representation turns out to be true in
general under the assumptions of conditional homogeneity and stationarity of
the operator $\mathcal{M}$. In fact, from the equation 
\begin{equation*}
U\left( \xi ^{t}\right) =W\left( C\left( \xi ^{t}\right) ,\mathcal{M}%
_{t}\left( U\left( \xi ^{t},\cdot \right) \right) \right)
\end{equation*}%
it follows 
\begin{equation*}
\frac{U\left( \xi ^{t}\right) }{C\left( \xi ^{t}\right) }=W\left( 1,\mathcal{%
M}_{t}\left( \frac{U\left( \xi ^{t},\cdot \right) }{C\left( \xi ^{t}\right) }%
\right) \right) =W\left( 1,\mathcal{M}_{t}\left( \frac{U\left( \xi
^{t},\cdot \right) }{C\left( \xi ^{t},\cdot \right) }e^{\kappa (\xi
_{t},\cdot )}\right) \right)
\end{equation*}

Consequently, the substitution $U\left( \xi ^{t}\right) =u\left( \xi
^{t}\right) C\left( \xi ^{t}\right) $ leads to the recursive equation%
\begin{eqnarray}
u\left( \xi ^{t}\right) &=&\Gamma _{r}\left( \mathcal{M}_{t}\left( u\left(
\xi ^{t},\cdot \right) e^{\kappa (\xi _{t},\cdot )}\right) \right)
\label{eq:EEM} \\
\Gamma _{r}\left( t\right) &=&\left( (1-\beta )+\beta t^{r}\right) ^{1/r} 
\notag
\end{eqnarray}%
We conjecture that, by virtue of the Markovian structure, the utility $U$
takes the \emph{separable} form 
\begin{equation}
U\left( \xi ^{t}\right) =\varphi \left( \xi _{t}\right) C\left( \xi
^{t}\right)  \label{eq:TZZ}
\end{equation}%
Namely, $u\left( \xi ^{t}\right) =\varphi \left( \xi _{t}\right) $ for a
certain function $\varphi :\mathcal{X}\rightarrow \mathbb{R}$. Said
differently, in view of (\ref{eq:EEM}), $\varphi $ solves%
\begin{equation}
\varphi \left( \xi _{t}\right) =\Gamma _{r}\left( \mathcal{M}_{t}\left(
\varphi \left( \cdot \right) e^{\kappa (\xi _{t},\cdot )}\right) \right)
\label{eq:BQY}
\end{equation}

To check this guess, let us introduce the space $B\left( \mathcal{X},%
\mathcal{B}(\mathcal{X})\right) $ of the bounded measurable functions
defined on $\mathcal{X}$. Equipped with the supnorm $\left\Vert \varphi
\right\Vert =\sup_{\mathcal{X}}\left\vert \varphi \left( \xi \right)
\right\vert $, this space is a Dedekind $\sigma $-complete Banach lattice.

For the sake of simplicity, set 
\begin{equation*}
\mathcal{M}_{t}\!\left( e^{\kappa (\xi _{t},\cdot )}\varphi (\cdot )\right)
=(\mathcal{H}\varphi )(\xi _{t}).
\end{equation*}%
Notice that $\mathcal{H}1=h$.

In light of equation (\ref{eq:BQY}), define the operator 
\begin{equation*}
\mathbf{A}:B_{+}\left( \mathcal{X},\mathcal{B}(\mathcal{X})\right)
\rightarrow B_{+}\left( \mathcal{X},\mathcal{B}(\mathcal{X})\right)
\end{equation*}

by%
\begin{equation*}
(\mathbf{A}\varphi )(\xi )=\Gamma _{r}\left( \mathcal{H}\varphi (\xi
)\right) 
\end{equation*}%
where 
\begin{equation*}
\Gamma _{r}\left( t\right) =\left( (1-\beta )+\beta t^{r}\right) ^{1/r}
\end{equation*}%
In other words, we assume that $\mathcal{H}\varphi $ depends only on $z$ for
every admissible $\varphi $, and identify $\mathbf{A}$ with its restriction
to functions of $z$. For instance, in the Epstein--Zin case, the operator $%
\mathbf{A}$ takes the explicit form 
\begin{equation*}
(\mathbf{A}\varphi )(z)=\left\{ (1-\beta )+\beta \left[ \int_{\mathcal{Z}%
}\int_{\mathcal{Y}}e^{s\kappa (z,z^{\prime },y^{\prime })}\varphi (z^{\prime
})^{s}\,\nu (dy^{\prime })\,Q(z,dz^{\prime })\right] ^{r/s}\right\} ^{1/r}.
\end{equation*}%
Below we list the main properties of $\Gamma $ and $\mathbf{A}$ that can be
easily checked.

\begin{itemize}
\item $\Gamma $ is increasing for all $r\in \mathbb{R}$ and concave for all $%
r\leq 1$

\item $\mathbf{A}$ is concave and increasing, provided $\mathcal{M}$ is
concave and increasing

\item $\mathbf{A}$ is $\sigma $-order continuous, if $\mathcal{M}$ is $%
\sigma $-order continuous

\item If $r\in \left( 0,1\right) $, then $\mathbf{A}\varphi \geq (1-\beta)
^{1/r} $ for all $\varphi \geq 0$. If $r<0$, then $0\leq \mathbf{A}\varphi
\leq (1-\beta) ^{1/r}$.
\end{itemize}

The next fixed-point result holds under assumptions (i)--(iii), together
with conditional homogeneity and stationarity. It follows directly from the
properties listed above and the existence and uniqueness arguments of
Section~\ref{sec:exandunique}.

\begin{proposition}
\label{prop:operatorA} Let $C\in \mathcal{L}_{++}^{1}(\delta )$ for some $%
\delta >0$. Under assumptions (i)--(iii), conditional homogeneity and
stationarity:

\begin{enumerate}
\item for $0<r<1$, set $\overline{h}:=\sup_{z\in \mathcal{Z}}h(z)$. Under
the condition 
\begin{equation}
\overline{h}<\beta ^{-1/r},  \label{eq;MMM}
\end{equation}%
 $\mathbf{A}$ maps the interval $[(1-\beta )^{1/r},b]$ into itself and
has a least and a greatest fixed point there, for every 
\begin{equation*}
b\geq \overline{b}:=\left( \frac{1-\beta }{1-\beta \overline{h}^{\,r}}%
\right) ^{1/r},
\end{equation*}%
In addition, if $\mathcal{M}$ is order concave, then the fixed point is
unique in $B_{++}(\mathcal{X},\mathcal{B}(\mathcal{X}))$.

\item for $r<0$, set $\underline{h}:=\inf_{z\in \mathcal{Z}}h(z)$ and
suppose $h(z)<\infty $ for every $z\in \mathcal{Z}$. Under 
\begin{equation*}
\underline{h}>\beta ^{1/|r|},
\end{equation*}%
 $\mathbf{A}$ maps $[b,(1-\beta )^{1/r}]$ into itself and has a least
and a greatest fixed point there, for every 
\begin{equation*}
0<b\leq \underline{b}:=\left( \frac{1-\beta }{1-\beta \underline{h}^{\,r}}%
\right) ^{1/r},
\end{equation*}
If $\mathcal{M}$ is order concave, the fixed point is unique in $\mathcal{K}%
(1)$.
\end{enumerate}
\end{proposition}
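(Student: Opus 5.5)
We prove Proposition~\ref{prop:operatorA} by transplanting the interval arguments of Section~\ref{sec:exandunique} to the operator $\mathbf{A}$ on the Dedekind $\sigma$-complete lattice $B(\mathcal{X},\mathcal{B}(\mathcal{X}))$.

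\textbf{Invariant interval.} For a function $\varphi$ with $a\le\varphi\le b$ (constants), conditional homogeneity and monotonicity of $\mathcal{M}$ give
\[
a\,h\le\mathcal{H}\varphi\le b\,h,
\qquad\text{since } \mathcal{H}(c\mathbf{1})=c\,\mathcal{H}\mathbf{1}=c\,h .
\]
Consider first $0<r<1$. Because $\Gamma_r$ is increasing, we get $\mathbf{A}\varphi\ge(1-\beta)^{1/r}$, and
\[
\mathbf{A}\varphi\le\bigl((1-\beta)+\beta b^r\overline{h}^{\,r}\bigr)^{1/r}.
\]
The right-hand side is at most $b$ exactly when $b^r(1-\beta\overline{h}^{\,r})\ge 1-\beta$, that is, when $b\ge\overline{b}$. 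This uses~\eqref{eq;MMM}, which makes $1-\beta\overline{h}^{\,r}>0$.

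For $r<0$ the map $t\mapsto t^r$ is decreasing. With $\varphi\ge b$ we have $\mathcal{H}\varphi\ge b\,\underline{h}$, hence
\[
\mathbf{A}\varphi\ge\bigl((1-\beta)+\beta b^r\underline{h}^{\,r}\bigr)^{1/r}.
\]
This is at least $b$ iff $b^r(1-\beta\underline{h}^{\,r})\le 1-\beta$. Here $1-\beta\underline{h}^{\,r}>0$ because $\underline{h}>\beta^{1/|r|}$, so the condition reads $b\le\underline{b}$. The upper bound $\mathbf{A}\varphi\le(1-\beta)^{1/r}$ holds automatically. Finiteness of $h$ ensures $\mathcal{H}\varphi$ is finite, so $\mathbf{A}$ is well defined. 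Note also that $\mathbf{A}\varphi$ is measurable and bounded, so the interval is invariant in $B_+$.

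\textbf{Least and greatest fixed points.} On the order interval $[a,b]$, $\mathbf{A}$ is monotone and $\sigma$-order continuous. We run the Kantorovich iteration as in Proposition~\ref{prop:exist-neg}. The sequence $\mathbf{A}^n a$ increases to some $\varphi_*$; by $\sigma$-completeness this supremum exists as a pointwise limit of measurable functions. Order continuity then gives $\mathbf{A}\varphi_*=\varphi_*$. Dually, $\mathbf{A}^n b\downarrow\varphi^*$ gives a fixed point. Any fixed point $\psi$ in $[a,b]$ satisfies $\mathbf{A}^n a\le\psi\le\mathbf{A}^n b$, so $\varphi_*$ is the least and $\varphi^*$ the greatest fixed point.

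\textbf{Uniqueness.} This is the main step. We apply \cite[Theorem~1]{marmont1} exactly as in Propositions~\ref{prop:unique1} and~\ref{prop:unique2}. $\mathbf{A}$ is order concave, because $\Gamma_r$ is concave and increasing and $\mathcal{H}$ is order concave. It remains to verify that no point of the lower perimeter is mapped back onto it.

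\emph{Case $0<r<1$.} The lower perimeter of $[(1-\beta)^{1/r},b]$ consists of $\varphi$ with $\inf\varphi=(1-\beta)^{1/r}$. We need $\inf\mathbf{A}\varphi>(1-\beta)^{1/r}$, which amounts to $\inf\mathcal{H}\varphi>0$, which in turn follows from $\inf h>0$. This corresponds to the extra hypothesis $\mathcal{M}(C)/C\gg0$ of Proposition~\ref{prop:unique2}. The statement omits it, so I expect this to be the delicate point. One would either add $\inf h>0$, or argue that $\mathbf{A}\varphi\ge\mathbf{A}((1-\beta)^{1/r})$ strictly exceeds $(1-\beta)^{1/r}$ uniformly whenever $h$ is bounded below.

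\emph{Case $r<0$.} Take $b<\underline{b}$. The strict inequality in the invariance computation gives $\mathbf{A}\varphi\ge\mathbf{A}b\ge b+\eta$ for some $\eta>0$, so the lower perimeter $\{\inf\varphi=b\}$ is mapped strictly inside. Hence the fixed point is unique in $[b,(1-\beta)^{1/r}]$ for every small $b$.

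\textbf{Extending to the whole cone.} Any $\varphi\in\mathcal{K}(1)$, respectively $B_{++}$ with a positive lower bound, lies in some such interval after enlarging it. Every fixed point of $\mathbf{A}$ falls into these intervals by the bounds $\mathbf{A}\varphi\ge(1-\beta)^{1/r}$ (for $r\in(0,1)$) and $\mathbf{A}\varphi\le(1-\beta)^{1/r}$ (for $r<0$). The uniqueness in each nested interval then yields global uniqueness.
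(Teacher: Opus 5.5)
Your invariant-interval computations, the monotone iteration from the endpoints, and the $r<0$ uniqueness argument all follow the paper's proof. For $r<0$ that argument is the strict margin $\mathbf{A}b\ge b+\eta$ for $b<\underline{b}$, then \cite[Theorem~1]{marmont1} on $[b,(1-\beta)^{1/r}]$, then letting $b\downarrow 0$.

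\textbf{The gap is the uniqueness claim for $0<r<1$.} You run the lower-perimeter test on $[(1-\beta)^{1/r},b]$, which does require $\inf h>0$, and then stop at ``add $\inf h>0$''. Your second alternative also presupposes $h$ bounded below. The statement, however, asserts uniqueness in $B_{++}$ with no such hypothesis. This is exactly what the paper needs: right after the proposition it stresses that uniqueness for $\mathbf{A}$ does not require $\inf h>0$, and the rare-disaster application uses it with $\inf h=0$. As written, your proof does not establish this part. The paper's proof only says that the lower-perimeter argument carries over; the Remark following the proposition shows the version that actually works.

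\textbf{The missing idea is to move the lower end of the interval to $0$.} By homogeneity $\mathcal{H}0=0$, so $\mathbf{A}0=\Gamma_r(0)=(1-\beta)^{1/r}$, and $\mathbf{A}$ maps $[0,b]$ into $[(1-\beta)^{1/r},b]$. The lower perimeter of $[0,b]$ is $\{\inf\varphi=0\}$, and it contains no fixed point, whatever $\inf h$ is. This is the $k$-subconcavity observation in the Remark.

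You can also bypass \cite{marmont1} entirely. Let $\varphi,\psi\in B_{++}$ be fixed points and set $\bar\lambda=\sup\{t\in[0,1]:\psi\ge t\varphi\}$. This is positive, because every fixed point satisfies $\psi\ge(1-\beta)^{1/r}$ while $\varphi\le\sup\varphi<\infty$. Conditional homogeneity of $\mathcal{H}$ and concavity of $\Gamma_r$ on $[0,\infty)$ give
\[
\psi\ge\mathbf{A}(\bar\lambda\varphi)=\Gamma_r\bigl(\bar\lambda\,\mathcal{H}\varphi\bigr)\ge\bar\lambda\,\Gamma_r(\mathcal{H}\varphi)+(1-\bar\lambda)\Gamma_r(0)=\bar\lambda\varphi+(1-\bar\lambda)(1-\beta)^{1/r}\ge\Bigl(\bar\lambda+(1-\bar\lambda)\frac{(1-\beta)^{1/r}}{\sup\varphi}\Bigr)\varphi .
\]
This contradicts the maximality of $\bar\lambda$ unless $\bar\lambda=1$, so $\psi\ge\varphi$; by symmetry $\varphi=\psi$. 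The argument does not even use order concavity of $\mathcal{M}$. It yields uniqueness in $B_{++}$ directly, since every fixed point is bounded and at least $(1-\beta)^{1/r}$.
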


Unlike the uniqueness result for $\mathbf{T}$ in Section~\ref%
{sec:exandunique}, uniqueness of the fixed point of $\mathbf{A}$ does not
require the condition $\inf_{z\in \mathcal{Z}}h(z)>0$. Thus, even when $%
\inf_{z\in \mathcal{Z}}h(z)=0$, $\mathbf{A}$ may have a unique fixed point
while $\mathbf{T}$ may admit multiple fixed points. Hence, with non-unit
EIS, separability provides a natural selection criterion: the separable
recursive utility is unique in $\mathcal{K}(C)$ even when recursive utility
more generally need not be.

\bigskip

\textbf{Remark. }Let us outline an alternative route to handle the operator $%
\mathbf{A}$. Suppose, for simplicity, that $r\in \left( 0,1\right) $ and $%
\mathbf{A} $ is concave. The lower bound $\mathbf{A}\varphi \geq (1-\beta
)^{1/r}$ implies that 
\begin{equation*}
\mathbf{A}\left( \alpha \varphi \right) =\mathbf{A}\left( \alpha \varphi +%
\overline{\alpha }0\right) \geq \alpha \mathbf{A}\left( \varphi \right) +%
\overline{\alpha }\mathbf{A}\left( 0\right) \geq \alpha \mathbf{A}\left(
\varphi \right) +\overline{\alpha }(1-\beta )^{1/r}
\end{equation*}%
The property $\mathbf{A}\left( \alpha \varphi \right) \geq \alpha \mathbf{A}%
\left( \varphi \right) +\overline{\alpha }k$, for every $\alpha \in \left(
0,1\right) $ and for some $k>0,$ is called $k$\emph{-subconcavity} (at $0$);
see \cite{marmont1}. This in turn implies that the self-map $\mathbf{A}:%
\left[ 0,b\right] \rightarrow \left[ 0,b\right] $ for any given $b\geq 
\overline{b}$ is $p$-subhomogeneous for some $p\in \left( 0,1\right) $; see
Definition \ref{def:ssub} and \cite[Proposition 6]{marmont1}. Namely,%
\begin{equation*}
\mathbf{A}\left( \alpha \varphi \right) \geq \alpha ^{p}\mathbf{A}\left(
\varphi \right) \text{ for }\alpha \in \left( 0,1\right)
\end{equation*}%
If we introduce  the Thompson
distance $d_{T}$ (see \cite{tom} and \cite{marmont1}) in the cone $\mathcal{K}\left( 1\right) $, the map $\mathbf{A}:%
\left[ 0,b\right] \rightarrow \left[ 0,b\right] $ turns out to be a $p$%
-contraction, i.e. $d_{T}\left( \mathbf{A}\varphi _{1},\mathbf{A}\varphi
_{2}\right) \leq pd_{T}\left( \varphi _{1},\varphi _{2}\right).$ One can
then deduce the existence of a unique fixed point for the map $\mathbf{A}$
and also the convergence of the iterates in the norm $\left\Vert \cdot
\right\Vert .$

\bigskip

Proposition \ref{prop:operatorA} shows that our conjecture about the
multiplicative relation (\ref{eq:TZZ}) is true. This is formalized in the
next proposition, whose trivial proof is omitted.

\begin{proposition}
\label{prop:separable-A-implies-fixed-point} Let $0\neq r<1$, $C\in \mathcal{%
L}_{++}^{1}\left( \delta \right) $ and $\overline{\varphi }$ be a fixed
point of $\mathbf{A}$. Then the process $U\in \mathcal{L}_{++}^{1}\left(
\delta \right) $ defined by $U_{t}=\overline{\varphi }\left( z_{t}\right)
C_{t}$ for every $t\geq 0$ is a fixed point of $\mathbf{T}$.
\end{proposition}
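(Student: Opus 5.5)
The plan is a direct substitution into the recursion. Two things have to be checked: the fixed-point identity $U=\mathbf{T}U$ pathwise, and the membership $U\in\mathcal{L}_{++}^{1}(\delta)$.

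\emph{Step 1: the identity.} Fix $t$ and a history $\xi^{t}$. Using $C_{t+1}=C_{t}e^{\kappa(\xi_{t},\xi_{t+1})}$, I write
\[
U(\xi^{t},\cdot)=\overline{\varphi}(\cdot)\,C_{t}\,e^{\kappa(\xi_{t},\cdot)} .
\]
Conditional positive homogeneity (with the $\Sigma_t$-measurable factor $\lambda_t=C_t>0$) gives
\[
\mathcal{M}_{t}(U_{t+1})=C_{t}\,\mathcal{M}_{t}\bigl(\overline{\varphi}(\cdot)e^{\kappa(\xi_{t},\cdot)}\bigr).
\]
Stationarity identifies the right-hand factor with $(\mathcal{H}\overline{\varphi})(\xi_{t})$, which depends only on $z_{t}$ by the standing convention defining $\mathbf{A}$. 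Since $W$ is positively homogeneous,
\[
(\mathbf{T}U)_{t}=W\bigl(C_{t},C_{t}\mathcal{H}\overline{\varphi}(z_{t})\bigr)=C_{t}\,W\bigl(1,\mathcal{H}\overline{\varphi}(z_{t})\bigr)=C_{t}\,\Gamma_{r}\bigl(\mathcal{H}\overline{\varphi}(z_{t})\bigr).
\]
The fixed-point property $\mathbf{A}\overline{\varphi}=\overline{\varphi}$ then turns this into $C_{t}\overline{\varphi}(z_{t})=U_{t}$, which is the claimed identity.

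\emph{Step 2: admissibility.} I need $\mathcal{M}(U)\in\mathcal{L}_{+}$ and $U\in\mathcal{L}^{1}_{++}(\delta)$. The first follows because $\mathcal{H}\overline{\varphi}$ is finite: $\overline{\varphi}$ is a fixed point of $\mathbf{A}$ acting on $B_{+}$, and $\Gamma_r$ is finite only at finite arguments when $r>0$. For $r<0$ I would instead invoke the assumption $h<\infty$, together with the bound $\overline{\varphi}\le\|\overline{\varphi}\|$, monotonicity and homogeneity, to get $\mathcal{H}\overline{\varphi}\le\|\overline{\varphi}\|h<\infty$. For membership, $\overline{\varphi}\in B_{+}$ is bounded, so $0\le U_{t}\le\|\overline{\varphi}\|C_{t}$. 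This yields $\|U\|\le\|\overline{\varphi}\|\,\|C\|<\infty$, using only $C\in\mathcal{L}^{1}_{++}(\delta)$.

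\emph{Step 3: strict positivity.} This is the only point needing care, and I expect it to be the main obstacle. For $r\in(0,1)$ it is immediate, since $\overline{\varphi}=\mathbf{A}\overline{\varphi}\ge(1-\beta)^{1/r}>0$. For $r<0$ I need $\overline{\varphi}>0$, i.e. $\overline{\varphi}$ is a fixed point lying in $B_{++}$. This holds because $\Gamma_{r}(t)>0$ for every $t>0$, where $\Gamma_r(t)$ is read as the CES value $W(1,t)$ on $\mathbb{R}_{++}^{2}$. Hence $\overline{\varphi}=\Gamma_{r}(\mathcal{H}\overline{\varphi})>0$ pointwise whenever $\mathcal{H}\overline{\varphi}>0$. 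The degenerate case $\mathcal{H}\overline{\varphi}=0$ is excluded by working, as in Proposition~\ref{prop:operatorA}, with fixed points in $[b,(1-\beta)^{1/r}]$ with $b>0$, or in $\mathcal{K}(1)$. Then $U_{t}\ge bC_{t}>0$, which indeed also places $U$ in $\mathcal{K}(C)$.
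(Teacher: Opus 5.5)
Your Step 1 is exactly the argument the paper omits as trivial. It is the derivation of \eqref{eq:EEM}--\eqref{eq:BQY} run in reverse: conditional positive homogeneity with $\lambda_t=C_t$, stationarity to get $(\mathcal{H}\overline{\varphi})(z_t)$, homogeneity of $W$ so that $W(1,\cdot)=\Gamma_r$, and then $\mathbf{A}\overline{\varphi}=\overline{\varphi}$. Your Step 2 bound $U_t\le\|\overline{\varphi}\|C_t$ is the routine integrability check the paper leaves implicit.

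Step 3, however, handles the $r<0$ case in a way that loses part of the statement. You are right that positivity does not come for free. With the convention $\Gamma_r(0)=0$, which the paper uses when it writes $0\le\mathbf{A}\varphi$ on $B_{+}$, the function $\overline{\varphi}\equiv 0$ is a fixed point of $\mathbf{A}$, and it gives $U=0\notin\mathcal{L}^{1}_{++}(\delta)$. Your observation that $\Gamma_r(t)>0$ for $t>0$ only restates the requirement, because under the fixed-point equation $\overline{\varphi}(z)>0$ if and only if $(\mathcal{H}\overline{\varphi})(z)>0$.

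Your fix, restricting to fixed points with $\overline{\varphi}\ge b>0$ (i.e.\ $\overline{\varphi}\in\mathcal{K}(\mathbf{1})$), proves a strictly weaker result than the paper's. The remark right after the proposition says explicitly that $\overline{\varphi}\in\mathcal{K}(\mathbf{1})$ is \emph{not} required. That remark is meant to cover positive fixed points with $\inf\overline{\varphi}=0$, which yield recursive utilities outside $\mathcal{K}(C)$. (For $r\in(0,1)$ nothing is lost, since any fixed point in $B_{+}$ is automatically $\ge(1-\beta)^{1/r}$.)

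The hypothesis to add is just pointwise positivity, $\overline{\varphi}>0$. The paper tacitly means a positive fixed point; compare ``every such positive fixed point'' in the rare-disaster appendix. With it:
\begin{itemize}
\item $U_t=\overline{\varphi}(z_t)C_t>0$ directly;
\item conditional positive homogeneity is applied to an element of $\mathcal{L}_{++}$, as the paper's definition requires;
\item your Steps 1--2 go through unchanged, with no lower bound $b$.
\end{itemize}
Similarly, for $r<0$ it is more faithful to treat finiteness of $\mathcal{H}\overline{\varphi}$ as part of $\overline{\varphi}$ being admissible for $\mathbf{A}$. Importing $h<\infty$ from Proposition~\ref{prop:operatorA} adds a hypothesis the statement does not contain.
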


Notice that the previous result does not require the fixed point $\varphi =%
\mathbf{A}\varphi $ to belong to $\mathcal{K}(\mathbf{1})$, nor does it
require $\mathcal{M}$ to be concave, for $U_{t}=\varphi (z_{t})C_{t}$ to be
a fixed point of $\mathbf{T}$. Clearly, $U\in \mathcal{K}(C)$ if and only if 
$\varphi \in \mathcal{K}(\mathbf{1})$.

\subsection{Cobb--Douglas aggregator\label{subsec:cobb}}

Finally, let us discuss the case of the Cobb--Douglas aggregator in the
Markov setting. It is now convenient to assume $\alpha +\beta
=1\Longleftrightarrow p=1$. We add the hypotheses of conditional positive
homogeneity and stationarity.

In view of Proposition \ref{prop:cobbdouglas}, if $\alpha +\beta =1$ then
there is a unique solution in $\mathcal{K}\left( C\right) $, provided $C\in
L_{++}^{1}\left( \delta \right) $ and $h\in \mathcal{K}\left( 1\right) $
where $h:\mathcal{Z}\rightarrow \mathbb{R}$ is given by $h(\xi _{t})=%
\mathcal{M}_{t}\left( e^{\kappa (\xi _{t},\cdot )}\right) $.

As we did earlier for the CES aggregator, we associate with $\mathbf{T}$ an
operator $\mathbf{A}$. In fact, if the recursive utility $U$ is a solution
to 
\begin{equation}
U\left( \xi ^{t}\right) =C^{\alpha }\left( \xi ^{t}\right) \left[ \mathcal{M}%
_{t}\left( U\left( \xi ^{t},\cdot \right) \right) \right] ^{\beta }
\label{eq:DDP}
\end{equation}%
the substitution $u\left( \xi ^{t}\right) =U\left( \xi ^{t}\right) /C\left(
\xi ^{t}\right) $ yields the recursion 
\begin{equation*}
u\left( \xi ^{t}\right) =\mathcal{M}_{t}^{\beta }\left[ u\left( \xi
^{t},\cdot \right) e^{\kappa \left( \xi _{t},\cdot \right) }\right]
\end{equation*}%
Hence the operator $\mathbf{A}$ will take the form 
\begin{equation*}
\mathbf{A}\varphi =\mathcal{M}_{t}^{\beta }\left[ \varphi \left( \cdot
\right) e^{\kappa \left( \xi _{t},\cdot \right) }\right]
\end{equation*}

For instance, in the Epstein--Zin case it is 
\begin{equation*}
(\mathbf{A}\varphi )(z)=\left( \int_{\mathcal{Z}}\int_{\mathcal{Y}%
}e^{s\kappa (z,z^{\prime },y^{\prime })}\varphi (z^{\prime })^{s}\,\nu
(dy^{\prime })\,Q(z,dz^{\prime })\right) ^{\beta /s}.
\end{equation*}%
Clearly the operator $\mathbf{A}$ turns out to be $\beta $-subhomogeneous.
Existence and uniqueness are then assured in the cone $\mathcal{K}(\mathbf{1})$,
provided $\varepsilon \leq h\left( z\right) \leq M$ for all $z\in \mathcal{Z}
$.

\bigskip

In the Epstein--Zin case we have seen (eq. \ref{eq:AUX} of Section \ref%
{section:EZ}) the alternative use of the transformation $v\left( \xi
^{t}\right) =s\log \left( U\left( \xi ^{t}\right) /C\left( \xi ^{t}\right)
\right) $ that gives the recursion 
\begin{equation}
v\left( \xi ^{t}\right) =\beta \log \mathbb{E}^{Q}\left[ \exp \left( v\left(
\xi ^{t},\xi _{t+1}\right) +s\kappa \left( \xi _{t},\xi _{t+1}\right)
\right) \left\vert \xi ^{t}\right. \right]  \label{eq:rec1}
\end{equation}%
Looking for the solutions of the kind $v\left( \xi ^{t}\right) \equiv
\varphi \left( \xi _{t}\right) $, we get the recursive equation 
\begin{equation}
\varphi \left( \xi \right) =\beta \log \mathbb{E}^{Q}\left[ \exp \left(
\varphi \left( \xi _{t+1}\right) +s\kappa \left( \xi _{t},\xi _{t+1}\right)
\right) \left\vert \xi _{t}=\xi \right. \right]  \label{eq:rec2}
\end{equation}%
extensively studied by Christensen \cite{Christensen2022}.

To clarify the deep difference between (\ref{eq:rec1}) and (\ref{eq:rec2}),
suppose that $U\in \mathcal{K}\left( C\right) $ is a solution to (\ref%
{eq:DDP}). Hence 
\begin{equation*}
U\left( \xi ^{t}\right) =C\left( \xi ^{t}\right) \overline{\varphi }\left(
\xi _{t}\right)
\end{equation*}%
for some $\overline{\varphi }\in \mathcal{K}\left( 1\right) .$ As discussed
in subsection \ref{subsection Unit} the utility $U$ generates a family of
utilities $V\left( \xi ^{t}\right) =\lambda ^{\beta ^{-t}}C\left( \xi
^{t}\right) \overline{\varphi }\left( \xi _{t}\right) $, with $\lambda \in
\left( 0,1\right] .$ Under the change of variables $v\left( \xi ^{t}\right)
=s\log \left( V\left( \xi ^{t}\right) /C\left( \xi ^{t}\right) \right) $ we
obtain the family of solutions to the recursion (\ref{eq:rec1}): 
\begin{equation*}
v\left( \xi ^{t}\right) =s\beta ^{-t}\log \lambda +s\log \overline{\varphi }%
\left( \xi _{t}\right)
\end{equation*}%
while the unique bounded measurable solution to (\ref{eq:rec2}) is given by $%
\varphi\left(\xi\right) =s\log\overline{\varphi}\left(\xi\right)$.

\bigskip Observe that the condition $h\in\mathcal{K}(\mathbf{1})$, which guarantees
existence and uniqueness in $\mathcal{K}(C)$, may fail even though fixed
points exist in other cones. Section~\ref{sec:bidder-smith-wachter} provides
an economically relevant example of this phenomenon. Moreover, Section~\ref%
{sec:long-run-risk-multiplicity} establishes multiplicity of separable
recursive utilities for the constant-volatility Case~I consumption
specification of Bansal and Yaron~\cite{BansalYaron2004} when the EIS is set
to one.

\subsection{Mehra--Prescott model}

We begin by illustrating our results with the simplest finite-state member
of the general Markov family described in the previous section (see Mehra
and Prescott \cite{MehraPrescott1985}). The state space is the finite set $%
\mathcal{Z}=\{1,\dots ,m\}$ and the Markov chain is generated by the
stochastic matrix $\Pi =\left( \pi _{ij}\right) _{i,j=1}^{m}$. The initial
distribution is the $m$-dimensional row vector $\pi _{0}=\left( \pi
_{0}^{1},\pi _{0}^{2},...,\pi _{0}^{m}\right) $ and 
\begin{equation*}
Q(i,\{j\})=\pi _{ij}.
\end{equation*}%
For each $i\in \{1,\dots ,m\}$, let $\ell _{i}>0$ denote the gross
consumption growth associated with state $i$. The consumption growth depends
only on the next state. Namely, $C_{t+1}=\ell _{i_{t+1}}C_{t}$ and 
\begin{equation*}
g_{t+1}=\kappa (i_{t},i_{t+1})=\log \ell _{i_{t+1}}.
\end{equation*}%
Thus the Mehra--Prescott economy is exactly the general Markov model with a
finite state space and no auxiliary innovation.

Here, we consider the Epstein--Zin certainty equivalent \eqref{eq:EZce}.
Therefore, in this case the function $h:\mathcal{Z}\rightarrow \mathbb{R}$
in (\ref{eq:GSL}) is given by 
\begin{equation*}
h(i)=\left( \sum_{j=1}^{m}\pi _{ij}\ell _{j}^{s}\right) ^{1/s},\qquad i\in 
\mathcal{Z}.
\end{equation*}%
Let $D:=\mathrm{diag}(\ell_1,\dots,\ell_m)$, and let $\mathbf{1}%
=(1,1,\dots,1)$ be the $m$-dimensional row vector of ones. Tedious algebra
leads to the moment condition 
\begin{equation*}
\mathbb{E}[C_{t}]=C_{0}\,\left\langle \pi _{0}(\Pi D)^{t},\mathbf{1}%
\right\rangle \text{ \ }\forall t\geq 0.
\end{equation*}%
Hence 
\begin{equation*}
C\in L_{++}^{1}\left( \delta \right) \iff \sum_{t\geq 0}\delta
^{t}\,\left\langle \pi _{0}(\Pi D)^{t},\mathbf{1}\right\rangle <\infty .
\end{equation*}%
If $\rho \left( A\right) $ denotes the spectral radius of a matrix $A,$ then
the condition $\delta \rho (\Pi D)<1$ suffices to get $C\in L_{++}^{1}\left(
\delta \right) $.

Thus, under the assumption $\delta \rho (\Pi D)<1$, we obtain the following
conditions for existence, uniqueness, and non-existence. If $0<r<1$, then 
\begin{equation*}
\max_i h(i)<\beta^{-1/r}
\end{equation*}
implies existence, uniqueness, and global attractivity, whereas $\min_i
h(i)\geq\beta^{-1/r}$ implies that there is no fixed point in $\mathcal{L}%
_{+}$. If $r<0$, then 
\begin{equation*}
\min_i h(i)>\beta^{1/|r|}
\end{equation*}
implies existence, uniqueness, and global attractivity, whereas $\max_i
h(i)\leq\beta^{1/|r|}$ implies that there is no fixed point in $\mathcal{L}%
_{++}$.

We deduce from Proposition~\ref{prop:separable-A-implies-fixed-point} that
the recursive utility has the separable form $U_{t}=\varphi (i_{t})C_{t},$
where the vector $\varphi =(\varphi _{1},\dots ,\varphi _{m})$ is the unique
solution to the system of non-linear equations 
\begin{equation*}
\varphi _{i}=\left[ (1-\beta )+\beta \left( \sum_{j=1}^{m}\pi _{ij}\ell
_{j}^{s}\varphi _{j}^{s}\right) ^{r/s}\right] ^{1/r},\qquad i=1,\dots ,m.
\end{equation*}%
For instance, in the calibration used by Weil~\cite{Weil1989} for the
Mehra--Prescott model, there are two states, $m=2$, with gross consumption
growth $\ell _{1}=1.054$, $\ell _{2}=0.984$ and transition matrix 
\begin{equation*}
\Pi =\left( 
\begin{array}{cc}
0.43 & 0.57 \\ 
0.57 & 0.43%
\end{array}%
\right) ,\qquad \pi _{0}=\left( 
\begin{array}{cc}
1/2 & 1/2%
\end{array}%
\right) .
\end{equation*}

We use Weil's calibration with $\beta=0.95$ and consider EIS $\in[%
0.1,1)\cup(1,\infty)$ and RRA $\in(0,1)\cup(1,10]$, corresponding to $%
r,s\in[-9,0)\cup(0,1)$. Over this parameter range, 
\begin{equation*}
1.008\leq h(i)\leq1.024,\qquad i=1,2.
\end{equation*}
Moreover, $\beta^{1/|r|}\leq0.95^{1/9}\simeq0.9943$ for $r<0$, while $%
\beta^{-1/r}>0.95^{-1}\simeq1.053$ for $r>0$. Hence the relevant existence
and uniqueness conditions are satisfied for every $r,s\in[-9,0)\cup(0,1)$,
and there is a unique recursive utility $U\in\mathcal{K}(C)$ throughout this
parameter range.

To illustrate what parameters would lead to non-existence, assume that $s=-9$%
. In the high-EIS case, a common $5\%$ increase in the two growth rates,
from $(\ell_1,\ell_2)=(1.054,0.984)$ to $(1.1067,1.0332)$, gives $\min_i
h(i)\simeq1.059$. Hence the non-existence condition holds whenever $\min_i
h(i)\geq0.95^{-1/r}$, that is, for 
\begin{equation*}
0.8964 \simeq -\frac{\log(0.95)}{\log(\min_i h(i))} \leq r<1.
\end{equation*}

In the low-EIS case, a common $3\%$ reduction, from $(1.054,0.984)$ to $%
(1.02238,0.95448)$, gives $\max_i h(i)\simeq0.9874$. Hence the low-EIS
non-existence condition holds whenever $\max_i h(i)\leq0.95^{1/|r|}$, that
is, for $-9\leq r\leq -\frac{\log(0.95)}{\log(\max_i h(i))} \simeq-4.041,$
over the parameter range of interest.

Finally, the unique fixed point can be found by solving the system of
equations 
\begin{equation*}
\begin{array}{rcl}
\varphi_1 & = & \left[ 0.05 + 0.95 \left( 0.43(1.054\varphi_1)^{s} +
0.57(0.984\varphi_2)^{s} \right)^{r/s} \right]^{1/r}, \\[0.5em] 
\varphi_2 & = & \left[ 0.05 + 0.95 \left( 0.57(1.054\varphi_1)^{s} +
0.43(0.984\varphi_2)^{s} \right)^{r/s} \right]^{1/r}.%
\end{array}%
\end{equation*}
For example, when $r=-1$ and $s=-9$, corresponding to $\mathrm{RRA}=10$ and $%
\mathrm{EIS}=1/2$, the unique solution is 
\begin{equation*}
\varphi_1\simeq1.263,\qquad \varphi_2\simeq1.273,
\end{equation*}
so that 
\begin{equation*}
U_t\simeq 
\begin{cases}
1.263\,C_t, & \text{if } i_t=1, \\ 
1.273\,C_t, & \text{if } i_t=2.%
\end{cases}%
\end{equation*}
To be clear, here and below, the approximation sign refers only to the reported numerical values of the coefficients.
\subsection{Hidden-regime consumption growth}

\label{subsec:two-state-hidden-regime} We now specialize the general Markov
consumption-growth model of Section~\ref{subsec:kappa-general} to a
two-state hidden-regime learning economy. Models of this kind are widely
used in macro-finance because they provide a parsimonious way to represent
persistent changes in expected growth and macroeconomic risk. Here we focus
on the model considered by Ju and Miao~\cite{JuMiao2012} (see also Lettau,
Ludvigson and Wachter~\cite{LettauLudvigsonWachter2008}). For this model,
Christensen~\cite{Christensen2022} establishes existence and uniqueness with
unit EIS, while Pohl et al.~\cite{PohlSchmeddersWilms2024} establish
existence with non-unit EIS; uniqueness in the latter case had remained
unresolved.

Consumption growth is determined by a regime that is not observed by the
agent. If the economy is in regime $i\in \{1,2\}$, then consumption growth
is given by 
\begin{equation*}
g_{t+1}=\log \frac{C_{t+1}}{C_{t}}=\bar{g}_{i}+\sigma \varepsilon
_{t+1},\qquad \varepsilon _{t+1}\sim N(0,1),\qquad i=1,2,
\end{equation*}%
where $\bar{g}_{1}>\bar{g}_{2}$ and $\sigma >0$. Thus, $1$ is the
high-growth regime, while regime $2$ is the low-growth regime. Here, the
shocks $(\varepsilon _{t})$ are i.i.d., normal and independent of the regime
process. The regime process follows a Markov chain with transition matrix 
\begin{equation*}
\Lambda =%
\begin{pmatrix}
\lambda _{11} & 1-\lambda _{11} \\ 
\lambda _{21} & 1-\lambda _{21}%
\end{pmatrix}%
,
\end{equation*}%
where $\lambda _{11}$ is the probability of remaining in regime $1$ and $%
\lambda _{21}$ is the probability of moving from regime $2$ to regime $1$;
see equation (13) in \cite{JuMiao2012}. Observe that $C_{t}$ has unbounded
support for every $t\geq 1$.

Because the agent observes consumption growth but not the underlying regime,
the relevant state variable is the agent's belief $\mu _{t}$ that the
next-period regime is the high-growth regime. In particular, the
hidden-regime model of Ju and Miao~\cite{JuMiao2012} can be recast into our
framework as a Markov economy with state $\mu _{t}$, as we formally show in
Supplemental Appendix~\ref{app:two-state-hidden-regime}. In the following,
we work directly with our Markov formulation.

In particular, we can take the state space to be $\mathcal{Z}=[\lambda
_{21},\lambda _{11}]:=K$ and, for every $\mu ^{\prime }\in (\lambda
_{21},\lambda _{11})$, the function $\kappa $ in \eqref{eq:XXX} is given by 
\begin{equation}
\kappa (\mu ,\mu ^{\prime }):=\frac{\bar{g}_{1}+\bar{g}_{2}}{2}+\frac{\sigma
^{2}}{\bar{g}_{1}-\bar{g}_{2}}\left[ \log \frac{\mu ^{\prime }-\lambda _{21}%
}{\lambda _{11}-\mu ^{\prime }}-\log \frac{\mu }{1-\mu }\right] .
\label{eq:kappa-two-state-explicit}
\end{equation}%

For $i=1,2$, let $f_{i}$ denote the density of $N(\bar{g}_{i},\sigma ^{2})$,
and define 
\begin{equation*}
J(\mu ,\mu ^{\prime }):=\frac{\sigma ^{2}(\lambda _{11}-\lambda _{21})}{(%
\bar{g}_{1}-\bar{g}_{2})(\mu ^{\prime }-\lambda _{21})(\lambda _{11}-\mu
^{\prime })}.
\end{equation*}%
For every Borel set $A\subseteq K$, define $Q$ by 
\begin{equation*}
Q(\mu ,A)=\int_{A}f_{\mu }\bigl(\kappa (\mu ,\mu ^{\prime })\bigr)J(\mu ,\mu
^{\prime })\,d\mu ^{\prime },
\end{equation*}%
where 
\begin{equation*}
f_{\mu }(g):=\mu f_{1}(g)+(1-\mu )f_{2}(g).
\end{equation*}

For $i=1,2$, define the regime-$i$ conditional kernel by 
\begin{equation*}
Q_i(\mu,d\mu^{\prime }) := \frac{ f_i\bigl(\kappa(\mu,\mu^{\prime })\bigr) }{
f_\mu\bigl(\kappa(\mu,\mu^{\prime })\bigr) } Q(\mu,d\mu^{\prime }).
\end{equation*}
Thus, $Q_i(\mu,\cdot)$ is the conditional distribution of $\mu_{t+1}$ given
that $\mu_t=\mu$ and that the economy is in regime $i$ in period $t+1$.

Finally, as we show in the Supplemental Appendix~\ref%
{app:two-state-hidden-regime}, a sufficient condition for $C\in \mathcal{L}%
_{++}^{1}(\delta)$ is 
\begin{equation}  \label{eq:hiddencondition}
\delta\exp\left(\frac12\sigma^2\right) \left[ \lambda_{11} e^{\bar g_1}
+(1-\lambda_{11})e^{\bar g_2} \right]<1.
\end{equation}

We are going to study the recursive utility in this model under three
distinct preference specifications: smooth ambiguity, Hansen--Sargent robust
control, and multiple priors. We focus on the calibration considered by Ju
and Miao~\cite{JuMiao2012}.

\paragraph{Smooth ambiguity and preferences.}

Smooth ambiguity is a popular model of ambiguity (e.g., see \cite%
{KlibanoffMarinacciMukerji2005,KlibanoffMarinacciMukerji2009,HayashiMiao2011}%
). We follow the specification in Ju and Miao~\cite[eq.~(8)]{JuMiao2012}.
For $i=1,2$, define the certainty equivalent conditional on regime $i$ by 
\begin{equation*}
\mathcal{M}_{t}^{i}(C_{t+1})=\left[ \int_{K}\left( {C_{t}e^{\kappa (\mu ,\mu
^{\prime })}}\right) ^{s}Q_{i}(\mu ,d\mu ^{\prime })\right] ^{1/s}.
\end{equation*}%
The smooth-ambiguity certainty equivalent is then 
\begin{equation*}
\mathcal{M}_{t}(C_{t+1})=\left\{ \mu _{t}\left[ \mathcal{M}_{t}^{1}(C_{t+1})%
\right] ^{\eta }+(1-\mu _{t})\left[ \mathcal{M}_{t}^{2}(C_{t+1})\right]
^{\eta }\right\} ^{1/\eta }.
\end{equation*}%
Thus, the evaluation proceeds in two stages. First, risk is evaluated
conditional on each possible regime, with $\mathrm{RRA}=1-s$ reflecting risk
aversion. These regime-specific certainty equivalents are then combined
using the current probabilities $\mu _{t}$ and $1-\mu _{t}$, with $1-\eta $
reflecting the sensitivity to ambiguity about which regime will prevail. In
particular, the case $\eta =s$ reduces to the Epstein--Zin certainty
equivalent. It is immediate to verify that the smooth-ambiguity certainty
equivalent is monotone, positively homogeneous, stationary Markov, $\sigma $%
-order continuous, and when $0\neq \eta ,s\leq 1$ it is also order concave.

For this certainty equivalent (see Appendix \ref%
{app:ju-miao-integral-equation}) the function $h:K\rightarrow \mathbb{R}$
can be computed in closed form and is given by 
\begin{equation*}
h(\mu )=\left\{ \mu \exp \left[ \eta \left( \bar{g}_{1}+\frac{1}{2}s\sigma
^{2}\right) \right] +(1-\mu )\exp \left[ \eta \left( \bar{g}_{2}+\frac{1}{2}%
s\sigma ^{2}\right) \right] \right\} ^{1/\eta }.
\end{equation*}

The calibration of Ju and Miao~\cite{JuMiao2012} is summarized in Table~\ref%
{tab:ju-miao-calibration}. Under this calibration, the belief support is $%
K=[\lambda _{21},\lambda _{11}]=[0.484,0.978]$. 
\begin{table}[h]
\begin{center}
\begin{tabular}{lll}
\hline
Parameter & Value & Description \\ \hline
$\lambda_{11}$ & $0.978$ & Persistence of the high-growth regime \\ 
$\lambda_{22}$ & $0.516$ & Persistence of the low-growth regime \\ 
$\bar g_1$ & $0.02251$ & Mean growth in the high-growth regime \\ 
$\bar g_2$ & $-0.06785$ & Mean growth in the low-growth regime \\ 
$\sigma$ & $0.03127$ & Conditional standard deviation of growth \\ 
$\beta$ & $0.975$ & Continuation weight \\ 
$r$ ($\mathrm{EIS}$) & $1/3$ ($3/2$) & CES parameter (EIS) \\ 
$s$ ($\mathrm{RRA}$) & $-1$ ($2$) & Risk aversion power (RRA) \\ 
$\eta$ ($1-\eta$) & $-7.864$ ($8.864$) & Ambiguity aversion power (ambiguity
aversion) \\ \hline
\end{tabular}%
\end{center}
\caption{Ju and Miao calibration.}
\label{tab:ju-miao-calibration}
\end{table}

For these parameter values, 
\begin{equation*}
\exp\left(\frac12\sigma^2\right) \left[ 0.978e^{\bar g_1}+(1-0.978)e^{\bar
g_2} \right] \simeq1.021.
\end{equation*}
Therefore, from \eqref{eq:hiddencondition}, we obtain $C\in \mathcal{L}%
_{++}^{1}(\delta)$ whenever $0<\delta<1.021^{-1}\simeq 0.9794$.

Let us verify the existence and uniqueness conditions. The function $h$ is
continuous and strictly positive on the compact set $K$. Consequently, it is
bounded above and bounded away from zero, so the existence condition also
implies the additional lower bound required for uniqueness. Since $r=1/3>0$,
the relevant existence condition is 
\begin{equation*}
h(\mu )<\beta ^{-1/r}=0.975^{-3}\simeq 1.079\qquad \text{for every }\mu \in
K.
\end{equation*}

As shown in Supplemental Appendix~\ref{app:ju-miao-integral-equation}, $h$
is increasing, and hence 
\begin{equation*}
h(\mu )\leq h(0.978)\simeq 1.019<\beta ^{-1/r},
\end{equation*}%
so that the existence condition is satisfied. Hence the smooth-ambiguity
recursion has a unique globally attractive fixed point in $\mathcal{K}(C)$.

The non-existence condition for $0<r<1$ is 
\begin{equation*}
h(\mu )\geq \beta ^{-1/r}\simeq 1.079\qquad \text{for every }\mu \in K.
\end{equation*}%
It can be satisfied with higher mean growth and a less persistent low-growth
regime. For instance, suppose that $\lambda _{21}=0.800$, and hence $\lambda
_{22}=0.200$, thereby lowering the probability of remaining in the
low-growth regime from $0.516$ to $0.200$. Raising both mean log-growth
rates from $(\bar{g}_{1},\bar{g}_{2})=(0.02251,-0.06785)$ to $%
(0.10251,0.01215)$ increases gross growth in each regime by about $8.3\%$,
in which case we obtain 
\begin{equation*}
\min_{\mu \in K}h(\mu )\simeq 1.081>\beta ^{-1/r}\simeq 1.079,
\end{equation*}%
implying that with this change in parameters there is no recursive utility
in $\mathcal{L}_{+}$.

Finally, by Proposition~\ref{prop:separable-A-implies-fixed-point} the
unique fixed point in $\mathcal{K}\left( C\right) $ takes the form 
\begin{equation*}
U_{t}=\varphi (\mu _{t})C_{t},
\end{equation*}
where $\varphi:K\to{\mathbb{R}}_{++}$ is the unique positive solution of 
\begin{equation*}
\begin{aligned} \varphi(\mu) &= \Bigg\{ (1-\beta) + \beta \Bigg[ \mu \left(
\int_K \left( e^{\kappa(\mu,\mu')} \varphi(\mu') \right)^s Q_1(\mu,d\mu')
\right)^{\eta/s} \\ &\qquad\qquad\qquad + (1-\mu) \left( \int_K \left(
e^{\kappa(\mu,\mu')} \varphi(\mu') \right)^s Q_2(\mu,d\mu') \right)^{\eta/s}
\Bigg]^{r/\eta} \Bigg\}^{1/r}, \end{aligned}
\end{equation*}
for every $\mu\in K$.


\paragraph{Hansen--Sargent robust-control preferences.}

Let us study the recursive utility under Hansen--Sargent robust-control
preferences (unit EIS), again under the calibration of Ju and Miao~\cite%
{JuMiao2012}. Thus, the only change relative to the previous application is
that we set $r=0$. The CES aggregator is then replaced by $W(x,y)=x^{1-\beta
}y^{\beta }$ and the recursive utility operator is then 
\begin{equation*}
(\mathbf{T}X)_{t}=C_{t}^{1-\beta }\left[ \mathcal{M}_{t}(X_{t+1})\right]
^{\beta }.
\end{equation*}%
Hence a recursive utility is a fixed point $U=\mathbf{T}U$, or, equivalently
by taking logarithms 
\begin{equation*}
\log U_{t}=(1-\beta )\log C_{t}+\beta \log \mathcal{M}_{t}(U_{t+1}).
\end{equation*}%
This is the unit-EIS recursion associated with Hansen--Sargent
robust-control preferences \cite{HansenSargent2001}; see also equation~(9)
in Ju and Miao~\cite{JuMiao2012}.

By Proposition \ref{prop:cobbdouglas}, first note that as before the
condition $C\in L_{++}^{1}(\delta )$ holds whenever $\delta <0.9794$.
Therefore, the condition for existence and uniqueness is $\mathcal{M}(C)\in 
\mathcal{K}(C).$ We have that 
\begin{equation*}
\mathcal{M}_{t}(C_{t+1})=C_{t}h(\mu _{t}),
\end{equation*}%
and 
\begin{equation*}
\min_{\mu }h(\mu )\simeq 0.9681,\qquad \max_{\mu }h(\mu )\simeq 1.019.
\end{equation*}%
Thus, we infer that $\mathcal{M}(C)\in \mathcal{K}(C)$ is true and so
existence and uniqueness hold in $\mathcal{K}\left( C\right) .$

\paragraph{Multiple priors.}

The multiple-priors specification is obtained as the limiting case of smooth
ambiguity when ambiguity aversion becomes infinite. In our parametrization,
this corresponds to $\eta\rightarrow-\infty$ (see equation (11) in Ju and
Miao \cite{JuMiao2012}). We keep all the remaining parameters at the values
reported in Table~\ref{tab:ju-miao-calibration}. Since the consumption
process is unchanged, the sufficient condition for $C\in \mathcal{L}%
_{++}^{1}(\delta)$ is unchanged as well.

The multiple-priors certainty equivalent $\mathcal{M}^{\mathrm{MP}}$ is
given by 
\begin{equation*}
\mathcal{M}_t^{\mathrm{MP}}(C_{t+1}) = \min_{i\in\{1,2\}} \mathcal{M}%
_{t}^i(C_{t+1})=\min_{i\in\{1,2\}} \left[ \int_K
\left(C_te^{\kappa(\mu_t,\mu_{t+1})}\right)^s Q_i(\mu_t,d\mu^{\prime }) %
\right]^{1/s}
\end{equation*}
The agent therefore evaluates continuation utility under the least favorable
regime. For $0 \neq s\leq 1$, this operator is monotone, positively
homogeneous, stationary Markov, $\sigma$-order continuous, and order concave.

Since $\bar{g}_{1}>\bar{g}_{2}$, regime $2$ is the least favorable regime
for next-period consumption. A direct calculation gives 
\begin{equation*}
\mathcal{M}_{t}^{\mathrm{MP}}(C_{t+1})=C_{t}\min_{i\in \{1,2\}}\exp \left( 
\bar{g}_{i}+\frac{1}{2}s\sigma ^{2}\right) \\
=C_{t}\exp \left( \bar{g}_{2}+\frac{1}{2}s\sigma ^{2}\right) =C_{t}h^{%
\mathrm{MP}},
\end{equation*}%
where 
\begin{equation*}
h^{\mathrm{MP}}:=\exp \left( \bar{g}_{2}+\frac{1}{2}s\sigma ^{2}\right) .
\end{equation*}%
The function $h^{\mathrm{MP}}$ does not depend on the state variable. As $%
s=-1$, $\beta =0.975$ and $r=1/3$, we get 
\begin{equation*}
0<h^{\mathrm{MP}}=\exp \left( -0.06785-\frac{1}{2}(0.03127)^{2}\right)
\simeq 0.9339<\beta ^{-1/r}\simeq 1.079.
\end{equation*}%
Hence the sufficient conditions for existence and uniqueness are satisfied,
and so there is a unique globally attractive fixed point in $\mathcal{K}(C)$.

Finally, consider a common increase $d>0$ in the two mean log-growth rates:
\begin{equation*}
(\bar{g}_{1},\bar{g}_{2})\longmapsto (\bar{g}_{1}+d,\bar{g}_{2}+d).
\end{equation*}%
By Proposition~\ref{prop:nonexist-positive}, no recursive utility exists in $%
\mathcal{L}_{+}$ whenever 
\begin{equation*}
e^{d}h^{\mathrm{MP}}\geq \beta ^{-1/r}\iff d\geq \log \left( \frac{\beta
^{-1/r}}{h^{\mathrm{MP}}}\right) \simeq 0.1443,
\end{equation*}%
which corresponds approximately to a $15.5\%$ increase in both growth rates.
Note that, unlike in the smooth-ambiguity case, this threshold does not
depend on the persistence of the low-growth regime, since $h^{\mathrm{MP}}$
is unaffected by the transition probabilities.

\subsection{Epstein--Zin with rare disasters}

\label{sec:bidder-smith-wachter} We next examine a rare-disaster economy
with persistent disaster intensity studied by Bidder and Smith~\cite%
{BidderSmith2018} (see also Wachter~\cite{Wachter2013}), with the difference
that we adopt Epstein--Zin preferences with a non-unit elasticity of
intertemporal substitution. This example illustrates the distinction between
uniqueness of recursive utility, in the general setting, and uniqueness
within the separable class. With non-unit EIS, our existence conditions hold
while the conditions guaranteeing uniqueness of recursive utility fail. Nevertheless, our results imply a unique separable recursive utility in $\mathcal{K}(C)$. By contrast, at unit EIS the same environment admits two
distinct separable recursive utilities.


In this model, consumption growth satisfies 
\begin{equation*}  \label{eq:rare-disaster-growth}
\log\frac{C_{t+1}}{C_t} = \nu_g +\nu_j j_{t+1} +\sigma_j\sqrt{j_{t+1}}%
\,\eta_{t+1} +\sigma\varepsilon_{t+1},
\end{equation*}
where $j_{t+1}$ denotes the number of disasters occurring between times $t$
and $t+1$, while $\eta_{t+1}$ and $\varepsilon_{t+1}$ are independent
standard normal random variables. The parameter $\nu_j<0$ is the average
effect of one disaster, and because $\nu_j<0$, each additional disaster
lowers expected consumption growth.

In particular, the number of disasters at time $t+1$ is distributed
according to a Poisson distribution with parameter $H_{t}$, that is, $%
j_{t+1}\mid H_{t}\sim \text{Poisson}\left( H_{t}\right) $, independently of $%
\eta_{t+1}$ and $\varepsilon _{t+1}$. In words, the number of disasters
depends on the Poisson intensity parameter $H_{t}$, which is the state
variable in this model. We therefore refer to $H_t$ as disaster intensity.

Observe that if we consider an i.i.d. sequence $(u_{t})_{t\geq 0}$ such that 
$u_{t+1}\sim \mathrm{Unif}(0,1)$ independent of $H_{t}$, and define 
\begin{equation*}
J(H,u):=\min \left\{ j\in \{0,1,2,\ldots \}:\sum_{n=0}^{j}e^{-H}\frac{H^{n}}{%
n!}\geq u\right\} ,
\end{equation*}%
then the realized number of disasters can be written as $%
j_{t+1}=J(H_{t},u_{t+1})$.\footnote{%
To see this, fix $H_{t}\geq 0$ and let $F_{H_{t}}(j):=%
\sum_{n=0}^{j}e^{-H_{t}}{H_{t}}^{n}/n!$ be the cumulative distribution
function of a $\text{Poisson}(H_{t})$ random variable. By the definition of $%
J$, the event $\{J(H_{t},u_{t+1})\leq j\}$ is equivalent to $\{u_{t+1}\leq
F_{H_{t}}(j)\}$. Hence, if $u_{t+1}\sim \mathrm{Unif}(0,1)$, then $\mathbb{P}%
(J(H_{t},u_{t+1})\leq j)=F_{H_{t}}(j)$, so $J({H_{t}},u_{t+1})\sim \text{%
Poisson}({H_{t}})$. Thus, we can write $j_{t+1}=J(H_{t},u_{t+1})$.} In this
case the function $\kappa $ is given by 
\begin{equation*}
\kappa (H,H^{\prime },y):=\nu _{g}+\nu _{j}J(H,u)+\sigma _{j}\sqrt{J(H,u)}%
\,\varepsilon _{z}+\sigma \varepsilon _{g},
\end{equation*}%
for every $y=(u,\varepsilon _{z},\varepsilon _{g})\in \mathcal{Y}%
:=(0,1)\times \mathbb{R}\times \mathbb{R}.$

In particular, the disaster intensity $H_{t}$ follows a first-order
autoregressive gamma process (see Gouri\'{e}roux and Jasiak \cite[eq.~(2.1)
and Definition~1]{GourierouxJasiak2006}). Therefore, the density of the
transition kernel $Q$ is 
\begin{equation}
q(H,H^{\prime })=e^{-\phi H/c}\sum_{n=0}^{\infty }\frac{(\phi H/c)^{n}}{n!}%
\frac{(H^{\prime })^{d_{H}+n-1}e^{-H^{\prime }/c}}{c^{d_{H}+n}\Gamma
(d_{H}+n)},\qquad H,H^{\prime }\in \mathbb{R}_{+},
\label{eq:bsw-transition-kernel}
\end{equation}%
where $d_{H}>0$, $c>0$, $0\leq \phi <1$, and $\Gamma (\cdot )$ denotes
Euler's gamma function. Supplemental Appendix~\ref{app:ez-rare-disasters}
verifies that 
\begin{equation*}
\mathbb{E}\left[ H_{t+1}\mid H_{t}=H\right] =cd_{H}+\phi H.
\end{equation*}
Notice that the parameter $\phi $ determines the persistence of disaster
intensity.

As mentioned, we use the Epstein--Zin certainty equivalent 
\begin{equation*}
\mathcal{M}_{t}(X_{t+1})=\left( \mathbb{E}[X_{t+1}^{s}\mid H_{t}]\right)
^{1/s}.
\end{equation*}%
Consequently the function $h:\mathbb{R}_{+}\rightarrow \mathbb{R}$ satisfies 
\begin{equation}
h(H):=\frac{\mathcal{M}_{t}(C_{t+1})}{C_{t}}=\left( \mathbb{E}\left[
e^{sg_{t+1}}\mid H_{t}=H\right] \right) ^{1/s}.
\label{eq:rare-disaster-h-definition}
\end{equation}%
As shown in Supplemental Appendix~\ref{app:ez-rare-disasters}, 
\begin{equation*}
h(H)=B_{s}e^{A_{s}H},  \label{eq:rare-disaster-h}
\end{equation*}%
where 
\begin{equation*}
B_{s}=\exp \left( \nu _{g}+\frac{1}{2}s\sigma ^{2}\right) ,\qquad A_{s}=%
\frac{\exp \left( s\nu _{j}+\frac{1}{2}s^{2}\sigma _{j}^{2}\right) -1}{s}.
\end{equation*}%
In the calibration considered below, $A_{s}<0$, so $h(H)$ decreases with $H$%
: states with greater disaster risk have lower \textquotedblleft
expected\textquotedblright\ growth.

Supplemental Appendix~\ref{app:ez-rare-disasters} also shows that a
sufficient condition for $C\in \mathcal{L}^1_{++}(\delta)$ is 
\begin{equation*}  \label{eq:rare-disaster-admissibility}
\nu_j+\frac{1}{2}\sigma_j^2\leq 0 \qquad\text{and}\qquad 0<\delta <
\min\left\{ 1, \exp\left( -\nu_g-\frac{1}{2}\sigma^2 \right) \right\}.
\end{equation*}

We now verify the existence and uniqueness conditions under the standard
monthly rare-disaster calibration of Bidder and Smith~\cite{BidderSmith2018}%
. The only preference parameter we change is the $\mathrm{EIS}$, which we
set equal to $1.9$ (equivalently $r=9/19$). The parameter values are
reported in Table~\ref{tab:ez-rd-calibration}.

\begin{table}[htbp]
\centering
\begin{tabular}{lll}
\hline
Parameter & Value & Description \\ \hline
$\nu_g$ & $1.50\times 10^{-3}$ & Mean non-disaster growth component \\ 
$\sigma^2$ & $3.30\times 10^{-5}$ & Variance of the Gaussian growth shock \\ 
$\nu_j$ & $-0.30$ & Mean log effect of one disaster \\ 
$\sigma_j$ & $0.10$ & Standard deviation of an individual disaster effect \\ 
$\phi$ & $0.993$ & Persistence of disaster intensity \\ 
$d_H$ & $1.788$ & Shape parameter of the ARG intensity process \\ 
$c$ & $7.766\times 10^{-6}$ & Scale parameter of the ARG intensity process
\\ 
$\beta$ & $0.9992$ & Continuation weight \\ 
$r$ ($\mathrm{EIS}$) & $9/19$ ($19/10$) & CES parameter (EIS) \\ 
$s$ ($\mathrm{RRA}$) & $-2.78$ ($3.78$) & Risk aversion power (RRA) \\ \hline
\end{tabular}%
\caption{Rare-disaster calibration, with economic counterparts in
parentheses.}
\label{tab:ez-rd-calibration}
\end{table}

For these parameters, 
\begin{equation*}
\nu_j+\frac{1}{2}\sigma_j^2 = -0.295<0,
\end{equation*}
so that $C\in\mathcal{L}_{++}^{1}(\delta)$ for every $0<\delta<
\min\left\{1,e^{-\nu_g-\frac12\sigma^2}\right\} \simeq0.9985.$ Moreover, 
\begin{equation*}
B_s=\exp\left(\nu_g+\frac12s\sigma^2\right) \simeq1.00146, \qquad
A_s\simeq-0.5012<0.
\end{equation*}

As $r=9/19\in (0,1)$, the relevant existence condition is 
\begin{equation*}
\sup_{H\geq 0}h(H)<\beta ^{-1/r}.
\end{equation*}%
Since $A_{s}<0$, the function $h$ is decreasing and its supremum is attained
at $H=0$. Therefore, 
\begin{equation*}
\sup_{H\geq 0}h(H)=h(0)=B_{s}\simeq 1.00146<\beta
^{-1/r}=0.9992^{-19/9}\simeq 1.00169.
\end{equation*}%
Thus, the existence condition guarantees at least one recursive utility in $%
\mathcal{K}(C)$.

The uniqueness condition of Proposition~\ref{prop:unique2}, however, fails.
Since $A_s<0$, it follows that $\lim_{H\uparrow \infty}h(H) =0 $ and hence 
\begin{equation*}
\inf_{H\geq0}h(H)=\inf \frac{\mathcal{M}(C)}{C}=0.
\end{equation*}
Economically, the state space contains arbitrarily adverse states: as $H$
increases, the expected number of disasters becomes arbitrarily large and
certainty-equivalent consumption growth approaches zero. Thus, recursive
utility exists, but we are not able to claim that there is a unique
recursive utility $U\in \mathcal{K}(C)$ satisfying $\mathbf{T}U=U$.

However, since all the assumptions of Proposition~\ref{prop:operatorA} hold,
we are able to claim that the operator $\mathbf{A}$ defined by 
\begin{equation*}
(\mathbf{A}\varphi )(H)=\left\{ (1-\beta )+\beta \left[ h(H)^{s}\int_{%
\mathbb{R}_{+}}\varphi ^{s}(H^{\prime })Q(H,dH^{\prime })\right]
^{r/s}\right\} ^{1/r},  \label{eq:rare-disaster-multiplier}
\end{equation*}
has a unique fixed point, and therefore there is a unique separable
recursive utility. Formally, if $\varphi \in \mathcal{K}(\mathbf{1})$ is the
unique fixed point of $\mathbf{A}$, then by Proposition \ref%
{prop:separable-A-implies-fixed-point} the process defined by $U_{t}=\varphi
(H_{t})C_{t}\in \mathcal{K}(C)$ is a recursive utility.

\paragraph{Comparison with the unit-EIS case.}

It is useful to contrast this uniqueness result with the limit case $r=0$.
Since $h(H)\rightarrow 0$ as $H\rightarrow \infty $, we have that $h\notin 
\mathcal{K}(\mathbf{1})$ and consequently $\mathcal{M}_{t}\left(
C_{t+1}\right) =h_{t}C_{t}\notin \mathcal{K}(C)$. Proposition~\ref%
{prop:cobbdouglas} implies that no fixed point belongs to $\mathcal{K}(C)$.
Nevertheless, Proposition~\ref{prop:FFP} allows us to identify other
admissible recursive utilities. As shown in Appendix \ref%
{app:ez-rare-disasters}, under the calibration of Table~\ref%
{tab:ez-rd-calibration}, equation~\eqref{eq:DDD} admits the two solutions 
\begin{equation*}
Z_{t}^{(1)}\simeq C_{t}e^{-83.39H_{t}},\quad Z_{t}^{(2)}\simeq
C_{t}e^{-278.1H_{t}}.
\end{equation*}%
For both solutions, substitution into~\eqref{eq:DDD} yields a constant $%
B_{t}^{(i)}\equiv \bar{B}_{i}\in \mathcal{K}(\mathbf{1})$, so that, thanks to %
\eqref{eq:LSX}, we obtain the unique fixed point in each $\mathcal{K}%
(Z^{(i)}),$ given by 
\begin{equation*}
U_{t}^{(1)}\simeq C_{t}e^{\,0.3686-83.39H_{t}},\quad U_{t}^{(2)}\simeq
C_{t}e^{-3.022-278.1H_{t}}.
\end{equation*}%
These coincide with the two affine solutions found by Christensen~\cite[%
Proposition~2.2]{Christensen2022}. Observe that $\mathcal{K}(Z^{(1)})\cap 
\mathcal{K}(Z^{(2)})=\varnothing $. Thus, surprisingly, with non-unit EIS
there is a unique separable recursive utility in $\mathcal{K}(C)$, whereas
for unit EIS two separable recursive utilities exist in distinct cones.



\subsection{Long-run risk and multiplicity at unit EIS}

\label{sec:long-run-risk-multiplicity}

Finally, we apply the results of Sections~\ref{subsection Unit} and \ref%
{subsec:cobb} to establish multiplicity of separable recursive utilities
under the constant-volatility specification of Bansal and
Yaron~\cite{BansalYaron2004}. The proofs are provided in the Supplemental
Appendix~\ref{app:multiplicity}. Consider the consumption-growth model 
\begin{equation}
\log \frac{C_{t+1}}{C_{t}}=a+X_{t}+\sigma _{t}Y_{t+1}\text{ and }%
X_{t+1}=\rho X_{t}+\chi\sigma _{t}\varepsilon _{t+1}^{x},
\label{eq:auto}
\end{equation}%
where $|\rho |<1$, $\chi>0$, and $(\sigma _{t})$ is a
time-homogeneous Markov process taking values in $(0,\infty )$. Here, the
sequences $(\varepsilon _{t}^{x})$ and $(Y_{t})$ are each i.i.d., mutually
independent, and independent of $(\sigma _{t})$ as well. We assume that $%
\varepsilon _{1}^{x}$ has full support on $\mathbb{R}$, so that $X_{t}$ has
full support on $\mathbb{R}$ for every $t\geq 1$.

Set $r=0$, $\alpha +\beta =1$ and take the Epstein--Zin certainty equivalent with 
$s<0$. Suppose that $\mathbb{E}[e^{s\sigma Y_{1}}]<\infty $ holds for every $%
\sigma >0$. Under these assumptions the function $h$ is given by 
\begin{equation*}
h(x,\sigma )=e^{a+x}\bigl(\mathbb{E}[e^{s\sigma Y_{1}}]\bigr)^{1/s}.
\end{equation*}%
Since $h(x,\sigma )\rightarrow \infty $ as $x\rightarrow \infty $, it
follows that $h$ is unbounded above. Consequently the condition $h\in 
\mathcal{K}(\mathbf{1})$ of Section~\ref{subsec:cobb} fails. Whenever $C\in \mathcal{L%
}_{++}^{1}(\delta )$, Proposition~\ref{prop:cobbdouglas} therefore rules out
fixed points in the cone $\mathcal{K}(C)$.

Using Proposition~\ref{prop:FFP}, we find two separable recursive utilities for the
constant-volatility specification of Bansal and Yaron~\cite{BansalYaron2004}. Formally, let $\sigma _{t}:=\bar \sigma >0$, $0<\rho <1$, and $%
\varepsilon _{t}^{x},Y_{t}\sim N(0,1)$. The two separable recursive
utilities are 
\begin{equation*}
U_{t}^{(i)}=C_{t}\exp \!\left( A_{i}+\frac{q_{i}X_{t}^{2}+b_{i}X_{t}}{s}%
\right) ,\qquad i=0,1,
\end{equation*}%
where $q_{0}=0$ and $q_{1}>0$. For the Bansal--Yaron calibration, 
\begin{equation}
\beta =0.998,\quad \rho =0.979,\quad a=0.0015,\quad \bar{\sigma}%
=0.0078,\quad \chi=0.044,\quad s=-9
\label{eq:unit-eis-BY-common-calibration}
\end{equation}%
the separable recursive utilities turn out to be 
\begin{equation*}
\begin{aligned} U_t^{(0)}&\simeq C_t\exp(0.1121+43.47X_t),\\
U_t^{(1)}&\simeq C_t\exp\!\left(
-1.219-46.53X_t-(2.051\times10^4)X_t^2\right). \end{aligned}
\end{equation*}%
Relative to consumption, the higher utility is log-linear in the state
variable, while the lower one is log-quadratic. The existence of a
log-linear solution at unit EIS is known (see, e.g., \cite{PohlSchmeddersWilms2024}), but our results also identify a second, log-quadratic
solution.

Observe that long-run risk models typically assume that the EIS is greater
than one (i.e., $r>0$). In that case, using Proposition \ref%
{prop:nonexistweak}, the unboundedness of $h$ also rules
out a recursive utility in $\mathcal{K}(C)$. We do not expatiate further on this point.

\section{Concluding remarks}

\label{sec:conc} Uniqueness of recursive utility is a question of central
importance, since without it a given set of preference parameters (such as
risk aversion, the discount factor, and the elasticity of intertemporal
substitution) need not pin down a unique utility, and may therefore imply
indeterminate welfare or asset-pricing predictions.

The multiplicity that can arise here resembles a familiar feature of
forward-looking equations in economics. For example, in standard monetary
models, a weak policy response to inflation can allow multiple
self-fulfilling paths for inflation. A similar issue arises here because
current utility depends on continuation utility. When the state variable is
persistent, very good or very bad states can last far into the future.
Different levels of continuation utility in those distant states can then
feed back into utility today, allowing different utility processes to
satisfy the same forward-looking equation.

In Markov environments, we have identified a key distinction between unit
and non-unit EIS: multiplicity can arise with unit EIS, whereas with
non-unit EIS we have established uniqueness within the class of separable
recursive utilities. In the long-run-risk model of Bansal and Yaron~\cite%
{BansalYaron2004}, the state variable ranges over the entire real line, so
the bounds required by our unit-EIS uniqueness condition fail. For the
constant-volatility specification with unit EIS, we establish multiplicity.
Whether such a multiplicity may also arise in this model with
non-unit EIS remains an open question.

\appendix

\section{Appendix\label{append}}

\subsection{Proofs}

\subsubsection{Proof of Lemma~\protect\ref{lemRT}}

We have seen that $\mathcal{K}\left( C\right) \subset L_{++}$. Since%
\begin{equation*}
\varepsilon \mathcal{M}\left( C\right) \leq \mathcal{M}\left( \varepsilon
C\right) \leq \mathcal{M}\left( C\right)
\end{equation*}%
for $0<\varepsilon <1$, it follows that $\mathcal{M}\left( \varepsilon
C\right) \in L_{+}$ and so $\mathbf{T}\left( \varepsilon C\right) =W\left( C,%
\mathcal{M}\left( \varepsilon C\right) \right) $ is well defined.

As $\mathcal{M}$ is subhomogeneous, we can write 
\begin{equation*}
\frac{\mathbf{T}\left( \varepsilon C\right) }{\varepsilon C}=\frac{W\left( C,%
\mathcal{M}\left( \varepsilon C\right) \right) }{\varepsilon C}\geq \frac{%
W\left( C,\varepsilon \mathcal{M}\left( C\right) \right) }{\varepsilon C}
\end{equation*}

Moreover, $W$ is positively homogeneous, so 
\begin{equation*}
\frac{\mathbf{T}\left( \varepsilon C\right) }{\varepsilon C}\geq W\left( 
\frac{1}{\varepsilon },\frac{\mathcal{M}\left( C\right) }{C}\right)
\end{equation*}%
Therefore, if we prove that 
\begin{equation}
W\left( \frac{1}{\varepsilon },\frac{\mathcal{M}\left( C\right) }{C}\right)
\gg 1  \label{eq:DDL}
\end{equation}%
we have the desired result. On the other hand, 
\begin{equation*}
W\left( \frac{1}{\varepsilon },\frac{\mathcal{M}\left( C\right) }{C}\right) =%
\left[ (1-\beta )\varepsilon ^{\left\vert r\right\vert }+\beta \left( \frac{%
\mathcal{M}\left( C\right) }{C}\right) ^{r}\right] ^{1/r}
\end{equation*}%
Since $1/r<0,$ (\ref{eq:DDL}) is equivalent to%
\begin{equation*}
(1-\beta )\varepsilon ^{\left\vert r\right\vert }+\beta \left( \frac{%
\mathcal{M}\left( C\right) }{C}\right) ^{r}\ll 1.
\end{equation*}%
After some algebra, we easily obtain the equivalent condition 
\begin{equation}
\frac{\mathcal{M}(C)}{C}\gg \left( \frac{1-(1-\beta )\varepsilon
^{\left\vert r\right\vert }}{\beta }\right) ^{1/r}  \label{eq:GGE}
\end{equation}%
Observe that 
\begin{equation}
\left( \frac{1-(1-\beta )\varepsilon ^{\left\vert r\right\vert }}{\beta }%
\right) ^{1/r}\downarrow \beta ^{\frac{1}{\left\vert r\right\vert }}\text{ \
as \ }\varepsilon \downarrow 0  \label{eq:BVZ}
\end{equation}%
From (b) it follows 
\begin{equation*}
\frac{\mathcal{M}\left( C\right) }{C}\geq \beta ^{\frac{1}{\left\vert
r\right\vert }}+\eta
\end{equation*}%
for some $\eta >0$, and from (\ref{eq:BVZ}) we have 
\begin{equation*}
\left( \frac{1-(1-\beta )\varepsilon ^{\left\vert r\right\vert }}{\beta }%
\right) ^{1/r}\leq \beta ^{\frac{1}{\left\vert r\right\vert }}+\eta /2
\end{equation*}%
for $\varepsilon >0$ small enough. At last we get%
\begin{equation*}
\frac{\mathcal{M}\left( C\right) }{C}\geq \beta ^{\frac{1}{\left\vert
r\right\vert }}+\eta /2+\eta /2\geq \left( \frac{1-(1-\beta )\varepsilon
^{\left\vert r\right\vert }}{\beta }\right) ^{1/r}+\eta /2
\end{equation*}%
which is (\ref{eq:GGE}) and so the claim is proved.\hfill $\blacksquare $

\subsubsection{Proof of Proposition~\protect\ref{prop:exist-neg}}

Let $X\in \mathcal{K}\left( C\right) $. Then $\mathcal{M}(X)\in L_{+}$ (see (%
\ref{eq:HHH})). In view of (\ref{eq:LLL}) we have $\mathbf{T}\left( X\right)
\leq (1-\beta) ^{1/r}C$. On the other hand, $X\geq \varepsilon C$ holds for $%
\varepsilon $ small enough. By Lemma \ref{lemRT} we infer $\mathbf{T}X\geq 
\mathbf{T}\left( \varepsilon C\right) \geq \varepsilon C$ and so we have
proved that $\mathbf{T}\left( X\right) \in \mathcal{K}\left( C\right) $ as
well as the relation (\ref{eq:YUP})

Fix now the scalar $\varepsilon >0$ for which $\mathbf{T}$ maps $\left[
\varepsilon C,(1-\beta )^{1/r}C\right] $ into itself, and set $%
X^{0}=(1-\beta )^{1/r}C$. Then $X^{1}=\mathbf{T}X^{0}\leq X^{0}$. As $%
\mathbf{T}$ is monotone, $\mathbf{T}X^{1}=\mathbf{T}^{2}X^{0}\leq \mathbf{T}%
X^{0}\leq X^{0}$. Recursively, our processes form a decreasing sequence 
\begin{equation*}
X^{0}\geq \cdot \cdot \cdot \geq \mathbf{T}^{n-1}X^{0}\geq \mathbf{T}%
^{n}X^{0}\geq \cdot \cdot \cdot \geq \varepsilon C,
\end{equation*}%
which is bounded from below by the element $\varepsilon C\in
L_{++}^{1}\left( \delta \right) .$ By the Dedekind $\sigma $-completeness we
have the point-wise convergence 
\begin{equation*}
\mathbf{T}^{n}X^{0}\downarrow U^{\ast }=\inf_{n}\mathbf{T}^{n}X^{0}.
\end{equation*}%
As $\mathbf{T}$ is $\sigma $-order continuous, $\mathbf{T}U^{\ast }=U^{\ast
} $ and $U^{\ast }$ is a fixed point of $\mathbf{T}$ belonging to $%
L_{++}^{1}\left( \delta \right) $.

Let $U\in \mathcal{K}\left( C\right) $ be any fixed point of $\mathbf{T.}$
Since $U=\mathbf{T}U$ we have $U\leq (1-\beta) ^{1/r}C$ and, by the
monotonicity, $U=\mathbf{T}^{n}U\leq \mathbf{T}^{n}\left( (1-\beta)
^{1/r}C\right) $. Therefore, $U\leq U^{\ast }$ as $n\rightarrow \infty $ and
this proves that $U^{\ast }$ is the greatest fixed point.

Observe that by using the increasing sequence 
\begin{equation*}
X^{0}\leq \cdot \cdot \cdot \leq \mathbf{T}^{n-1}X^{0}\leq \mathbf{T}%
^{n}X^{0}\leq \cdot \cdot \leq (1-\beta) ^{1/r}C,
\end{equation*}%
starting at $X^{0}=\varepsilon C$, we deduce, by a similar argument, the
existence of a fixed point $U_{\ast }=\sup_{n}\mathbf{T}^{n}X^{0}$. The
point $U_{\ast }$ is the least fixed point in the interval $\left[
\varepsilon C,(1-\beta) ^{1/r}C\right] .$\hfill $\blacksquare $

\subsubsection{Proof of Proposition~\protect\ref{prop:unique1}}

Our aim is to invoke Theorem 1 of \cite{marmont1}. Fix a scalar $\varepsilon 
$ for which (\ref{eq:HAZ}) holds. Let us prove that no fixed point of $%
\mathbf{T}$ lies on the lower perimeter $\partial ^{\diamond }\left[
\varepsilon C,(1-\beta) ^{1/r}C\right] .$

Suppose by contradiction that there is a fixed point $\overline{U}=\mathbf{T}%
\overline{U}\in \partial _{\diamond }\left[ \varepsilon C,(1-\beta )^{1/r}C%
\right] $. Hence $\overline{U}\geq U_{\ast },$ where $U_{\ast }$ is the
least fixed point of $\mathbf{T}$ in the interval $\left[ \varepsilon
C,(1-\beta )^{1/r}C\right] $ (see the previous proof). Consequently all the
elements of the increasing sequence $\mathbf{T}^{n}\left( \varepsilon
C\right) \uparrow U_{\ast }$ are contained in the lower perimeter. On the
other hand, in view of (\ref{eq:HAZ}) and (\ref{eq:upper}), $\mathbf{T}%
\left( \varepsilon C\right) \notin \partial ^{\diamond }\left[ \varepsilon
C,(1-\beta )^{1/r}C\right] .$ This is a contradiction, so Theorem 1 of \cite%
{marmont1} implies the existence of a unique fixed point in $\left[
\varepsilon C,(1-\beta )^{1/r}C\right] $. Since $\varepsilon $ can be taken
arbitrarily small, we deduce the uniqueness of the fixed point in the cone $%
\mathcal{K}\left( C\right) $.\hfill $\blacksquare $

\subsubsection{Proof of Proposition~\protect\ref{prop:unique2}}

Let us only show that the relation $\mathbf{T}\left( (1-\beta
)^{1/r}C\right) /C\gg (1-\beta )^{1/r}$ is equivalent to $\mathcal{M}\left(
C\right) /C\gg 0.$ By subhomogeneity, 
\begin{equation*}
\mathcal{M}((1-\beta) ^{1/r}C)\geq (1-\beta) ^{1/r}\mathcal{M}(C).
\end{equation*}%
Since $\mathcal{M}(C)/C\gg 0$, there exists $\eta >0$ such that $\mathcal{M}%
(C)\geq \eta C$. Hence 
\begin{equation*}
\mathcal{M}((1-\beta) ^{1/r}C)\geq (1-\beta) ^{1/r}\eta C.
\end{equation*}%
Therefore, 
\begin{align*}
\frac{\mathbf{T}\bigl((1-\beta)^{1/r}C\bigr)}{C} &= \left[ (1-\beta) +\beta
\left( \frac{\mathcal{M}\bigl((1-\beta)^{1/r}C\bigr)}{C} \right)^r \right]%
^{1/r} \\
&\geq \left[ (1-\beta) +\beta(1-\beta)\eta^r \right]^{1/r} \\
&> (1-\beta)^{1/r},
\end{align*}
as desired. The converse is straightforward. The rest of the proof follows the same lines as the proof of
Proposition \ref{prop:unique1}. \hfill $\blacksquare $

\subsubsection{Proof of Proposition~\protect\ref{prop:cobbdouglas}}

Recall that $\mathbf{T}$ is $\beta $-subhomogeneous. First, we point out the
following equivalence 
\begin{equation}
\mathcal{M}(C)\in \mathcal{K}\left( C^{p}\right) \Longleftrightarrow \mathbf{%
T}\left( C\right) \in \mathcal{K}\left( C\right)  \label{eq:RTM}
\end{equation}

Actually, $\mathcal{M}(C)\in \mathcal{K}\left( C^{p}\right) $ means 
\begin{equation*}
\lambda C^{p}\leq \mathcal{M}(C)\leq \mu C^{p}\text{ \ with }\lambda ,\mu >0
\end{equation*}%
Straightforward algebra gives 
\begin{equation*}
\lambda ^{\beta }C^{\alpha +p\beta }\leq \mathbf{T}(C)\leq \mu ^{\beta
}C^{\alpha +p\beta }\text{ }
\end{equation*}%
As $\alpha +p\beta =1$ we deduce that $\mathbf{T}\left( C\right) \in 
\mathcal{K}\left( C\right) $. The converse implication is proved similarly.

Suppose now that $\mathcal{M}(C)\in \mathcal{K}\left( C^{p}\right) $. Hence $%
\mathbf{T}\left( C\right) \in \mathcal{K}\left( C\right) $ and so 
\begin{equation*}
mC\leq \mathbf{T}(C)\leq MC
\end{equation*}%
for some $0<m\leq M$. Pick $0<\varepsilon <1$ and $\lambda >1$ such that $%
\varepsilon ^{1-\beta }\leq m$ and $\lambda ^{1-\beta }\geq M.$

Consequently, 
\begin{eqnarray*}
\mathbf{T}(\varepsilon C) &\geq &\varepsilon ^{\beta }\mathbf{T}(C)\geq
\varepsilon ^{\beta }mC\geq \varepsilon C \\
\mathbf{T}(\lambda C) &\leq &\lambda ^{\beta }\mathbf{T}(C)\leq \lambda
^{\beta }MC\leq \lambda C
\end{eqnarray*}%
Hence $\mathbf{T}$ maps the interval $\left[ \varepsilon C,\lambda C\right] $
into itself. Since $\varepsilon $ and $\lambda $ can be taken arbitrarily
small and large, respectively, $\mathbf{T}$ maps $\mathcal{K}\left( C\right) 
$ into itself. By the usual monotone iteration we deduce the existence of a
fixed point in the interval $\left[ \varepsilon C,\lambda C\right] $.

Let $U$ and $V$ be two fixed points in $\mathcal{K}\left( C\right) $. We
have thus 
\begin{equation*}
q^{-1}U\leq V\leq qU
\end{equation*}%
for some $q>1.$ Monotonicity and $\beta $-subhomogeneity give 
\begin{equation*}
q^{-\beta }\mathbf{T}U\leq \mathbf{T}\left( q^{-1}U\right) \leq \mathbf{T}%
V\leq \mathbf{T}\left( qU\right) \leq q^{\beta }\mathbf{T}U
\end{equation*}%
Using $U=\mathbf{T}U$ and $V=\mathbf{T}V$ yields%
\begin{equation*}
q^{-\beta }U\leq V\leq q^{\beta }U
\end{equation*}%
Iterating 
\begin{equation*}
q^{-\beta ^{n}}U\leq V\leq q^{\beta ^{n}}U\text{ \ for every }n
\end{equation*}%
Letting $n\rightarrow \infty $ gives $U=V$ and so the fixed point in $%
\mathcal{K}\left( C\right) $ is unique.

Suppose now that $\mathcal{M}(C)\notin \mathcal{K}\left( C^{p}\right) $ and
that, by contradiction, there is a fixed point $U\in \mathcal{K}\left(
C\right) $. Hence $q^{-1}C\leq U\leq qC$ for some $q>1.$ Monotonicity and
subhomogeneity give

\begin{equation*}
q^{-1}\mathbf{T}C\leq U\leq q\mathbf{T}C
\end{equation*}%
Namely $U\in \mathcal{K}\left( \mathbf{T}C\right) \Longleftrightarrow 
\mathbf{T}C\in \mathcal{K}\left( U\right) =\mathcal{K}\left( C\right) $
which contradicts the assumption $\mathcal{M}(C)\notin \mathcal{K}\left(
C^{p}\right) $. Consequently, no fixed point lies in $\mathcal{K}\left(
C\right) $.

\subsubsection{Proof of Proposition \protect\ref{prop:FFP}}

The analogue of (\ref{eq:RTM}) is the equivalence: 
\begin{equation*}
\mathcal{M}\left( Z\right) \in \mathcal{K}\left( C^{-\alpha /\beta
}Z^{1/\beta }\right) \Longleftrightarrow \mathbf{T}Z\in \mathcal{K}\left(
Z\right)
\end{equation*}%
which extends (\ref{eq:RTM}). Indeed, if $\mathcal{M}\left( Z\right) \in 
\mathcal{K}\left( C^{-\alpha /\beta }Z^{1/\beta }\right) $ then $\mathcal{M}%
_{t}\left( Z_{t+1}\right) =C_{t}^{-\alpha /\beta }Z_{t}^{1/\beta }B_{t}$,
where $B\in \mathcal{K}\left( 1\right) .$ Therefore%
\begin{equation*}
\left( \mathbf{T}Z\right) _{t}=C_{t}^{\alpha }\left[ \mathcal{M}_{t}\left(
Z_{t+1}\right) \right] ^{\beta }=B_{t}^{\beta }Z_{t}
\end{equation*}%
Namely, $\mathbf{T}Z\in \mathcal{K}\left( Z\right) $. The converse is
similar. At this point the proof goes on exactly as the proof of Proposition~%
\ref{prop:cobbdouglas}, by replacing $C$ with $Z$.

Finally, in the Epstein--Zin case, $\mathcal{M}_{t}(Z_{t+1})=(\mathbb{E}%
_{t}Z_{t+1}^{s})^{1/s}$. Hence condition~\eqref{eq:FFP} is equivalent to 
\begin{equation*}
(\mathbb{E}_{t}Z_{t+1}^{s})^{1/s}=C_{t}^{-\alpha /\beta }Z_{t}^{1/\beta }%
\widetilde{B}_{t}
\end{equation*}%
for some $\widetilde{B}\in \mathcal{K}(\mathbf{1})$. Raising to the power $s$ and
setting $B_{t}:=\widetilde{B}_{t}^{\,s}\in \mathcal{K}(\mathbf{1})$ gives %
\eqref{eq:DDD}.

If $B_{t}\equiv \overline{B}>0,$ then $\left( \mathbf{T}Z\right)
_{t}=C_{t}^{\alpha }\left[ \mathcal{M}_{t}\left( Z_{t+1}\right) \right]
^{\beta }=C_{t}^{\alpha }C_{t}^{-\alpha }Z_{t}\overline{B}^{\beta /s}=%
\overline{B}^{\beta /s}Z_{t}$. Hence, $\mathbf{T}Z=\overline{B}^{\beta /s}Z.$
Setting now $U=\overline{B}^{\frac{\beta}{\alpha s}}Z$, we get 
\begin{equation*}
\mathbf{T}U =\mathbf{T}\left(\overline{B}^{\frac{\beta}{\alpha s}}Z\right) =%
\overline{B}^{\frac{\beta^{2}}{\alpha s}}\mathbf{T}(Z) =\overline{B}^{\frac{%
\beta^{2}}{\alpha s}}\overline{B}^{\beta/s}Z =\overline{B}^{\frac{\beta^{2}}{%
\alpha s}} \overline{B}^{\beta/s}\overline{B}^{-\frac{\beta}{\alpha s}}U =U
\end{equation*}
as desired.\hfill $\blacksquare $

\subsubsection{Proof of Proposition~\protect\ref{prop:convergence}}

Suppose $r<0$; the other case is similar. Let $\overline{X}$ be any 
process in $\mathcal{K}\left( C\right) $ and $\overline{U}$ be the unique
fixed point in $\mathcal{K}\left( C\right) .$ Consequently there exist a
sufficiently small $\varepsilon \in (0,1)$ and a sufficiently large $\lambda
>1$ such that $\overline{X},\overline{U}\in \left[ \varepsilon C,\lambda C%
\right] $ and with $\mathbf{T}(\varepsilon C)\geq \varepsilon C$, $\mathbf{T}%
(\lambda C)\leq \lambda C$. Iterating the relation 
\begin{equation*}
\varepsilon C\leq \overline{X}\leq \lambda C
\end{equation*}%
we get $\mathbf{T}^{n}\left( \varepsilon C\right) \leq \mathbf{T}^{n}\left( 
\overline{X}\right) \leq \mathbf{T}^{n}\left( \lambda C\right) $ and so 
\begin{equation*}
\mathbf{T}^{n}\left( \varepsilon C\right) -\overline{U}\leq \mathbf{T}^{n}%
\overline{X}-\overline{U}\leq \mathbf{T}^{n}\left( \lambda C\right) -%
\overline{U}
\end{equation*}%
Since $\mathbf{T}^{n}\left( \varepsilon C\right) \uparrow \overline{U}$ and
$\mathbf{T}^{n}\left( \lambda C\right) \downarrow \overline{U}$ it follows
that $\mathbf{T}^{n}\left( \varepsilon C\right) -\overline{U}\leq 0$ and
$\mathbf{T}^{n}\left( \lambda C\right) -\overline{U}\geq 0$. This
implies 
\begin{equation*}
\left\vert \mathbf{T}^{n}\left( \overline{X}\right) -\overline{U}\right\vert
\leq Z_{n}
\end{equation*}%
where $Z_{n}=\left( \mathbf{T}^{n}\left( \lambda C\right) -\overline{U}%
\right) \vee \left( \overline{U}-\mathbf{T}^{n}\left( \varepsilon C\right)
\right) $. Clearly $Z_{n}\geq 0$ and $Z_{n+1}\downarrow 0$ and this proves
the first claim. On the other hand 
\begin{equation*}
\left\Vert \overline{U}-\mathbf{T}^{n}\left( \overline{X}\right) \right\Vert
=\left\Vert \text{ }\left\vert \overline{U}-\mathbf{T}^{n}\left( \overline{X}%
\right) \right\vert \text{ }\right\Vert \leq \left\Vert Z_{n}\right\Vert
\downarrow 0
\end{equation*}%
by the monotone convergence theorem.\hfill $\blacksquare $

\subsubsection{Proof of Proposition~\protect\ref{prop:nonexist-positive}}

Let $r\in (0,1)$ and $U=$ $\mathbf{T(}U)$ be a fixed point. In view of (\ref%
{eq:ZAZ}) $U\geq \left( 1-\beta \right) ^{1/r}C.$ Let $\overline{\lambda}$ be the greatest scalar such that
$U \geq \overline{\lambda} C$ (see Section \ref{sect. intro}). Therefore,%
\begin{align*}
U& =\mathbf{T}\left( U\right) \geq \mathbf{T}\left( \overline{\lambda }%
C\right) =W\left( C,\mathcal{M}\left( \overline{\lambda }C\right) \right)
=W\left( C,\overline{\lambda }\mathcal{M}\left( C\right) \right) \\
& =CW\left( 1,\overline{\lambda }\frac{\mathcal{M}\left( C\right) }{C}%
\right) =C\left[ (1-\beta )+\beta \overline{\lambda }^{r}\left[ \frac{%
\mathcal{M}(C)}{C}\right] ^{r}\right] ^{1/r}
\end{align*}%
Inequality (\ref{eq:non2}) implies%
\begin{equation*}
U\geq C\left[ (1-\beta )+\overline{\lambda }^{r}\right] ^{1/r}
\end{equation*}%
Thanks to the property of the scalar $\overline{\lambda }$, we would have
\begin{equation*}
\overline{\lambda }\geq \left[ (1-\beta )+\overline{\lambda }^{r}\right]
^{1/r}\Longleftrightarrow \overline{\lambda }^{r}\geq (1-\beta )+\overline{%
\lambda }^{r}
\end{equation*}%
which is absurd. Hence, there is no fixed point$.$

The case $r<0$ is similar by taking into account that $U\leq \left( 1-\beta
\right) ^{1/r}C$ holds for any fixed point of $\mathbf{T}$.\hfill $%
\blacksquare $

\subsubsection{Proof of Proposition \ref{prop:nonexistweak}}

Let $r\in (0,1)$. Suppose, by contradiction, that a fixed point $U$ belongs
to $\mathcal{K}(C)$. There is then a constant $q>1$ such that $q^{-1}C\leq
U\leq qC$. Monotonicity and subhomogeneity of the certainty equivalent imply 
\begin{equation*}
\mathcal{M}_{t}(U_{t+1})\geq \mathcal{M}_{t}(q^{-1}C_{t+1})\geq q^{-1}%
\mathcal{M}_{t}(C_{t+1}).
\end{equation*}%
Dividing the recursive-utility equation by $C_{t}$ and using $r>0$ therefore
gives 
\begin{eqnarray*}
q &\geq &\frac{U_{t}}{C_{t}}=\left[ 1-\beta +\beta \left( \frac{\mathcal{M}%
_{t}(U_{t+1})}{C_{t}}\right) ^{r}\right] ^{1/r}\geq \\
&&\beta ^{1/r}\frac{\mathcal{M}_{t}(U_{t+1})}{C_{t}}\geq \beta ^{1/r}q^{-1}%
\frac{\mathcal{M}_{t}(C_{t+1})}{C_{t}}
\end{eqnarray*}%
It follows that $\sup \frac{\mathcal{M}(C)}{C}<\infty $. A contradiction.

Analogously, suppose $r<0$ and $q^{-1}C\leq U\leq qC$ where $U$ is a fixed
point of $\mathbf{T}$. Then we can write%
\begin{equation*}
\mathcal{M}_{t}(U_{t+1})\leq \mathcal{M}_{t}(qC_{t+1})\leq q\mathcal{M}%
_{t}(C_{t+1}).
\end{equation*}%
In this case%
\begin{eqnarray*}
q^{-1} &\leq &\frac{U_{t}}{C_{t}}=\left[ 1-\beta +\beta \left( \frac{%
\mathcal{M}_{t}(U_{t+1})}{C_{t}}\right) ^{r}\right] ^{1/r} \\
&\leq &\beta ^{1/r}\frac{\mathcal{M}_{t}(U_{t+1})}{C_{t}}\leq \beta ^{1/r}q%
\frac{\mathcal{M}_{t}(C_{t+1})}{C_{t}}
\end{eqnarray*}%
and so $\frac{\mathcal{M}(C)}{C}\gg 0$ which contradicts our
hypothesis.\hfill $\blacksquare $

\subsubsection{Proof of Proposition~\protect\ref{prop:operatorA}}

Suppose first that $r\in(0,1)$ and let
$(1-\beta)^{1/r}\leq\varphi\leq b$. By monotonicity, in view of
(\ref{eq:GSL}),
\begin{equation*}
(1-\beta)^{1/r}\leq\mathbf{A}\varphi\leq\mathbf{A}b
=\Gamma_r\bigl(bh(\xi)\bigr)
\leq\Gamma_r\bigl(b\overline h\bigr)
=\left((1-\beta)+\beta b^r\overline h^{\,r}\right)^{1/r}.
\end{equation*}
Hence $\mathbf{A}$ maps $[(1-\beta)^{1/r},b]$ into itself whenever
\[
b\geq
\left(\frac{1-\beta}{1-\beta\overline h^{\,r}}\right)^{1/r},
\]
which is well defined under~\eqref{eq;MMM}.

For $r<0$, an analogous argument, using $\underline h=\inf h$, shows that
$\mathbf{A}$ maps $[b,(1-\beta)^{1/r}]$ into itself whenever
\[
0<b\leq
\left(\frac{1-\beta}{1-\beta\underline h^{\,r}}\right)^{1/r}.
\]

In either case, monotonicity and $\sigma$-order continuity imply that the
iterates from the endpoints of the invariant interval converge monotonically
to its least and greatest fixed points. When $\mathcal{M}$ is order concave,
$\mathbf{A}$ is order concave, and uniqueness follows from the same
lower-perimeter arguments as in the proofs of
Propositions~\ref{prop:unique1} and~\ref{prop:unique2}. Finally, for
$r\in(0,1)$ every fixed point belongs to $\mathcal{K}(\mathbf{1})$.
\hfill $\blacksquare$

\clearpage

\setcounter{subsection}{0} \setcounter{section}{0} \renewcommand{%
\thesection}{S} \renewcommand{\thesubsection}{S.\arabic{subsection}} %
\renewcommand{\thesubsubsection}{\thesubsection.\arabic{subsubsection}}

\section{Supplemental Appendix}

\label{suppapend}

\subsection{Concavity of the certainty equivalents\label{sec:concave}}

We have frequently claimed in the present paper the concavity of certainty
equivalents by resorting to a well-established property, known in the finite-dimensional setting: a positively homogeneous and quasi-concave function is
concave on the positive cone.

We intend to point out an extension of this basic result to the class of
quasi-arithmetic operators 
\begin{equation*}
\Phi (X)=\left( \mathbb{E}\left[ \varphi (X)^{s}\mid \mathcal{F}\right]
\right) ^{1/s}
\end{equation*}

Clearly the functional $\Phi $ takes values in an ordered vector space and
therefore the concepts of concavity as well as quasi-concavity must be
adapted. The translation to a Markovian setting is straightforward.

\bigskip

Let $\varphi :{B}_{s}\rightarrow L_{++}^{0}(\Omega ,\Sigma ,\mathbb{P})$ be
a (conditionally) positively homogeneous and concave function defined on a (conditionally) convex set ${B}%
_{s}$ of $L_{+}^{0}(\Omega ,\Sigma ,\mathbb{P})$ such that $\mathbb{E}%
[\varphi (X)^{s}\mid \mathcal{F}]$ is finite for every $X\in {B}_{s}$ and
where $\mathcal{F}$ is a sub-$\sigma $-algebra of $\Sigma $.

\begin{proposition}
\label{prop:conditional-power-mean-concavity} The map 
\begin{equation*}
\Phi (X)=\left( \mathbb{E}\left[ \varphi (X)^{s}\mid \mathcal{F}\right]
\right) ^{1/s}
\end{equation*}%
from $B_{s}$ to $L_{++}^{0}(\Omega ,\mathcal{F},\mathbb{P})$ is positively
homogeneous and concave for every $s\in (-\infty ,0)\cup (0,1)$.
\end{proposition}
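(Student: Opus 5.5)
Positive homogeneity is immediate, so the plan concentrates on concavity. If $\varphi$ is conditionally positively homogeneous, then for $\lambda>0$ ($\mathcal{F}$-measurable, or scalar) we have $\varphi(\lambda X)^s=\lambda^s\varphi(X)^s$. Pulling $\lambda^s$ out of the conditional expectation and taking the $1/s$ power gives $\Phi(\lambda X)=\lambda\Phi(X)$.

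For concavity I would reduce to the scalar fact that the power mean $M_s(a)=(\mathbb{E}[a^s\mid\mathcal{F}])^{1/s}$ is concave and monotone in $a>0$ for $s<1$, $s\neq 0$. Concavity of $\Phi$ would then follow by composition. Given $X_1,X_2\in B_s$ and $\alpha\in[0,1]$, concavity of $\varphi$ gives
\[
\varphi(\alpha X_1+(1-\alpha)X_2)\geq \alpha\varphi(X_1)+(1-\alpha)\varphi(X_2).
\]
Monotonicity of $M_s$ (for $s>0$ from monotonicity of $t\mapsto t^s$ and of $t\mapsto t^{1/s}$; for $s<0$ both maps are decreasing, so the order is preserved overall) then yields
\[
\Phi(\alpha X_1+(1-\alpha)X_2)\geq M_s\bigl(\alpha\varphi(X_1)+(1-\alpha)\varphi(X_2)\bigr).
\]
It remains to show $M_s(\alpha a+(1-\alpha)b)\geq \alpha M_s(a)+(1-\alpha)M_s(b)$ for positive random variables $a,b$, i.e. a conditional reverse Minkowski inequality. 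Finiteness of the conditional expectations is guaranteed by $X_i\in B_s$ and by conditional convexity of $B_s$.

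To prove that reverse Minkowski inequality I would follow the finite-dimensional route announced in the text: $M_s$ is conditionally positively homogeneous, and it is quasi-concave, since its upper level sets are convex. Indeed, $\{M_s\geq 1\}$ means $\mathbb{E}[a^s\mid\mathcal{F}]\geq 1$ when $s>0$ and $\leq 1$ when $s<0$. Since $t\mapsto t^s$ is concave for $0<s<1$ and convex for $s<0$, both conditions define conditionally convex sets. Then comes the standard normalization trick, done pointwise in $\omega$ with $\mathcal{F}$-measurable weights. Set $m_a=M_s(a)$ and $m_b=M_s(b)$, both strictly positive and $\mathcal{F}$-measurable, and let $\theta=\alpha m_a/(\alpha m_a+(1-\alpha)m_b)$. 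Then
\[
\alpha a+(1-\alpha)b=(\alpha m_a+(1-\alpha)m_b)\bigl(\theta\, a/m_a+(1-\theta)\, b/m_b\bigr),
\]
and $M_s(a/m_a)=M_s(b/m_b)=1$. Quasi-concavity with the $\mathcal{F}$-measurable weight $\theta$ gives $M_s(\theta a/m_a+(1-\theta)b/m_b)\geq 1$, and conditional homogeneity then gives the desired inequality.

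The main obstacle is justifying the conditional version of quasi-concavity with a random, $\mathcal{F}$-measurable weight $\theta$. Concretely, I must check that for $\mathcal{F}$-measurable $\theta\in[0,1]$,
\[
\mathbb{E}[(\theta u+(1-\theta)v)^s\mid\mathcal{F}]\geq \theta\,\mathbb{E}[u^s\mid\mathcal{F}]+(1-\theta)\,\mathbb{E}[v^s\mid\mathcal{F}]
\]
when $0<s<1$, and the reverse inequality when $s<0$. This follows by applying pointwise concavity (respectively convexity) of $t^s$ and then pulling $\theta$ out of the conditional expectation. Integrability has to be checked; I would handle it by truncation and conditional monotone convergence where needed, because for $s<0$ the power $t^s$ can blow up near $0$. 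Strict positivity of $\varphi(X)$ and finiteness of the conditional moments keep $m_a$ and $m_b$ in $(0,\infty)$ a.s., so the normalization is legitimate.
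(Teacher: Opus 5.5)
Your proof is correct, and it uses the same key device as the paper: quasi-concavity of a conditional power mean, meaning its level set $\{\cdot\geq 1\}$ is convex under $\mathcal{F}$-measurable weights, combined with conditional positive homogeneity and normalization by $\mathcal{F}$-measurable factors.

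The difference is where you apply this device. The paper works with $\Phi$ itself on $B_s$, in four steps:
\begin{enumerate}
\item It shows that $\rho(X)=\mathbb{E}[\varphi(X)^s\mid\mathcal{F}]$ is concave, or convex when $s<0$.
\item It deduces that $\Phi$ is quasi-concave.
\item It normalizes each $X_i$ by $Y_i=\Phi(X_i)$, using weights $Y_i/(Y_1+Y_2)$, to obtain superadditivity.
\item It obtains concavity from superadditivity and homogeneity.
\end{enumerate}
You instead factor $\Phi=M_s\circ\varphi$. You prove the conditional reverse Minkowski inequality for the pure power mean $M_s$ on positive random variables, and then compose a monotone concave map with a concave one.

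The paper's route is shorter and directly illustrates the principle ``positively homogeneous and quasi-concave implies concave'' for $\Phi$. However, it implicitly needs $\varphi$ to be concave and homogeneous with respect to $\mathcal{F}$-measurable weights. It also needs $B_s$ to be stable under such combinations and under the rescalings $X_i/Y_i$, $\alpha X_1$ and $(1-\alpha)X_2$. Your route confines all the conditional structure to $M_s$, where it holds automatically. For concavity you therefore need only ordinary concavity of $\varphi$ and convexity of $B_s$.

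One small simplification: the truncation and monotone-convergence step you mention is unnecessary. All integrands are nonnegative, so the conditional expectations are well defined in $[0,\infty]$, and nonnegative $\mathcal{F}$-measurable factors such as $\theta$ can be pulled out directly.
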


A similar result holds for $s>1$, provided $\varphi $ is positively
homogeneous and convex. In this case $\Phi (X)$ is convex.

Notice that, by virtue of the property of the expectation, a stronger
property of homogeneity will be useful. Namely, 
\begin{equation}
\Phi (YX)=Y\Phi (X)\text{ for }X\in B_{s}\text{ and }Y\in L_{++}^{0}(\Omega ,%
\mathcal{F},\mathbb{P}).  \label{eq:JHZ}
\end{equation}
Earlier, we called this property conditional positive homogeneity in
subsection \ref{sec:markovstationary}.

\bigskip

\textbf{Proof. \ }The map $\Phi $ is well defined for every $X\in B_{s}$,
and $\Phi \left( X\right) $ is a finite and strictly positive $\mathcal{F}$%
-measurable random variable for every $X\in B_{s}$.

Let us first assume $s\in\left( 0,1\right) .$

It is trivial to check that the map $X\longmapsto \varphi \left( X\right)
^{s}$ from $B_{s}$ into $L_{++}^{0}$ turns out to be concave for $s\in
\left( 0,1\right) .$ Consider first the map 
\begin{equation*}
\rho \left( X\right) =\mathbb{E}\left[ \varphi \left( X\right)
^{s}\left\vert \mathcal{F}\right. \right]
\end{equation*}%
defined on $B_{s}$. From the properties of the conditional expectation (see
for instance \cite[Sect. IV.3]{neveu}) it follows that the concavity is
preserved. Hence,%
\begin{equation*}
\rho \left( \beta X_{1}+\overline{\beta }X_{2}\right) \geq \beta \rho \left(
X_{1}\right) +\overline{\beta }\rho \left( X_{2}\right)
\end{equation*}%
for all $X_{1},X_{2}\in B_{s}$ and $\overline{\beta }=1-\beta $.

We show that $\rho $ is quasi-concave in the following sense. Given any $%
Y\in L_{++}^{0}\left( \Omega ,\mathcal{F},\mathbb{P}\right) $, define the
upper-level set 
\begin{equation*}
U_{\rho }\left( Y\right) =\left\{ X\in B_{s}:\rho \left( X\right) \geq
Y\right\}
\end{equation*}%
It is easy to check that $U_{\rho }\left( Y\right) $ is  convex whenever it
is not empty.

Since $\Phi \left( X\right) =\left[ \rho \left( X\right) \right] ^{1/s}$, we
have $U_{\Phi }\left( Y\right) =U_{\rho }\left( Y^{s}\right) $ because $%
X\longmapsto X^{s}$ is strictly increasing. Hence $\Phi \left( X\right)
=\left( \mathbb{E}\left[ \varphi \left( X\right) ^{s}\left\vert \mathcal{F}%
\right. \right] \right) ^{1/s}$ is quasi-concave as well.

Now let $Y_{i}=\Phi (X_{i})$ for $X_{i}\in B_{s}$ and $i=1,2$. In view of (%
\ref{eq:JHZ}), we have $\Phi (X_{i}/Y_{i})=\Phi (X_{i})/Y_{i}=1.$
Consequently, $X_{i}/Y_{i}\in U_{\Phi }\left( 1\right) $. But $U_{\Phi
}\left( 1\right) $ is convex. Hence 
\begin{equation*}
\Phi \left( \frac{Y_{1}}{Y_{1}+Y_{2}}\cdot \frac{X_{1}}{Y_{1}}+\frac{Y_{2}}{%
Y_{1}+Y_{2}}\cdot \frac{X_{2}}{Y_{2}}\right) =\Phi \left( \frac{X_{1}+X_{2}}{%
Y_{1}+Y_{2}}\right) \geq 1
\end{equation*}%
and hence $\Phi (X_{1}+X_{2})\geq \Phi (X_{1})+\Phi (X_{2})$. This is
superadditivity; combined with positive homogeneity, it implies concavity.

When $s<0$, the map $X\mapsto \varphi (X)^{s}$ is convex, so $\rho $ is convex. Because $t\mapsto t^{1/s}$ is decreasing, $\Phi $ is again
quasi-concave and the proof proceeds in the same way. \hfill $\blacksquare $

\subsection{Calculations for the hidden-regime model}

\label{app:two-state-hidden-regime}

Here, we formally recast the two-state hidden-regime model from Ju and Miao~%
\cite{JuMiao2012} into our Markov economy notation and then verify the
calculations reported in Section~\ref{subsec:two-state-hidden-regime} (see
also Section 3 in \cite{JuMiao2012}).

\paragraph{Primitive hidden-regime economy.}

Let $S_{t}\in \{1,2\}$ be the unobserved regime, with transition
probabilities 
\begin{equation*}
\mathbb{P}(S_{t+1}=1\mid S_{t}=1)=\lambda _{11},\qquad \mathbb{P}%
(S_{t+1}=1\mid S_{t}=2)=\lambda _{21},\qquad 0<\lambda _{21}<\lambda _{11}<1.
\end{equation*}%
At date $t$, the agent has observed the growth history $\mathcal{Y}%
_{t}=\sigma (g_{1},\ldots ,g_{t})$, but not the underlying regime. The
agent's belief at time $t$ that the economy will be in the high-growth
regime in the next period is 
\begin{equation*}
\mu _{t}:=\mathbb{P}(S_{t+1}=1\mid \mathcal{Y}_{t}).
\end{equation*}%
Conditional on $S_{t+1}=i$, the density of $g_{t+1}$ is $f_{i}$. The
conditional density of $g_{t+1}$ given the information available at date $t$
is therefore obtained from the law of total probability: 
\begin{equation*}
\begin{aligned} f_{g_{t+1}\mid\mathcal{Y}_t}(g) &= \sum_{i=1}^{2}
\mathbb{P}(S_{t+1}=i\mid\mathcal{Y}_t) f_{g_{t+1}\mid
S_{t+1}=i,\mathcal{Y}_t}(g) \\ &= \mu_t
f_1(g)+(1-\mu_t)f_2(g):=f_{\mu_t}(g). \end{aligned}
\end{equation*}

After observing $g_{t+1}=g$, the agent updates the belief $p(\mu_t,g)$ that
the regime realized at date $t+1$ was the high-growth regime. Since $%
\mathcal{Y}_{t+1}=\sigma(\mathcal{Y}_t,g_{t+1})$, Bayes' rule gives 
\begin{align}
p(\mu_t,g) &:=
\mathbb{P}(S_{t+1}=1\mid\mathcal{Y}_t,g_{t+1}=g)\notag \\ &= \frac{
\mathbb{P}(S_{t+1}=1\mid\mathcal{Y}_t) f_{g_{t+1}\mid
S_{t+1}=1,\mathcal{Y}_t}(g) }{ f_{g_{t+1}\mid\mathcal{Y}_t}(g) }\notag \\ &=
\frac{\mu_t f_1(g)} {\mu_t f_1(g)+(1-\mu_t)f_2(g)} = \frac{\mu_t
f_1(g)}{f_{\mu_t}(g)}. \label{eq:app-posterior-two-state}
\end{align}
For $\mu_{t+1}$ we obtain 
\begin{align}
\mu_{t+1} &= \sum_{i=1}^{2} \mathbb{P}(S_{t+2}=1\mid S_{t+1}=i,\mathcal{Y}_{t+1}) \mathbb{P}(S_{t+1}=i\mid\mathcal{Y}_{t+1}) \notag \\
&= \lambda_{11}p(\mu_t,g_{t+1}) +\lambda_{21}\bigl[1-p(\mu_t,g_{t+1})\bigr] \label{eq:app-belief-update-two-state} \\
&= \lambda_{21} +(\lambda_{11}-\lambda_{21})p(\mu_t,g_{t+1}):=B(\mu_t,g_{t+1}). \notag
\end{align}
Now, because $0<\mu<1$ and the Gaussian densities $f_1(g)$ and $f_2(g)$ are
strictly positive for every finite $g$, Bayes' formula implies $p(\mu,g)\in
(0,1)$, and since $0<\lambda_{21}<\lambda_{11}<1$, it follows that 
\begin{equation*}
\lambda_{21} < B(\mu,g) < \lambda_{11}.
\end{equation*}
Hence, we can conclude that $\mu_t\in (\lambda_{21},\lambda_{11})$ for every 
$t\geq 1$. Therefore, one can take its closure $\mathcal{Z}%
=K:=[\lambda_{21},\lambda_{11}]$ as the state space.

\paragraph{Function $\protect\kappa $.}

Note that equation \eqref{eq:app-belief-update-two-state} is equivalent to
\[
\frac{p(\mu_t,g_{t+1})}{1-p(\mu_t,g_{t+1})}
=
\frac{\mu_{t+1}-\lambda_{21}}
{\lambda_{11}-\mu_{t+1}},
\]
which using \eqref{eq:app-posterior-two-state} yields
\[
\frac{f_{1}(g_{t+1})}{f_{2}(g_{t+1})}
=
\frac{1-\mu_t}{\mu_t}
\frac{\mu_{t+1}-\lambda_{21}}
{\lambda_{11}-\mu_{t+1}}.
\]
Using the fact that $f_{1}$ and $f_{2}$ are normal densities with a common
variance, we obtain 
\begin{equation*}
\log \frac{f_{1}(g_{t+1})}{f_{2}(g_{t+1})}=\frac{\bar{g}_{1}-\bar{g}_{2}}{%
\sigma ^{2}}\left( g_{t+1}-\frac{\bar{g}_{1}+\bar{g}_{2}}{2}\right) .
\end{equation*}%
Therefore, combining these two identities gives 
\begin{equation*}
\log \left( \frac{1-\mu _{t}}{\mu _{t}}\right) +\log \left( \frac{\mu
_{t+1}-\lambda _{21}}{\lambda _{11}-\mu _{t+1}}\right) =\frac{\bar{g}_{1}-%
\bar{g}_{2}}{\sigma ^{2}}\left( g_{t+1}-\frac{\bar{g}_{1}+\bar{g}_{2}}{2}%
\right) ,
\end{equation*}%
which, when solved for $g_{t+1}=\kappa (\mu _{t},\mu _{t+1})$, yields 
\begin{equation*}
\kappa (\mu _{t},\mu _{t+1})=\frac{\bar{g}_{1}+\bar{g}_{2}}{2}+\frac{\sigma
^{2}}{\bar{g}_{1}-\bar{g}_{2}}\left[ \log \frac{\mu _{t+1}-\lambda _{21}}{%
\lambda _{11}-\mu _{t+1}}-\log \frac{\mu _{t}}{1-\mu _{t}}\right] .
\end{equation*}%
Thus, conditional on the current belief, the next belief and realized growth
contain the same information. This is the function used in %
\eqref{eq:kappa-two-state-explicit}.

\paragraph{Transition kernel $Q$.}

We want to find $\mathbb{P}(\mu _{t+1}\in A\mid \mu _{0},\ldots ,\mu _{t}).$
Using the fact that $\mu _{t+1}=B(\mu _{t},g_{t+1})$ and the law of iterated
expectations, we can write 
\begin{align*}
\mathbb{P}(\mu _{t+1}\in A\mid \mu _{0},\ldots ,\mu _{t})& =\mathbb{E}\left[ 
\mathbb{P}(\mu _{t+1}\in A\mid \mathcal{Y}_{t})\mid \mu _{0},\ldots ,\mu _{t}%
\right] \\
& =\mathbb{E}\left[ \int_{\mathbb{R}}\mathbf{1}_{A}\bigl(B(\mu _{t},g)\bigr)%
f_{\mu _{t}}(g)\,dg\mid \mu _{0},\ldots ,\mu _{t}\right] \\
& =\int_{\mathbb{R}}\mathbf{1}_{A}\bigl(B(\mu _{t},g)\bigr)f_{\mu
_{t}}(g)\,dg=\mathbb{P}(\mu _{t+1}\in A\mid \mu _{t}).
\end{align*}%
Therefore, $(\mu _{t})_{t\geq 0}$ is a time-homogeneous Markov process with
transition kernel $Q$ given by 
\begin{equation*}
Q(\mu _{t},A):=\mathbb{P}(\mu _{t+1}\in A\mid \mu _{t})=\int_{\mathbb{R}}%
\mathbf{1}_{A}\bigl(B(\mu _{t},g)\bigr)f_{\mu _{t}}(g)\,dg.
\end{equation*}%
For each fixed $\mu $, the map $g\mapsto B(\mu ,g)$ is a continuously
differentiable and strictly increasing bijection from $\mathbb{R}$ onto $%
(\lambda _{21},\lambda _{11})$. Hence, it is immediate to verify that $Q(\mu
,\cdot )$ is absolutely continuous with respect to the Lebesgue measure and
its Radon--Nikodym derivative is 
\begin{equation*}
\frac{dQ(\mu ,\cdot )}{d\mu ^{\prime }}(\mu ^{\prime })=f_{\mu }\bigl(\kappa
(\mu ,\mu ^{\prime })\bigr)\frac{\partial \kappa (\mu ,\mu ^{\prime })}{%
\partial \mu ^{\prime }}=f_{\mu }\bigl(\kappa (\mu ,\mu ^{\prime })\bigr)%
J(\mu ,\mu ^{\prime }),
\end{equation*}%
for every $\mu ^{\prime }\in (\lambda _{21},\lambda _{11}).$ Hence 
\begin{equation*}
Q(\mu ,d\mu ^{\prime })=f_{\mu }\bigl(\kappa (\mu ,\mu ^{\prime })\bigr)%
J(\mu ,\mu ^{\prime })\,d\mu ^{\prime },
\end{equation*}%
or, equivalently, for every Borel set $A\subseteq K$, 
\begin{equation*}
Q(\mu ,A)=\int_{A}f_{\mu }\bigl(\kappa (\mu ,\mu ^{\prime })\bigr)J(\mu ,\mu
^{\prime })\,d\mu ^{\prime },
\end{equation*}%
as desired.

\paragraph{Integrability of consumption.}

Since $C_{t}=C_{0}\exp \left( \sum_{j=1}^{t}g_{j}\right) ,$ $C_{t}$ is $%
\mathcal{Y}_{t}$-measurable. In particular, we have that 
\begin{equation*}
\mathbb{E}[C_{t+1}\mid \mathcal{Y}_{t}]=C_{t}\,\mathbb{E}[e^{g_{t+1}}\mid 
\mathcal{Y}_{t}]=C_{t}\left\{ \mu _{t}\mathbb{E}[e^{\bar{g}_{1}+\sigma
\varepsilon _{t+1}}]+(1-\mu _{t})\mathbb{E}[e^{\bar{g}_{2}+\sigma
\varepsilon _{t+1}}]\right\} .
\end{equation*}%
As $\varepsilon _{t+1}\sim N(0,1)$, we get 
\begin{equation*}
\mathbb{E}[e^{\bar{g}_{i}+\sigma \varepsilon _{t+1}}]=\exp \left( \bar{g}%
_{i}+\frac{1}{2}\sigma ^{2}\right) ,\qquad i=1,2.
\end{equation*}%
Thus 
\begin{equation}
\mathbb{E}[C_{t+1}\mid \mathcal{Y}_{t}]=C_{t}\exp \left( \frac{1}{2}\sigma
^{2}\right) \left[ \mu _{t}e^{\bar{g}_{1}}+(1-\mu _{t})e^{\bar{g}_{2}}\right]
.  \label{eq:app-conditional-consumption-bound-start}
\end{equation}%
Since $\bar{g}_{1}>\bar{g}_{2}$ and $\mu _{t}\leq \lambda _{11}$, equation %
\eqref{eq:app-conditional-consumption-bound-start} gives 
\begin{equation*}
\mathbb{E}[C_{t+1}\mid \mathcal{Y}_{t}]\leq C_{t}\exp \left( \frac{1}{2}%
\sigma ^{2}\right) \left[ \lambda _{11}e^{\bar{g}_{1}}+(1-\lambda _{11})e^{%
\bar{g}_{2}}\right] .
\end{equation*}%
Taking expectations and using the law of iterated expectations, 
\begin{equation*}
\mathbb{E}[C_{t+1}]=\mathbb{E}\!\left[ \mathbb{E}[C_{t+1}\mid \mathcal{Y}%
_{t}]\right] \leq \exp \left( \frac{1}{2}\sigma ^{2}\right) \left[ \lambda
_{11}e^{\bar{g}_{1}}+(1-\lambda _{11})e^{\bar{g}_{2}}\right] \mathbb{E}%
[C_{t}].
\end{equation*}%
Iterating this inequality yields 
\begin{equation*}
\mathbb{E}[C_{t}]\leq C_{0}\left( \exp \left( \frac{1}{2}\sigma ^{2}\right) %
\left[ \lambda _{11}e^{\bar{g}_{1}}+(1-\lambda _{11})e^{\bar{g}_{2}}\right]
\right) ^{t}.
\end{equation*}%
Consequently, 
\begin{equation*}
\sum_{t\geq 0}\delta ^{t}\mathbb{E}[C_{t}]<\infty \qquad \text{whenever}%
\qquad \delta \exp \left( \frac{1}{2}\sigma ^{2}\right) \left[ \lambda
_{11}e^{\bar{g}_{1}}+(1-\lambda _{11})e^{\bar{g}_{2}}\right] <1,
\end{equation*}%
as desired.

\subsubsection{Smooth ambiguity model}

\label{app:ju-miao-integral-equation}

\paragraph{Growth factor.}

For $i=1,2$, the regime-conditional certainty equivalent of consumption is 
\begin{equation*}
\mathcal{M}_{t}^{i}(C_{t+1})=C_{t}\left[ \int_{K}e^{s\kappa (\mu _{t},\mu
^{\prime })}Q_{i}(\mu _{t},d\mu ^{\prime })\right] ^{1/s}.
\end{equation*}%
By the density representation of $Q_{i}$ and the change of variables $%
g=\kappa (\mu ,\mu ^{\prime })$, 
\begin{equation*}
\int_{K}e^{s\kappa (\mu ,\mu ^{\prime })}Q_{i}(\mu ,d\mu ^{\prime })=\int_{%
\mathbb{R}}e^{sg}f_{i}(g)\,dg=\exp \left( s\bar{g}_{i}+\frac{1}{2}%
s^{2}\sigma ^{2}\right) .
\end{equation*}%
Hence, with 
\begin{equation*}
a_{i}:=\bar{g}_{i}+\frac{1}{2}s\sigma ^{2},
\end{equation*}%
we have $\mathcal{M}_{t}^{i}(C_{t+1})=C_{t}e^{a_{i}}$. \ It follows that the
smooth-ambiguity certainty equivalent satisfies 
\begin{equation*}
\frac{\mathcal{M}_{t}(C_{t+1})}{C_{t}}=h(\mu _{t}),
\end{equation*}%
with 
\begin{equation*}
\begin{aligned} h(\mu) &=\left\{ \mu e^{\eta a_1}+(1-\mu)e^{\eta a_2}
\right\}^{1/\eta}\\ &=\left\{ \mu\exp\left[\eta\left(\bar
g_1+\frac12s\sigma^2\right)\right] +(1-\mu)\exp\left[\eta\left(\bar
g_2+\frac12s\sigma^2\right)\right] \right\}^{1/\eta}. \end{aligned}
\end{equation*}%
Moreover, 
\begin{equation*}
h^{\prime }(\mu )=\frac{e^{\eta a_{1}}-e^{\eta a_{2}}}{\eta }\left[ \mu
e^{\eta a_{1}}+(1-\mu )e^{\eta a_{2}}\right] ^{1/\eta -1}>0,
\end{equation*}%
as $a_{1}>a_{2}$. Thus $h$ is increasing for every $\eta \neq 0$, and its
minimum and maximum on $K$ are attained at $\lambda _{21}$ and $\lambda
_{11} $, respectively.

\subsection{Rare-disaster economy}

\label{app:ez-rare-disasters}

This appendix checks the formulas of Section~\ref{sec:bidder-smith-wachter}. Recall that for every $n\geq 0$,
\[
\int_0^\infty
H'
\frac{(H')^{d_H+n-1}e^{-H'/c}}
{c^{d_H+n}\Gamma(d_H+n)}
\,dH'
=
c(d_H+n),
\]
where for every $a> 0$ \[
\Gamma(a):=\int_0^\infty x^{a-1}e^{-x}\,dx.
\]
\paragraph{Disaster-intensity process.}

If $q(H,H^{\prime })$ denotes the conditional density defined in %
\eqref{eq:bsw-transition-kernel}, the associated transition kernel is 
\begin{equation*}
Q(H,A)=\int_{A}q(H,H^{\prime })\,dH^{\prime },\qquad A\in \mathcal{B}(%
\mathbb{R}_{+}).
\end{equation*}%
Consequently, 
\begin{equation*}
\begin{aligned} \mathbb{E}[H_{t+1}\mid H_t=H] &= \int_{\mathbb{R}_+}
H'\,Q(H,dH') \\ &= \int_0^\infty H' q(H,H')\,dH' \\ &= \int_0^\infty H'
e^{-\phi H/c} \sum_{n=0}^{\infty} \frac{(\phi H/c)^n}{n!} \frac{
(H')^{d_H+n-1}e^{-H'/c} }{ c^{d_H+n}\Gamma(d_H+n) } \,dH' \\ &= e^{-\phi
H/c} \sum_{n=0}^{\infty} \frac{(\phi H/c)^n}{n!} \int_0^\infty H' \frac{
(H')^{d_H+n-1}e^{-H'/c} }{ c^{d_H+n}\Gamma(d_H+n) } \,dH' \\ &= e^{-\phi
H/c} \sum_{n=0}^{\infty} \frac{(\phi H/c)^n}{n!}\, c(d_H+n). \end{aligned}
\end{equation*}%
Given this, we obtain 
\begin{equation*}
\begin{aligned} \mathbb{E}[H_{t+1}\mid H_t=H] &= c e^{-\phi H/c}
\sum_{n=0}^{\infty} \frac{(\phi H/c)^n}{n!}(d_H+n) \\ &= cd_H + c e^{-\phi
H/c} \sum_{n=0}^{\infty} n\frac{(\phi H/c)^n}{n!} = cd_H+\phi H,
\end{aligned}
\end{equation*}%
where we are using $e^{-\lambda }\sum_{n=0}^{\infty }n\lambda
^{n}/n!=\lambda $. Therefore, 
\begin{equation*}
\mathbb{E}[H_{t+1}\mid H_{t}]=cd_{H}+\phi H_{t},
\end{equation*}%
as desired. Moreover, one  can also show that (see \cite[Appendix 1, Lemma~1]{GourierouxJasiak2006})
\begin{equation}
\mathbb{E}\left[e^{bH_{t+1}}\mid H_t=H\right]
=
(1-cb)^{-d_H}
\exp\left(\frac{\phi b}{1-cb}H\right),
\qquad cb<1.
\label{eq:arg-exponential-moment}
\end{equation}

\paragraph{Integrability of consumption.}

Using the law of iterated expectations and conditioning first on $j_{t+1}$,
we obtain 
\begin{equation*}
\begin{aligned} \mathbb{E} \left[ e^{g_{t+1}} \mid H_t=H \right] &=
e^{\nu_g} \mathbb{E} \left[ e^{\nu_jj_{t+1}} \mathbb{E} \left[ e^{
\sigma_j\sqrt{j_{t+1}}\eta_{t+1} + \sigma\varepsilon_{t+1} } \biggm|
j_{t+1},H_t=H \right] \biggm| H_t=H \right] \\ &= \exp\left(
\nu_g+\frac{1}{2}\sigma^2 \right) \mathbb{E} \left[ \exp\left\{ \left(
\nu_j+\frac{1}{2}\sigma_j^2 \right)j_{t+1} \right\} \biggm| H_t=H \right].
\end{aligned}
\end{equation*}
The second equality uses the independence and standard normality of $%
\eta_{t+1}$ and $\varepsilon_{t+1}$, together with $\mathbb{E}%
[e^{aX}]=e^{a^2/2}$ for $X\sim N(0,1)$.

Since $j_{t+1}\mid H_t=H\sim\text{Poisson}(H)$, 
\begin{equation*}
\begin{aligned} &\mathbb{E} \left[ \exp\left\{ \left(
\nu_j+\frac{1}{2}\sigma_j^2 \right)j_{t+1} \right\} \biggm| H_t=H \right] \\
&\qquad= e^{-H} \sum_{k=0}^{\infty} \frac{ \left[ H\exp\left(
\nu_j+\frac{1}{2}\sigma_j^2 \right) \right]^k }{k!} \\ &\qquad= \exp\left\{
H\left[ \exp\left( \nu_j+\frac{1}{2}\sigma_j^2 \right)-1 \right] \right\}.
\end{aligned}
\end{equation*}
Therefore, 
\begin{equation}  \label{eq:bsw-first-growth-moment}
\mathbb{E} \left[ e^{g_{t+1}} \mid H_t=H \right] = \exp\left( \nu_g+\frac{1}{%
2}\sigma^2 \right) \exp\left\{ H\left[ \exp\left( \nu_j+\frac{1}{2}%
\sigma_j^2 \right)-1 \right] \right\}.
\end{equation}
If 
\begin{equation*}
\nu_j+\frac{1}{2}\sigma_j^2\leq0,
\end{equation*}
then 
\begin{equation*}
\exp\left( \nu_j+\frac{1}{2}\sigma_j^2 \right)-1 \leq0,
\end{equation*}
and therefore 
\begin{equation}  \label{eq:bsw-growth-bound}
\mathbb{E} \left[ e^{g_{t+1}} \mid H_t \right] \leq \exp\left( \nu_g+\frac{1%
}{2}\sigma^2 \right).
\end{equation}

Since $C_{t+1}=C_{t}e^{g_{t+1}}$, the law of iterated expectations and %
\eqref{eq:bsw-growth-bound} imply 
\begin{equation*}
\begin{aligned} \mathbb{E}[C_{t+1}\mid H_0] &= \mathbb{E} \left[ C_t
\mathbb{E} \left[ e^{g_{t+1}} \mid H_t \right] \biggm| H_0 \right] \\ &\leq
\exp\left( \nu_g+\frac{1}{2}\sigma^2 \right) \mathbb{E}[C_t\mid H_0].
\end{aligned}
\end{equation*}%
Iterating this inequality yields 
\begin{equation*}
\mathbb{E}[C_{t}\mid H_{0}]\leq C_{0}\exp \left\{ t\left( \nu _{g}+\frac{1}{2%
}\sigma ^{2}\right) \right\} .
\end{equation*}%
Consequently, 
\begin{equation*}
\begin{aligned} \sum_{t=0}^{\infty} \delta^t\mathbb{E}[C_t\mid H_0] &\leq
C_0 \sum_{t=0}^{\infty} \left[ \delta \exp\left( \nu_g+\frac{1}{2}\sigma^2
\right) \right]^t \\ &< \infty \end{aligned}
\end{equation*}%
whenever 
\begin{equation*}
\delta \exp \left( \nu _{g}+\frac{1}{2}\sigma ^{2}\right) <1.
\end{equation*}%
A sufficient condition for $C\in \mathcal{L}_{++}^{1}(\delta )$ is 
\begin{equation*}
\nu _{j}+\frac{1}{2}\sigma _{j}^{2}\leq 0\qquad \text{and}\qquad 0<\delta
<\min \left\{ 1,\exp \left( -\nu _{g}-\frac{1}{2}\sigma ^{2}\right) \right\}
,
\end{equation*}%
as desired.

\paragraph{Function $h$.}

Using \eqref{eq:rare-disaster-h-definition} and the same calculation as in %
\eqref{eq:bsw-first-growth-moment} we have 
\begin{equation*}
\begin{aligned} {h(H)}^s=\mathbb{E} \left[ e^{s g_{t+1}} \mid H_t=H \right]
&= \exp\left( s\nu_g+\frac{1}{2}s^2\sigma^2 \right) \\ &\quad\times
\exp\left\{ H \left[ \exp\left( s\nu_j+\frac{1}{2}s^2\sigma_j^2 \right)-1
\right] \right\}. \end{aligned}
\end{equation*}%
Namely, 
\begin{equation*}
h(H)=\exp \left\{ \nu _{g}+\frac{1}{2}s\sigma ^{2}+\frac{H}{s}\left[ \exp
\left( s\nu _{j}+\frac{1}{2}s^{2}\sigma _{j}^{2}\right) -1\right] \right\} .
\end{equation*}%
Defining 
\begin{equation*}
B_{s}:=\exp \left( \nu _{g}+\frac{1}{2}s\sigma ^{2}\right) ,\qquad A_{s}:=%
\frac{\exp \left( s\nu _{j}+\frac{1}{2}s^{2}\sigma _{j}^{2}\right) -1}{s},
\end{equation*}%
we get 
\begin{equation*}
h(H)=B_{s}e^{A_{s}H},
\end{equation*}%
as desired.

\paragraph{Fixed points.}

In this case the operator $\mathbf{A}$  of Section~\ref%
{sec:markovstationary} leads to 
\begin{equation*}
\begin{aligned} (\mathbf{A}\varphi)(H) &= \left\{ (1-\beta) + \beta \left[
\int_{\mathbb{R}_+} \int_{\mathcal{Y}} e^{s\kappa(H,y)} \varphi(H')^s
\,\nu(dy)\, Q(H,dH') \right]^{r/s} \right\}^{1/r} \\ &= \left\{ (1-\beta) +
\beta \left[ \left( \int_{\mathcal{Y}} e^{s\kappa(H,y)} \,\nu(dy) \right)
\left( \int_{\mathbb{R}_+} \varphi(H')^s Q(H,dH') \right) \right]^{r/s}
\right\}^{1/r} \\ &= \left\{ (1-\beta) + \beta \left[ h(H)^s
\int_{\mathbb{R}_+}\varphi(H^{\prime})^s Q(H,dH^{\prime}) \right]^{r/s}
\right\}^{1/r}. \end{aligned}
\end{equation*}%
Hence $\varphi $ is a fixed point of $\mathbf{A}$ if 
\begin{equation*}
\varphi (H)=\left\{ (1-\beta )+\beta \left[ h(H)^{s}\int_{\mathbb{R}%
_{+}}\varphi (H^{\prime })^{s}Q(H,dH^{\prime })\right] ^{r/s}\right\} ^{1/r}.
\end{equation*}%
By Proposition~\ref{prop:separable-A-implies-fixed-point}, every such
positive fixed point $\varphi $ generates a separable recursive utility 
\begin{equation*}
U_{t}=\varphi (H_{t})C_{t}.
\end{equation*}%
As $\lim_{H\rightarrow \infty }h(H)=0$, it follows that any fixed point $%
\varphi $ of $\mathbf{A}$ must satisfy $\lim_{H\rightarrow \infty }\varphi
(H)=(1-\beta )^{1/r}$, which is consistent with the fact that $\varphi \in 
\mathcal{K}(\mathbf{1})$.

\paragraph{Unit-EIS solutions.}Consider the ansatz 
\begin{equation*}
Z_{t}=C_{t}\exp \left( \frac{b}{s}H_{t}\right),
\end{equation*}%
where $b$ is an unknown parameter to be determined.
Since, conditional on $H_t$, the shocks determining consumption growth are independent of $H_{t+1}$,
\[
\mathbb{E}_t[Z_{t+1}^s]
=
C_t^s\mathbb{E}_t[e^{sg_{t+1}}]
\mathbb{E}_t[e^{bH_{t+1}}].
\]
Using $h(H)=\left(\mathbb{E}\left[e^{s g_{t+1}} \mid H_t=H\right]\right)^{1 / s}=B_se^{A_sH}$ and \eqref{eq:arg-exponential-moment}, we obtain
\[
\mathbb{E}_{t}[Z_{t+1}^{s}]
=
C_{t}^{s}B_{s}^{s}(1-cb)^{-d_{H}}
\exp \left\{
\left( sA_{s}+\frac{\phi b}{1-cb}\right) H_{t}
\right\},
\]
for $cb<1$.
 The right-hand side of~\eqref{eq:DDD} is 
\begin{equation*}
C_{t}^{-\alpha s/\beta }Z_{t}^{s/\beta }B_{t}=C_{t}^{s}\exp \left( \frac{b}{%
\beta }H_{t}\right) B_{t}.
\end{equation*}%
It follows that
\[
B_t
=
B_s^s(1-cb)^{-d_H}
\exp\left\{
\left(
sA_s+\frac{\phi b}{1-cb}-\frac{b}{\beta}
\right)H_t
\right\}.
\]
Since $H_t$ has unbounded support on $\mathbb{R}_+$, we have $B\in\mathcal{K}(\mathbf{1})$ if and only if the coefficient of $H_t$ in the exponent is zero, that is,
\[
sA_s+\frac{\phi b}{1-cb}=\frac{b}{\beta},
\]
or, equivalently,
\begin{equation*}
cb^{2}+\left( \beta \phi -1-\beta sA_{s}c\right) b+\beta sA_{s}=0.
\end{equation*}%
Under the calibration of Table~\ref{tab:ez-rd-calibration}, the two roots
are 
\begin{equation*}
b_{1}\simeq 231.8,\qquad b_{2}\simeq 773.2,
\end{equation*}%
and therefore 
\begin{equation*}
Z_{t}^{(1)}\simeq C_{t}e^{-83.39H_{t}},\quad Z_{t}^{(2)}\simeq
C_{t}e^{-278.1H_{t}}.
\end{equation*}%
For each root, the corresponding process in~\eqref{eq:DDD} is constant, 
\begin{equation*}
B_{t}^{(i)}\equiv \bar{B}_{i}=B_{s}^{s}(1-cb_{i})^{-d_{H}}.
\end{equation*}%
Hence, in view of~\eqref{eq:LSX}, the unique recursive utilities in the two
corresponding cones are 
\begin{equation*}
U^{(i)}=\bar{B}_{i}^{\beta /[s(1-\beta )]}Z^{(i)},\qquad i=1,2,
\end{equation*}%
which numerically yield the two utilities 
\begin{equation*}
U_{t}^{(1)}\simeq C_{t}e^{\,0.3686-83.39H_{t}},\quad U_{t}^{(2)}\simeq
C_{t}e^{-3.022-278.1H_{t}}.
\end{equation*}

\subsection{Calculations for the long-run risk model}

\label{app:multiplicity}

Here we prove the claims in Section~\ref{sec:long-run-risk-multiplicity}. In
the constant-volatility specification we write $\sigma_t\equiv\bar\sigma>0$,
consistently with the calibration in %
\eqref{eq:unit-eis-BY-common-calibration}. Throughout, $C_0>0$ and $X_0\in%
\mathbb{R}$ are assumed to be deterministic.

\paragraph{Failure of uniqueness.}

Let us first describe the general model \eqref{eq:auto}. Let $Q_{\text{vol}}$
be the transition kernel of the Markov process $(\sigma _{t})$ and $\eta $
the distribution of $\varepsilon _{1}^{x}$. The Markov state $(X_{t},\sigma
_{t})$ has transition kernel 
\begin{equation*}
Q\big((x,\sigma ),A\times B\big)=Q_{\text{vol}}(\sigma ,B)\int_{\mathbb{R}}%
\mathbf{1}_{A}(\rho x+\chi\sigma e)\,\eta (de),
\end{equation*}%
for Borel sets $A\subseteq \mathbb{R}$ and $B\subseteq (0,\infty )$. The
growth function $\kappa $ in \eqref{eq:XXX} is 
\begin{equation*}
\kappa \big((x,\sigma ),(x^{\prime },\sigma ^{\prime }),y^{\prime }\big)=\mu
_{c}+x+\sigma y^{\prime }.
\end{equation*}%
Under our assumptions, we have that 
\begin{equation*}
\mathbb{E}_{t}[C_{t+1}^{s}]=C_{t}^{s}e^{s(a+X_{t})}\mathbb{E}%
_{t}[e^{s\sigma _{t}Y_{t+1}}],
\end{equation*}%
which implies 
\begin{equation*}
\frac{\mathcal{M}_{t}(C_{t+1})}{C_{t}}=e^{a+X_{t}}\bigl(\mathbb{E}%
_{t}[e^{s\sigma _{t}Y_{t+1}}]\bigr)^{1/s}=h(X_{t},\sigma _{t}).
\end{equation*}%
Therefore, 
\begin{equation*}
h(x,\sigma )=e^{a+x}\bigl(\mathbb{E}[e^{s\sigma Y_{1}}]\bigr)^{1/s}.
\end{equation*}%
Observe that $h(x,\sigma )\rightarrow \infty $ as $x\rightarrow \infty $.

\paragraph{Constant-volatility model.}

For the rest of the construction, let $0<\rho <1$ and $\sigma
_{t}:= \bar{\sigma}>0$, and take both innovation sequences to be
standard normal. Write $\sigma _{x}=\chi\bar{\sigma}>0$. The dynamic
model becomes 
\begin{equation*}
\log \frac{C_{t+1}}{C_{t}}=a+X_{t}+\bar{\sigma}Y_{t+1},\quad
X_{t+1}=\rho X_{t}+\sigma _{x}\varepsilon _{t+1}^{x}.
\end{equation*}%
Given the ansatz 
\begin{equation*}
Z_{t}=C_{t}\exp \!\left( \frac{qX_{t}^{2}+bX_{t}}{s}\right)
\end{equation*}%
the factor multiplying $B_{t}$ on the right-hand side of \eqref{eq:DDD} is
given by 
\begin{equation*}
\begin{aligned} C_t^{-(1-\beta)s/\beta}Z_t^{s/\beta}
&=C_t^{-(1-\beta)s/\beta+s/\beta}
\exp\!\left(\frac{qX_t^2+bX_t}{\beta}\right)\\
&=C_t^s\exp\!\left(\frac{qX_t^2+bX_t}{\beta}\right). \end{aligned}
\end{equation*}%
The left-hand side of equation \eqref{eq:DDD} is 
\begin{equation*}
\mathbb{E}_{t}[Z_{t+1}^{s}]=C_{t}^{s}e^{s(a+X_{t})}\mathbb{E}[e^{s%
\bar{\sigma}Y_{1}}]\mathbb{E}_{t}[e^{qX_{t+1}^{2}+bX_{t+1}}].
\end{equation*}%
Observe that we have 
\begin{equation*}
\begin{aligned} \mathbb E[e^{s\bar\sigma Y_1}] &=\frac{1}{\sqrt{2\pi}}
\int_{\mathbb R}e^{s\bar\sigma y-y^2/2}\,dy\\ &=e^{s^2\bar\sigma^2/2}
\frac{1}{\sqrt{2\pi}} \int_{\mathbb R}e^{-(y-s\bar\sigma)^2/2}\,dy
=e^{s^2\bar\sigma^2/2}. \end{aligned}
\end{equation*}%
For the state innovation, put $d=1-2q\sigma _{x}^{2}$ and assume $d>0$.
Conditional on $X_{t}=x$, the next state has the form $X_{t+1}=\rho x+\sigma
_{x}\varepsilon _{t+1}^{x}$, where $\varepsilon _{t+1}^{x}$ is standard
normal. Hence 
\begin{equation*}
\mathbb{E}_{t}[e^{qX_{t+1}^{2}+bX_{t+1}}]=\frac{1}{\sqrt{2\pi }}\int_{%
\mathbb{R}}\exp \!\left( q(\rho x+\sigma _{x}y)^{2}+b(\rho x+\sigma _{x}y)-%
\frac{y^{2}}{2}\right) \,dy.
\end{equation*}%
Expanding the exponent gives 
\begin{equation*}
\begin{aligned} &q(\rho x+\sigma_x y)^2+b(\rho x+\sigma_x y)-\frac{y^2}{2}\\
&\quad=q\rho^2x^2+b\rho x +\left(q\sigma_x^2-\frac12\right)y^2
+\sigma_x(2q\rho x+b)y. \end{aligned}
\end{equation*}%
Since $d=1-2q\sigma _{x}^{2}>0$, the coefficient of $y^{2}$ is $-d/2$.
Taking the terms independent of $y$ outside the integral therefore yields 
\begin{equation*}
\mathbb{E}_{t}[e^{qX_{t+1}^{2}+bX_{t+1}}]=\frac{e^{q\rho ^{2}x^{2}+b\rho x}}{%
\sqrt{2\pi }}\int_{\mathbb{R}}\exp \!\left( -\frac{d}{2}y^{2}+\sigma
_{x}(2q\rho x+b)y\right) \,dy.
\end{equation*}

To evaluate this integral, set $m=\sigma_x(2q\rho x+b)/d$ and complete the
square: 
\begin{equation*}
-\frac d2y^2+\sigma_x(2q\rho x+b)y =-\frac d2(y-m)^2+\frac d2m^2.
\end{equation*}
The change of variables $u=\sqrt d\,(y-m)$ gives 
\begin{equation*}
\int_{\mathbb{R}}e^{-d(y-m)^2/2}\,dy =\frac1{\sqrt d}\int_{\mathbb{R}%
}e^{-u^2/2}\,du =\sqrt{\frac{2\pi}{d}}.
\end{equation*}
Consequently, 
\begin{equation*}
\mathbb{E}_t[e^{qX_{t+1}^2+bX_{t+1}}] =d^{-1/2}\exp\!\left( q\rho^2x^2+b\rho
x +\frac{\sigma_x^2(2q\rho x+b)^2}{2d} \right).
\end{equation*}

Finally, putting the exponent over the common denominator $d$ and using $%
d+2q\sigma _{x}^{2}=1$, its numerator simplifies to 
\begin{equation*}
\begin{aligned} &d(q\rho^2x^2+b\rho x) +\frac{\sigma_x^2}{2}(2q\rho x+b)^2\\
&\quad=(d+2q\sigma_x^2)(q\rho^2x^2+b\rho x) +\frac{\sigma_x^2b^2}{2}\\
&\quad=q\rho^2x^2+b\rho x+\frac{\sigma_x^2b^2}{2}. \end{aligned}
\end{equation*}%
We therefore obtain 
\begin{equation*}
\mathbb{E}_{t}[e^{qX_{t+1}^{2}+bX_{t+1}}]=d^{-1/2}\exp \!\left( \frac{q\rho
^{2}X_{t}^{2}+b\rho X_{t}+\tfrac{1}{2}\sigma _{x}^{2}b^{2}}{d}\right) ,
\end{equation*}%
which implies 
\begin{equation*}
\begin{aligned} \mathbb E_t[Z_{t+1}^{s}]
&=C_t^s\exp\!\left(sa+\frac{s^2\bar\sigma^2}{2}\right) d^{-1/2}\\
&\quad\times\exp\!\left( sX_t+ \frac{q\rho^2X_t^2+b\rho
X_t+\tfrac12\sigma_x^2b^2}{d} \right). \end{aligned}
\end{equation*}%
Dividing this expression by $C_{t}^{s}\exp ((qX_{t}^{2}+bX_{t})/\beta )$, we
find that the process $B$ in \eqref{eq:DDD} satisfies: 
\begin{equation*}
\begin{aligned} \log B_t &=sa+\frac{s^2\bar\sigma^2}{2} -\frac12\log
d+\frac{\sigma_x^2b^2}{2d}\\
&\quad+q\left(\frac{\rho^2}{d}-\frac1\beta\right)X_t^2
+\left(s+\frac{b\rho}{d}-\frac b\beta\right)X_t. \end{aligned}
\end{equation*}%
Hence $B\in \mathcal{K}(\mathbf{1})$ is equivalent to 
\begin{equation*}
q\left( \frac{\rho ^{2}}{d}-\frac{1}{\beta }\right) =0,\qquad s+\frac{b\rho 
}{d}-\frac{b}{\beta }=0.
\end{equation*}%
If $q=0$, then $d=1$ and the second equation gives $b=\beta s/(1-\beta \rho
) $. If $q\neq 0$, the first equation gives $d=\beta \rho ^{2}$. In view of
the definition of $d$ we get $q=(1-\beta \rho ^{2})/(2\sigma _{x}^{2})$,
while the second equation becomes 
\begin{equation*}
s+\frac{b}{\beta \rho }-\frac{b}{\beta }=0\quad \Longleftrightarrow \quad
\beta s\rho +b(1-\rho )=0.
\end{equation*}%
The two choices are therefore 
\begin{equation*}
(q_{0},b_{0})=\left( 0,\frac{\beta s}{1-\beta \rho }\right) ,\qquad
(q_{1},b_{1})=\left( \frac{1-\beta \rho ^{2}}{2\sigma _{x}^{2}},\frac{\beta
s\rho }{\rho -1}\right) .
\end{equation*}%
They satisfy $q_{1}>0$, $d_{0}=1$, and $d_{1}=\beta \rho ^{2}>0$, where $%
d_{i}=1-2q_{i}\sigma _{x}^{2}$. Thus the conditional moments evaluated above
are finite for both choices. The residuals are the positive constants 
\begin{equation*}
\bar{B}_{i}=\exp \!\left( sa+\frac{s^{2}\bar{\sigma}^{2}}{2}-\frac{1}{%
2}\log d_{i}+\frac{\sigma _{x}^{2}b_{i}^{2}}{2d_{i}}\right) ,\qquad i=0,1.
\end{equation*}

Finally, set $Z_{t}^{(i)}=C_{t}\exp ((q_{i}X_{t}^{2}+b_{i}X_{t})/s)$. By %
\eqref{eq:LSX}, the separable utilities are 
\begin{equation*}
U_{t}^{(i)}=\bar{B}_{i}^{\beta /[s(1-\beta )]}Z_{t}^{(i)}=C_{t}\exp \!\left(
A_{i}+\frac{q_{i}X_{t}^{2}+b_{i}X_{t}}{s}\right) ,\qquad i=0,1,
\end{equation*}%
where $A_{i}=\beta \log \bar{B}_{i}/[s(1-\beta )]$. It remains to check that 
$C,Z^{(i)}\in \mathcal{L}_{++}^{1}(\delta )$.

\paragraph{A sufficient condition for $C,Z^{(i)}\in\mathcal{L}^1_{++}(%
\protect\delta)$.}

Let 
\begin{equation*}
R:=\exp \!\left( a+\frac{\bar{\sigma}^{2}}{2}+\frac{\sigma _{x}^{2}}{%
2(1-\rho )^{2}}\right) .
\end{equation*}%
Now iterating the state equation gives $X_{k}=\rho ^{k}X_{0}+\sigma
_{x}\sum_{j=1}^{k}\rho ^{k-j}\varepsilon _{j}^{x}$. Substituting this
expression into $\log (C_{t}/C_{0})=ta+\sum_{k=0}^{t-1}X_{k}+\bar{%
\sigma}\sum_{j=1}^{t}Y_{j}$ and summing up the geometric series we obtain
for $t\geq 1$ 
\begin{equation*}
\begin{aligned} \log\frac{C_t}{C_0} &=ta+\frac{1-\rho^t}{1-\rho}X_0
+\bar\sigma\sum_{j=1}^tY_j\\ &\quad+\sigma_x\sum_{j=1}^{t-1}
\frac{1-\rho^{t-j}}{1-\rho}\varepsilon^x_j. \end{aligned}
\end{equation*}%
As the shocks are independent standard normal random variables, we obtain 
\begin{equation*}
\begin{aligned} \mathbb E[C_t] &=C_0\exp\!\left\{
t\left(a+\frac{\bar\sigma^2}{2}\right) +\frac{1-\rho^t}{1-\rho}X_0
+\frac{\sigma_x^2}{2}\sum_{j=1}^{t-1}
\left(\frac{1-\rho^{t-j}}{1-\rho}\right)^2 \right\}\\ &\leq
C_0\exp\!\left(\frac{|X_0|}{1-\rho}\right)R^t. \end{aligned}
\end{equation*}%
This inequality uses the inequality $0\leq (1-\rho ^{k})/(1-\rho )\leq
1/(1-\rho )$ and the fact that the sum has at most $t$ terms. Consequently,
whenever $0<\delta <\min \{1,R^{-1}\}$, 
\begin{equation*}
\sum_{t=0}^{\infty }\delta ^{t}\mathbb{E}[C_{t}]\leq C_{0}\exp \!\left( 
\frac{|X_{0}|}{1-\rho }\right) \sum_{t=0}^{\infty }(\delta R)^{t}=\frac{%
C_{0}\exp (|X_{0}|/(1-\rho ))}{1-\delta R}<\infty .
\end{equation*}%
It follows that $\delta R<1$ implies $C\in \mathcal{L}_{++}^{1}(\delta )$.
Incidentally, quick calculations also reveal that $\delta R<1$ is a
necessary condition for $C\in \mathcal{L}_{++}^{1}(\delta )$.

Finally, observe that since $0<b_0/s<1/(1-\rho)$, the Gaussian AR(1)
specification gives, for sufficiently large $n$ and some constant $K>0$, 
\begin{equation*}
Z_t^{(0)}\leq C_t+K\mathbb{E}_t[C_{t+n}].
\end{equation*}
Completing the square also gives $Z_t^{(1)}\leq e^{-b_1^2/(4q_1s)}C_t$.
Taking expectations and summing with weights $\delta^t$ therefore shows that
admissibility of $C$ implies admissibility of both generators and their
constant multiples $U^{(i)}$.

\paragraph{Bansal--Yaron Case I calibration.}

Under \eqref{eq:unit-eis-BY-common-calibration} we have 
\begin{equation*}
\begin{aligned} \sigma_x&=0.044\times0.0078=0.0003432,\\ q_0&=0,
&b_0&=\frac{0.998(-9)}{1-0.998(0.979)},\\
q_1&=\frac{1-0.998(0.979)^2}{2(0.0003432)^2},
&b_1&=\frac{0.998(-9)(0.979)}{0.979-1}. \end{aligned}
\end{equation*}%
Using these exact expressions in the formula for $A_{i}$ above gives 
\begin{equation*}
\begin{aligned} A_0&\simeq0.1121, &b_0/s&\simeq43.47, &q_0/s&=0,\\
A_1&\simeq-1.219, &b_1/s&\simeq-46.53, &q_1/s&\simeq-2.051\times10^4.
\end{aligned}
\end{equation*}%
Substituting these values into the formula found above we obtain the two
recursive utilities 
\begin{equation*}
\begin{aligned} U_t^{(0)} &\simeq C_t\exp\!\left(0.1121+43.47X_t\right),\\
U_t^{(1)} &\simeq C_t\exp\!\left( -1.219-46.53X_t-(2.051\times10^4)X_t^2
\right). \end{aligned}
\end{equation*}

For this calibration, the consumption admissibility condition is $%
0<\delta<R^{-1}$, where 
\begin{equation*}
R^{-1}=\exp\!\left( -0.0015-\frac{0.0078^2}{2} -\frac{0.0003432^2}{%
2(1-0.979)^2}\right) \simeq0.9983.
\end{equation*}
Under the calibration in \eqref{eq:unit-eis-BY-common-calibration},
subtracting the exponents of the two utilities gives 
\begin{equation*}
\log\frac{U_t^{(1)}}{U_t^{(0)}} \simeq
-1.331-90.00X_t-(2.051\times10^4)X_t^2.
\end{equation*}
This quadratic is maximized at 
\begin{equation*}
X_t\simeq-\frac{90.00}{2(2.051\times10^4)} \simeq-0.002194,
\end{equation*}
where its value is 
\begin{equation*}
-1.331+\frac{90.00^2}{4(2.051\times10^4)} \simeq-1.232<0.
\end{equation*}
Hence, we obtain that $U_t^{(1)}<U_t^{(0)}$ as desired.

\end{document}